\DeclareMathAlphabet\mathbfcal{OMS}{cmsy}{b}{n}
\newcommand{\isdefinedas}{\stackrel{\Delta}{=}}
\def\home{\hbox{\kern3pt \vbox to13pt{}% 
   \pdfliteral{q 0 0 m 0 5 l 5 10 l 10 5 l 10 0 l 7 0 l 7 5 l 3 5 l 3 0 l f
               1 j 1 J -2 5 m 5 12 l 12 5 l S Q }%
   \kern 13pt}}
\newtheorem{theorem}{Theorem}%[section]
\newtheorem*{theorem*}{Theorem}
\newtheorem{lemma}{Lemma}%[section]
\newtheorem{corollary}{Corollary}
\newtheorem{remark}{Remark}
\newcommand{\isdefined}{\stackrel{\Delta}{=}}
\newtheorem{lem}{Lemma}[section]
\newtheorem{thm}{Theorem}[section]
\newtheorem{defn}{Definition}%[section]
\newtheorem{exmp}{Example}%[section]
\newcommand{\newexample}[1]{%
\theoremstyle{exmp} % upright type for examples
\newtheorem{example#1}{Example}%
\expandafter\renewcommand\csname theexample#1\endcsname{#1(\alph{example#1})}%
}
\newcommand{\hilbert}{\mathcal{H}}
\newcommand{\one}{\mathds{1}}
\newcommand{\Red}{\color{red}}\newcommand{\Blue}{\color{black}}
\newcommand{\lr}[1]{\left(#1\right)}
\newcommand{\Ls}{\Lambda_{\sigma}}
\newcommand{\Lr}{\Lambda_{\rho}}
\newcommand{\tr}{\mathrm{Tr}}
\newcommand{\ds}{\displaystyle }
\newcommand{\proj}{\mathrm{P}}
\newcommand{\cp}{\mathcal{P}}
\newcommand{\cs}{\mathcal{S}}
\newcommand{\Tau}{\mathrm{T}}
\newcommand{\ch}{\mathcal{H}}
\newcommand{\I}{\mathrm{I}}
\begin{document}

\title{Error-Minimizing Measurements in Postselected One-Shot Symmetric Quantum State Discrimination and Acceptance as a Performance Metric
\author{%
	\IEEEauthorblockN{Saurabh Kumar Gupta, Abhishek K. Gupta}
%	\IEEEauthorblockA{\\\textit{Department of Electrical Engineering}\\
%		{Indian Institute of Technology Kanpur},	Kanpur, India\\
%		\{saurabhg20, gkrabhi\}@iitk.ac.in}
\thanks{Saurabh Kumar Gupta and Abhishek K. Gupta are with IIT Kanpur, India, 208016. Email:saurabhg20@iitk.ac.in,gkrabhi@iitk.ac.in. A preliminary work was presented at ISIT 2023\cite{gupta2023characterizing}.}
}}

\maketitle

\begin{abstract}
	Quantum state discrimination refers to the determination of the unknown state of a quantum object that can be in one of two possible states, by performing a certain measurement that declares its state to be one of two states. Hence, it can be expressed as a hypothesis testing problem with null and alternative hypotheses representing the two states. Postselected state discrimination (PSD) considers measurements with an additional outcome, corresponding to not making any determination, thus rejecting the test. Thus, allowing the test not to make a decision in unreliable cases can help PSD reduce the effective error in state determination among cases when a decision is made. However, this comes at the expense of reduced probability of making a decision which we term {\em acceptance}. In this work, we focus on deriving optimal measurements that maximize acceptance while minimizing the postselected error. We first give the set of all measurements that can achieve the minimum error, along with a method to parameterize these error-minimizing measurements. We further provide an example to demonstrate that these measurements vary in acceptance and thus establish the need to choose a measurement that minimizes error and maximizes acceptance simultaneously. We derive the expression of acceptance for an arbitrary error-minimizing measurement in terms of the parameters used for parameterization proposed earlier. We find the maximum value of acceptance achievable over the set of all error-minimizing measurements. We also provide an example measurement that achieves the maximum acceptance.
\end{abstract}

%\listoffigures

\textit{Index Terms-} quantum state discrimination, hypothesis testing, measurement operator, type-1 and type-2 error.

\section{Introduction}
Quantum hypothesis testing involves discriminating between hypotheses corresponding to the quantum properties of nature, with numerous applications in quantum information science \cite{watrous2018theory,hayashi2016quantum,wilde2013quantum}. A specific instance of quantum hypothesis testing is quantum state discrimination, which aims to determine the state of a given quantum object that can exist in one of two possible states ($\rho$ and $\sigma$), based on a measurement on the object. 
These states can be regarded as the null and the alternative hypotheses. If the decision has to be made based on the measurement of a single copy of the quantum object, it is referred to as one-shot hypothesis testing. If infinitely many copies are available, it is called asymptotic hypothesis testing. When the test falsely concludes the alternative hypothesis, it is called type-1 error; when the test falsely decides in favor of the null hypothesis, it is called type-2 error. Ideally, the goal is to select a measurement and a decision rule with both type-1 and type-2 errors arbitrarily small. Since simultaneous minimization of the two errors is difficult, a trade-off is usually adopted. For example, in asymmetric hypothesis testing, the objective is to minimize the type-2 error under a constraint on the type-1 error. On the other hand, symmetric hypothesis testing aims to minimize the average error probability.
%\newpage

In one-shot discrimination of quantum states, the minimum error achievable in an asymmetric hypothesis testing problem can be obtained using semi-definite programming. In symmetric hypothesis testing, the error can be expressed in a closed-form expression utilizing the Helstrom-Holevo theorem \cite{helstrom1969quantum,holevo1972analogue}. In the asymptotic case, the minimum error in quantum hypothesis testing vanishes exponentially with the number of copies of states, and the error exponent was characterized in \cite{nussbaum2009chernoff,audenaert2007discriminating} for the symmetric case and in  \cite{hiai1991proper,ogawa2005strong} for  the asymmetric case. In particular, quantum relative entropy\cite{umegaki1962conditional} was shown to be the error exponent of asymmetric hypothesis testing. Further, \cite{hayashi2002optimal} presents the optimal measurements that can achieve this error. In conventional quantum hypothesis testing, as described above, one of the two hypotheses is selected. The key problem remains that non-orthogonal states cannot be distinguished with zero error via conventional quantum hypothesis testing.

{\Blue

	To distinguish the non-orthogonal states with zero-error, prior work \cite{fiuravsek2003optimal,chefles1998unambiguous,feng2004unambiguous,stojnic2007unambiguous,bandyopadhyay2014unambiguous,gupta2024unambiguous,zhang2023unambiguous,zhou2012unambiguous,chefles1998quantum,jafarizadeh2008optimal,kleinmann2010unambiguous,hayashi2008state,hashimoto2010unitary,ivanovic1987differentiate,dieks1988overlap,peres1988differentiate,chefles1998strategies,rudolph2003unambiguous,croke2006maximum,herzog2009discrimination,bagan2012optimal,zhuang2020ultimate} have proposed to include an extra outcome in the quantum hypothesis testing where no state is declared, effectively, rejecting both hypotheses, or equivalently declaring a failure of the test. Such discrimination strategies are termed unambiguous state discrimination (USD) (see \cite{chefles1998unambiguous,feng2004unambiguous,stojnic2007unambiguous,bandyopadhyay2014unambiguous,gupta2024unambiguous,zhou2012unambiguous,jafarizadeh2008optimal}). For an example, consider a pair of states $\{\rho,\sigma\}$. USD aims to design a measurement such that only possible outcomes are (i) declare $\rho$, and (ii) reject the test, when the state is $\rho$ and; the possible outcomes are (i) declare $\sigma$ and (ii) reject the test when the state is $\sigma$. This is illustrated in Fig. \ref{fig: diff} (b). In other words, in a USD, the measurement either detects the correct state or declare a decision failure, resulting in a zero error among accepted (selected) results. However, it is important to note that there is a certain probability that a decision is not made, which can be termed failure or test-rejection probability. Therefore, it is desirable to have a low failure probability while ensuring zero error in USD. While desirable, it is noteworthy that USD is not always possible, as its feasibility depends on the state pair $\{\rho, \sigma\}$ we aim to distinguish. The prior work in USD has focused on determining the condition on the state pair for USD to be feasible, determining the USD measurement under the above condition, and determining USD measurements with the minimum probability of rejection, along with derivation of specific bounds on the rejection probability \cite{chefles1998unambiguous,feng2004unambiguous,stojnic2007unambiguous,bandyopadhyay2014unambiguous,gupta2024unambiguous,jafarizadeh2008optimal}. In Table \ref{table: sumContri}, we summarize related papers from the past literature along with their system model and  relevant contributions.  Readers are suggested to refer to  \cite{barnett2009quantum}, and \cite{bae2015quantum} for a comprehensive survey of various quantum state
	discrimination strategies. 

\begin{table}[ ]
	\centering
	\Blue
	\begin{tabular}{||>{\centering\arraybackslash}p{0.025\linewidth}||>{\centering\arraybackslash}p{0.065\linewidth}|>{\centering\arraybackslash}p{0.068\linewidth}|>{\raggedright\arraybackslash}p{0.7\linewidth}||} \hline \hline
		\textbf{Ref}& \textbf{Limited to USD?}&\textbf{Type of states}&\textbf{System model with key relevant contributions}
		\\ \hline \hline
		\cite{chefles1998unambiguous} & YES& $n$ Pure states&It is shown that USD is possible only if the states are linearly independent. A general form of USD measurements is obtained. A discussion on maximizing the success probability (expected acceptance) is included and illustrated by an example.
		\\\hline  
		\cite{feng2004unambiguous}& YES& $n$ Mixed states&A lower bound on the probability of rejection ($=1-$expected acceptance) is obtained in terms of the fidelity of each possible pair of states.
		\\ \hline
		\cite{stojnic2007unambiguous}& YES& $2$ Mixed states&Rank-2 states in Hilbert's space of dimension $4$ are considered for discrimination. A measurement that maximizes the success probability (expected acceptance) is numerically obtained. 
		\\ \hline 
		\cite{bandyopadhyay2014unambiguous}& YES& $n$  Pure states&Prior probability of states is known. An upper bound is obtained on the maximum success probability (expected acceptance) and shown to be tight and achievable in some cases.
		\\\hline
		\cite{gupta2024unambiguous}& YES& $n$  Pure states&Each pair of states has an equal and real inner product. For states with equal prior probability, for USD measurements, the success probability (expected acceptance) is maximized. 
		\\ \hline 
		\cite{zhang2023unambiguous} & UCD &  $n$  Pure states&The paper introduces unambiguous correct detection (UCD), a different generalization of USD. A new metric (=$1-$postselected symmetric error) is defined and it is shown that UCD improves the metric with help of examples.
		\\ \hline 
		\cite{zhou2012unambiguous} & YES& $2$  Pure states&Unknown quantum states with multiple available copies are taken for discrimination, and a method to design a programmable circuit for USD is presented.
		\\ \hline 
		\cite{chefles1998quantum}& More General& $2$  Pure states&In this work, given a quantum object in one of the two possible quantum states, a new pair of quantum states  with more separation (in terms of the inner product) is obtained, with partial success. The paper states the maximum probability of success (expected acceptance) for such operations and also provides a method to design such an operation.
		\\ \hline 
		\cite{jafarizadeh2008optimal} &YES & $n$ Pure states & For pure states with known prior probabilities $\{(|\psi_i\rangle,\eta_i): i\in[1,n]\}$, USD measurements are obtained as $\Pi_i = p_i|\psi_i\rangle\langle\psi_i|$, where reciprocal states $|\psi^{\perp}_k\rangle: \langle\psi^\perp_k|\psi_i\rangle = \delta_{ik}$. The probability of inconclusive outcomes $\left(=1-\sum_k \eta_k p_k \right)$ is minimized. An exact solution is obtained for $n=2,3$, and a numerical method is given for more states.
		\\ \hline 
		\cite{kleinmann2010unambiguous} &YES &2 Mixed states & Measurement for which error vanishes (i.e., USD measurements), and success probability (i.e., expected acceptance) is maximum, is obtained numerically by solving a set of equations, taken from a prior work.
		\\ \hline 
		\cite{hayashi2008state} &More general & 2 Pure states $|\psi_1\rangle,|\psi_2\rangle$ & For a given upper bound (termed error margin) on conditional error probabilities conditioned that, on measurement the state is declared as $|\psi_1\rangle$ and $|\psi_2\rangle$, respectively, the probability of correct detection (i.e., the probability of declaring the state correctly) is maximized.
		\\ \hline 
		\cite{hashimoto2010unitary} &Different & 2 Unitary operation & This work is a generalization of \cite{hayashi2008state} for unitary operations with uniform prior probability. Closed-form expressions of maximum success probability for a given error margin are obtained for an irreducible group of unitary operations.
		\\ \hline 
		\cite{fiuravsek2003optimal} &More general & $n$ Mixed states &An extra inconclusive outcome is added while discriminating a set of $n-$mixed states. The goal is to maximize relative success probability ($=1-$postselected symmetric error) for the measurements for which the probability of inconclusive outcomes ($=1-$expected acceptance) is fixed.
		\\ \hline 
		\cite{regula2022postselected} & More general & 2 Mixed states & Minimum postselected error is obtained for states with known prior probability. An example measurement is given for which the minimum error is achieved.
		\\ \hline \hline
		Our work & More general& 2 Mixed stated & Our work studies PSD of two mixed states with equal support, resulting in a non zero minimum postselected error as given in \cite{regula2022postselected}. The set of all measurements is obtained for which the minimum postselected error is achieved, along with a parameterization of such measurements for complete characterization. If the support of the two states is not equal, it reduces to the USD case. Acceptance is obtained for each error-minimizing measurement along with the derivation of maximum achievable acceptance over the set of all error-minimizing measurements.  If the support of the two states is not equal, it reduces to the USD case which is also studied as special case of PSD.\\ \hline \hline\end{tabular}
	\caption{\Blue The system model and key contributions of relevant references \cite{chefles1998unambiguous,feng2004unambiguous,stojnic2007unambiguous,bandyopadhyay2014unambiguous,gupta2024unambiguous,zhang2023unambiguous,zhou2012unambiguous,chefles1998quantum,jafarizadeh2008optimal,kleinmann2010unambiguous,hayashi2008state,hashimoto2010unitary,fiuravsek2003optimal,regula2022postselected} in context of our work.}
	\label{table: sumContri}
\end{table}

As mentioned above, USD is not always feasible. The work in \cite[Theorem 2]{feng2004unambiguous} showed that USD is feasible only when the two states in the pair of states have unequal support. Note that the support $\Pi_\nu$  of a state $\nu$ refers to the projection operator onto the eigenspace spanned by the set of eigenvectors corresponding to all nonzero eigenvalues of $\nu$. It is possible to generalize USD further to include the cases when the supports of the pair of states are equal. One such generalization is postselected discrimination (PSD), which utilizes measurements that can give one of the two states or {\em reject} as its outcome. Here, the reject outcome corresponds to not selecting any of the hypotheses (see Fig. \ref{fig: diff} (c)) and declaring a failure of the test \cite{regula2022postselected}. In other words, the decision is only made when the third outcome does not occur, which can be collectively termed {\em selection of the test}. Hence, such discrimination is termed postselected discrimination. Note that in contrast to USD, in PSD, a decision can be either incorrect or correct, resulting in a certain error post selection. 

\begin{figure}[t]
	\begin{center}
		\includegraphics[scale=0.6]{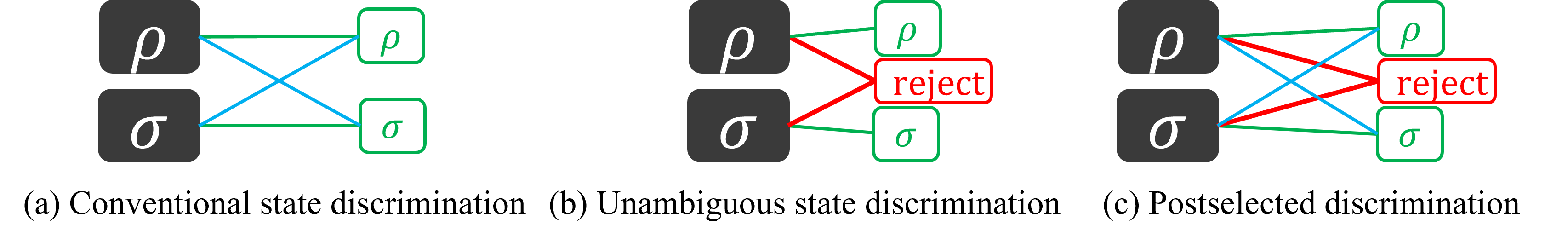}
	\end{center}	
	\vspace*{-5mm}
	\caption{\Blue In all three subfigures, the black boxes on the left represent an unknown object in the state $\rho$ or $\sigma$. The colored boxes on the right represent the possible outcomes, where the green box corresponds to the accepted outcome, and the red box corresponds to the rejected outcome. The green line represents correct detection, the blue line shows incorrect detection, and the red line shows rejecting the test. Note that, in conventional hypothesis testing, only accepted outcomes were considered. Furthermore, from (b), it is clear that outcomes corresponding to blue lines are not allowed in USD.}
	\label{fig: diff}
\end{figure}

		The postselected probability of any event is defined as the probability of the event conditioned on the observation of one of the selected outcomes. Hence, in the symmetric PSD, the postselected symmetric error (See \eqref{eq: defSymError}) is defined as the probability of getting an incorrect outcome (corresponding to blue lines in Fig. \ref{fig: diff} (c)) given that the test is selected (corresponding to both blue and green lines, which merge to green boxes, in Fig. \ref{fig: diff} (c)). Recall that the event that the test is selected corresponds to the case when a decision is made and includes getting a correct or incorrect outcome. The minimum achievable value of postselected symmetric error is given in \cite{regula2022postselected}, along with an example measurement that achieves this value. However, a general characterization of all measurements that can achieve this error is not given in the prior work.
		
		It is reasonable to wonder if, in postselected hypothesis testing,  postselected symmetric error is the only performance metric and what constitutes an optimal measurement. To understand this, let us consider an example where two distinct measurements have the same postselected symmetric error, but different rejection probabilities. Here, rejection probability refers to the probability of obtaining the reject outcome, i.e., the probability of declaring the test failure and not making a decision. Essentially, even though the error is the same, one should prefer the one with a lower rejection probability. Further, a measurement that achieves the minimum value of error may have a high rejection probability. Such measurements are practically unusable. This possibility raises the question of how we can find measurements that minimize the probability of rejection and the postselected error. To address this concern, in our preliminary work \cite{gupta2023characterizing}, we define another performance metric, {\em acceptance}, as the probability that one of the first two outcomes occurs and a decision is made to select one of the two hypotheses. Hence, the acceptance for any state is given by the probability of observing an accepted outcome, given that the quantum object is in that state. In particular, acceptance for state $\rho$ (or $\sigma$) is the probability of events corresponding to green and blue lines originating from states $\rho$ (or $\sigma$, depending on the state under consideration), respectively, in Fig. \ref{fig: diff} (c). It is intuitive to find a measurement that maximizes acceptance while minimizing the postselected error, which is the primary focus of the work.  
		
}

{\Blue

\textit{Contributions:} In this work, we consider symmetric postselected discrimination (PSD) for a pair of mixed states, $\rho$ and $\sigma$. We first consider the case where the two states have equal support, making USD infeasible. Hence, the PSD will result in a non-zero postselected error. We first determine the set of all measurements that minimize the postselected symmetric error. In particular, we represent error-minimizing measurements in a parametric form, allowing for a complete characterization of these measurements. Such parameterization represents a novel contribution compared to past literature, as it enables us to further refine the set of error-minimizing measurements to optimize any other performance metrics of interest. In particular, compute the maximum possible value of acceptance and provide an example for a measurement that achieves the maximum acceptance. This part of the work is entirely different than USD and is unrelated to the prior literature. 

Next, we study USD as a special case of PSD where the supports of the two states are not equal. As discussed before, in this case, the postselected error vanishes, and therefore, the probability of a correct decision (also known as the success probability) is simply given as the expected acceptance. We first present the set of all error-minimizing (zero-error here) measurements along with their parameterized form, which satisfy the conditions given for USD measurements in \cite{chefles1998unambiguous,feng2004unambiguous,stojnic2007unambiguous,bandyopadhyay2014unambiguous,jafarizadeh2008optimal,kleinmann2010unambiguous}. While most of these prior works have focused on maximizing success probability or obtaining general bounds  on it (see \cite{chefles1998unambiguous,feng2004unambiguous,stojnic2007unambiguous,bandyopadhyay2014unambiguous,gupta2024unambiguous,jafarizadeh2008optimal}), we provide closed-form expressions of the maximum acceptance for each state separately and show achievability by presenting one example of a measurement that achieves this maximum value of acceptance. The key contributions of our paper can be summarized as follows.

	\begin{enumerate}
		\item \textbf{Characterization of the set of error-minimizing measurements -} 
		The main contribution of our work lies in determining the set of all error-minimizing measurements (i.e., measurements that can achieve the minimum value of the postselected symmetric error) in PSD and providing a parameterization to express these measurements for their complete characterization, which was not available in previous works. While the prior work \cite{regula2022postselected} has presented the minimum value of postselected symmetric error in PSD, along with an example of such error-minimizing measurement,  it does not provide a list of all error-minimizing measurements. It is not trivial to obtain this list, as  \cite{regula2022postselected} utilized the Lagrange duality approach to derive the minimum error, which does not provide any pointers towards finding the set of all optimal points. Finding the exact constraint on such measurement operators was a challenge, which our work resolved. Characterizing the complete set is essential to present any discussion on the optimality of measurements. To characterize error-minimizing measurements, we begin with noticing that any measurement that achieves the supremum given in \eqref{eq: cricial2ndLastSymBase} minimizes the error. We focus on finding the necessary and sufficient conditions for a measurement to achieve this supremum and thus obtain the complete set of measurements achieving the minimum error, as shown in the proof of Theorem \ref{lem: conditionSym}, which is given in Appendix \ref{proof: conditionSym}. While deriving, we also obtain an alternative and novel derivation of the minimum error given in \cite{regula2022postselected}, which we have included  in the same theorem which further validates our results.  
		
		\item \textbf{Construction: unique and appropriate parameterization -} We provide a construction of error-minimizing measurements based on parameters that can be varied appropriately to obtain all such measurements. A measurement constructed in this way is always an error-minimizing measurement. Furthermore, the parameterization is unique for each error-minimizing measurement in the sense that, changing the value of any of the parameters gives a different measurement. One key benefit of having such characterization is that we can refine these measurements to optimize other performance metrics, such as acceptance. Acceptance is a key metric proposed and maximized in this work for error-minimizing measurements. The parameterization provided us with a closed-form expression of acceptance, thus offering a direct approach to finding the maximum value of acceptance. The presented parameterization enables us to design an error-minimizing measurement to achieve any desired value of acceptance for each state. This observation is added as Remark \ref{rem: desiredAccEq} and Remark \ref{rem: desiredAccNeq}. The parameterization also helped us in finding the maximum acceptance easily, which is evident from the closed-form expression of the maximum acceptance in many cases.
		
		\item \textbf{A novel technique: subspace transformation for operators -} In Theorem \ref{lem: conditionSym}, we obtained the constraint on an arbitrary measurement given by a positive operator valued measure (POVM) $\Lambda = \{\Lambda_\rho,\Lambda_\sigma,\mathrm{I}-\Lambda_\rho-\Lambda_\sigma\}$ for it to be error-minimizing. We get these constraints on $\sigma^{1/2}\Lambda_\rho\sigma^{1/2}$, and $\sigma^{1/2}\Lambda_\sigma\sigma^{1/2}$ as
		$\sigma^{1/2}\Lambda_\rho\sigma^{1/2}\in\mathcal{P}(\Pi_1) \text{ and } \sigma^{1/2}\Lambda_\sigma\sigma^{1/2}\in\mathcal{P}(\Pi_2),$
		where $\Pi_1$ and $\Pi_2$, obtained in Theorem \ref{lem: conditionSym}, correspond to the projection operator on a pair of orthogonal subspaces. $\mathcal{P}(\Pi)$ denotes the set of all positive semidefinite operators having eigenvectors (corresponding to non-zero eigenvalues) lying in the subspace corresponding to the projection operator $\Pi$. At this stage, if we proceed in a conventional sense, we would have written an optimization problem as maximizing acceptance over the set of measurements satisfying these constraints. With that approach, unique parameterization would have been quite complicated. To simplify it further, we came up with a novel subspace-transformation that simplifies the constraints. We first observe that with these constraints on the operator, with some transformation, we can obtain another projection operator $\Pi_3$ and $\Pi_4$ such that the condition $\Lambda_\rho\in\mathcal{P}(\Pi_3)$, and $\Lambda_\sigma\in\mathcal{P}(\Pi_4)$ gives equivalent constraints. The equivalence is shown in Appendix \ref{sec: genProj}. To the best of our knowledge, this transformation, proved in Theorem \ref{lem: genMainProj} and Theorem \ref{thm: projSubSet} of Appendix \ref{sec: genProj}, has not been shown and used previously in this context. Using the proposed transformation and the constraints given in Theorem \ref{lem: conditionSym}, we directly obtain the set of all error-minimizing measurements, which is given in Theorem \ref{lem: setSym}. Therefore, we have obtained a parameterization of $\Lambda_\rho$, and $\Lambda_\sigma$ from Theorem \ref{lem: setSym}.
		
		\item \textbf{Observations: properties of error minimizing measurements -} After finding constraints on error minimizing measurements, we present the properties of error minimizing measurements in Corollary \ref{cor: alwaysOnlyOne} and Corollary \ref{cor: equalityCondition}. One key observation provided is that, for the case with $\rho$ and $\sigma$ having equal support (i.e., $\Pi_\rho=\Pi_\sigma$), each any error minimizing measurement has one of the following three properties depending on the value of prior probabilities $p_\rho$ and $p_\sigma$.
		\begin{enumerate}[label = (\roman*)]
			\item If $p_\rho>p_\rho^*$ (i.e., $p_\sigma<p_\sigma^*$): Error-minimizing measurement never detects $\sigma$. It either detects $\rho$ or rejects the test.
			\item If $p_\rho<p_\rho^*$ (i.e., $p_\sigma>p_\sigma^*$): Error-minimizing measurement never detects $\rho$. It either detects $\sigma$ or rejects the test.
			\item If $p_\rho=p_\rho^*$ (i.e., $p_\sigma=p_\sigma^*$): There are error-minimizing measurements detecting either/both the hypotheses.
		\end{enumerate}
		Here, $p_\rho^*$ and $p_\sigma^*$ are certain constants given by (16).  An exposition is provided in Fig \ref{fig:mostlyOne}.
		
		\item \textbf{PSD as a generalization of USD -}
		While USD is feasible only if the two subspaces where each state lies are not equal (see Theorem 2 in \cite{feng2004unambiguous}), PSD is a more general framework allowing discrimination with non-zero error. We also study USD as a special case of PSD for the case when the supports of the two states are not equal. We first present the set of all error-minimizing (which becomes zero-error in USD) measurements along with their parameterized form, which satisfy the conditions given for USD measurements in \cite{chefles1998unambiguous,feng2004unambiguous,stojnic2007unambiguous,bandyopadhyay2014unambiguous,jafarizadeh2008optimal}. 
		
	\end{enumerate}

}

The notation needed across the paper is given below.

\textit{Notation}: 
Let $\hilbert$ denote a Hilbert's space. Let $\mathcal{L}(\hilbert)$, $\mathcal{R}(\hilbert)$, $\mathcal{P}(\hilbert)$ and $\mathcal{D}(\hilbert)$ respectively denote the set of all linear, Hermitian, positive semidefinite and density operator over Hilbert's space $\hilbert$. $\I\in\mathcal{P}(\hilbert)\subset\mathcal{R}(\hilbert)\subset\mathcal{L}(\hilbert)$ is identity transformation. $\nu_1-\nu_2\in\mathcal{P}(\hilbert)$ is also written as $\nu_1\geq\nu_2$ or $\nu_2\leq\nu_1$. Any operator $\Pi\in\mathcal{P}(\hilbert)$ is a projection operator if $\Pi\Pi=\Pi$. For any $\nu\in\mathcal{R}(\hilbert)$, $\Pi_{\nu}$ denotes projection operator onto the eigenspace spanned by the set of eigenvectors corresponding to all non-zero eigenvalues of $\nu$. Note that by the properties of projector $\Pi_\nu\nu\Pi_\nu=\nu$. $\Pi_{\nu}^{\max}$ and $\Pi_{\nu}^{\min}$ denote the projection operator onto eigen-space corresponding to the maximum and smallest non-zero eigenvalues of $\nu$, denoted as $\|\nu\|_{\infty}$ and $\|\nu\|_{\infty,0}$ respectively. For some projection operator $\Pi\in\cp(\ch)$, $\mathcal{P}(\Pi)\subset\mathcal{P}(\hilbert)$ and $\mathcal{S}(\Pi)\subset\mathcal{D}(\hilbert)$ denote set of all positive semidefinite operators and density operators respectively that are invariant w.r.t. $\Pi$ i.e. $\mathcal{P}(\Pi)\isdefinedas\{\nu:\Pi\nu\Pi=\nu ,\nu\in\mathcal{P}(\hilbert)\}$ and  $\mathcal{S}(\Pi)\isdefinedas\{\nu:\Pi\nu\Pi=\nu ,\nu\in\mathcal{D}(\hilbert)\}$. For any $\nu\in\mathcal{P}(\hilbert)$, $\nu^{-1}$ is an operator obtained by substituting all non-zero eigenvalues of $\nu$ by their inverse. Note that, with this definition of inverse, we get $\nu^{-1}\nu=\Pi_{\nu}$. Given a pair of operators $\nu_1,\nu_2\in \mathcal{P}(\hilbert)$, we denote $R_{\max}(\nu_1,\nu_2)=\|\nu_2^{-1/2}\nu_1\nu_2^{-1/2}\|_{\infty}$ and  $R_{\min}(\nu_1,\nu_2)=\|\nu_2^{-1/2}\nu_1\nu_2^{-1/2}\|_{\infty,0}$. $R_{\max}$ is related to the max relative entropy of the states, defined as the log of the maximum eigenvalue of $\nu_2^{-1/2}\nu_1\nu_2^{-1/2}$ \cite{datta2009min}.

\textit{Organization: }The paper is organized as follows. \textbf{Section \ref{sec: System}} describes the system model, the problem statement describing PSD between the two states $\rho$ and $\sigma$ as a hypothesis problem, and the minimum possible postselected error from the literature, followed by definitions of acceptance, the set of error-minimizing measurements and the maximum acceptance achieved over this set. {\Blue \begin{figure}[b]	
	\begin{center}
		\resizebox{0.8\linewidth}{!}
{	\begin{tikzpicture}[
		node distance=0.4cm and 0.5cm,
		every node/.style={rectangle, draw, minimum width=6cm, minimum height=1.2cm, align=center,, line width=1},
		smallbox/.style={rectangle, draw, minimum width=3, minimum height=1, align=center,inner sep=5pt,line width=1.5},
		bigbox1/.style={rectangle, draw, minimum width=7cm, minimum height=10.5cm, inner sep=0pt},
		bigbox2/.style={rectangle, draw, minimum width=8cm, minimum height=9.6cm, inner sep=0pt, line width=1.5},
		bigbox3/.style={rectangle, draw, minimum width=9cm, minimum height=9.6cm, inner sep=0pt, line width=1.5}
		]
		
		% First big box
		\node[bigbox3] (frame1) at (0,0) {};
		
		% Flowchart nodes in first box
		\node[ draw,text width=8cm] (a1) at ([yshift=2.4cm]frame1.center) {\textbf{Theorem \ref{lem: conditionSym}:}  presents the condition for a measurement to be an error minimizing measurement
		};
		\node[below=of a1,draw,text width=8cm] (a2) {\textbf{Theorem \ref{lem: setSym}:} gives the set of all error minimizing measurements $\mathcal{E}_s$
		};
		
		\node[above=2mm  of a1,draw=none,text width=8cm] (a0) {\textbf{When the supports of $\rho$ and $\sigma$ are equal i.e. $\Pi_\rho=\Pi_\sigma$}};
		
		\node[below= of a2,draw,text width=8cm] (a3) {\textbf{Theorem \ref{lem: genAccSym}:} evaluates the acceptance for each element in the set $\mathcal{E}_s$
		};
		\node[below=of a3,draw,text width=8cm] (a4) {\textbf{Theorem \ref{thm: symmetric}:} gives the maximizes acceptance achievable by measurement in the set of all error minimizing measurements $\mathcal{E}_s$
		};
		% Arrows in first box
		\draw[-{Latex}, line width=1] (a1) -- (a2);
		\draw[-{Latex}, line width=1] (a2) -- (a3);
		\draw[-{Latex}, line width=1] (a3) -- (a4);
		
		% Small rectangle at the bottom of first box
		\node[smallbox, below=0.31cm of a4] (a5) {\textbf{Section \ref{sec: rhoEqsigma}}};
		
		% Second big box
		\node[bigbox2, right=1cm of frame1] (frame2) {};
		
		% Flowchart nodes in second box
		\node[draw,text width=7cm] (b1) at ([yshift=2.6cm]frame2.center) {\textbf{Theorem \ref{thm: beyondSymmetricSplit} :} gives the set of all error minimizing measurements $\mathcal{E}_s$};
		
		\node[above=2mm of b1,draw=none,text width=8cm] (b0) {\textbf{When the supports of $\rho$ and $\sigma$ are not equal i.e. $\Pi_\rho\neq\Pi_\sigma$}};
		
		\node[below=of b1,draw,text width=7cm] (b2) {\textbf{Lemma \ref{lem: para1},\ref{lem: para2},\ref{lem: para3}:} parameterize an arbitrary error minimizing measurement};
		
		\node[below=of b2,draw,text width=7cm] (b3) {\textbf{Theorem \ref{thm: notSymmetricAcceptance}:} evaluates the acceptance for each element in the set $\mathcal{E}_s$};
		
		\node[below=of b3,draw,text width=7cm] (b4) {\textbf{Theorem \ref{thm: maxAccNeq}:} gives the maximizes acceptance achievable by measurement in the set of all error minimizing measurements $\mathcal{E}_s$};
		
		% Arrows in second box
		\draw[-{Latex}, line width=1] (b1) -- (b2);
		\draw[-{Latex}, line width=1] (b2) -- (b3);
		\draw[-{Latex}, line width=1] (b3) -- (b4);
		
		% Small rectangle at the bottom of second box
		\node[smallbox, below=0.25cm of b4] (b5) {\textbf{Section \ref{sec: rhoNeqsigma}}};
		
\end{tikzpicture}}

\end{center}
\caption{\Blue Figure showing connection among theorems in Section \ref{sec: rhoEqsigma} and Section \ref{sec: rhoNeqsigma}}
\label{fig: thmFlow}
\end{figure}

}

{\Blue
	
		\textbf{Section \ref{sec: rhoEqsigma}} studies the scenario when the supports of the two states $\rho$ and $\sigma$ are equal, i.e., $\Pi_\rho = \Pi_\sigma$ and presents Theorem \ref{lem: conditionSym}, \ref{lem: setSym}, \ref{lem: genAccSym}, and \ref{thm: symmetric}. A flow-chart illustrating the contributions of these theorems is included in Fig. \ref{fig: thmFlow}.

	 \begin{enumerate}[label={\textit{Theorem {\arabic*}}}]
		\setlength{\itemindent}{2cm}
		
		\item states the necessary and sufficient condition that any measurement needs to satisfy to achieve the minimum postselected error. We obtain three different conditions in three different cases depending on the prior probability of the unknown state being $\rho$ (or $\sigma$). We state some important observations regarding the error-minimizing measurements in Remark \ref{rem: notesOnT}, \ref{rem: notesOnP}, and \ref{rem: betterMetric} and Corollary \ref{cor: alwaysOnlyOne}-\ref{cor: equalityCondition}. Example \ref{exmp: first} illustrates how to obtain error-minimizing measurements utilizing Theorem \ref{lem: conditionSym} for a pair of states to be discriminated and further shows that acceptance varies over error-minimizing measurements. Example \ref{exmp: q1} shows that one can obtain the most general error-minimizing measurements from the conditions obtained in Theorem \ref{lem: conditionSym}.
		
		\item presents the set $\mathcal{E}_s$ of all error-minimizing measurements in a form that can be parameterized easily for all three cases separately. We derive Theorem \ref{lem: setSym} by finding the general subspace where the measurements must lie, using the conditions obtained in Theorem \ref{lem: conditionSym}. This result is a generalization from the literature in the sense that achievability was shown in the literature \cite{regula2022postselected}, but an exhausting set of all such measurements was not known. Then, we give a parametric method to construct an arbitrary error-minimizing measurement by varying certain parameters. The method for all the three cases is summarized in Table \ref{table: paraEqual}. Example \ref{exmp: q2} illustrates illustrates how to obtain error-minimizing measurements for a pair of quantum states and develops an intuition towards the presented approach to derive the maximum value of acceptance by maximizing one of the parameters.
		
		\item obtains acceptance for an arbitrary error minimizing measurement from the set $\mathcal{E}_s$ which was earlier derived in Theorem \ref{lem: setSym}.
		
		\item derives the maximum value of acceptance over the set of all error-minimizing measurements $\mathcal{E}_s$. It provides a closed-form expression of the maximum acceptance. In one case, we have given it as an optimization problem. Remark \ref{rem: obtainsMaxAccEq} provides an example of a measurement  in $\mathcal{E}_s$ that achieves the maximum acceptance. This part of the work as presented in Section III is entirely different than USD and is unrelated to the prior literature. 
		
		\hspace*{-4mm}\textbf{Section \ref{sec: rhoNeqsigma}} studies USD as a special case of PSD when the supports of the two states $\rho$ and $\sigma$ are not equal, i.e., $\Pi_\rho \neq \Pi_\sigma$. 
		
		\item obtains the set of all error minimizing measurements $\mathcal{E}_s$, in a form that is easy to parameterize. We observe that the error may vanish in three different ways, resulting in three different conditions. We express $\mathcal{E}_s$ as a union of three sets, each satisfying one of the three conditions. We discuss the properties of the measurements in each set. We then show that two sets remain empty when $\Pi_\rho<\Pi_\sigma$ or $\Pi_\rho>\Pi_\sigma$. Further, these measurements are parameterized in Lemma \ref{lem: para1}, \ref{lem: para2}, and \ref{lem: para3}.
		
		\item obtains the acceptance for an arbitrary error minimizing measurement from the set $\mathcal{E}_s$.
		
		\item obtains the maximum acceptance over the set of all error-minimizing measurements $\mathcal{E}_s$ by maximizing the acceptance obtained on the three sets, and the result is summarized in Table \ref{table: maxAccNeq} in the theorem.
\end{enumerate}

\textbf{Section \ref{sec:conclusion}} concludes the paper with a summary and possible generalizations of this work.
}

To maintain the flow, we have just given the key results in the main text while supporting mathematical results are stated and derived in the appendix. Now, we proceed to describe the system model.

\section{System Model} \label{sec: System}
We consider a quantum object that can be in one of the two possible states $\rho$, or $\sigma\in \mathcal{D}(\hilbert)$, with prior probabilities $p_\rho$ and $p_\sigma$, respectively. The objective of PSD is to determine its state with a high accuracy via an appropriately designed measurement. PSD can be seen as a post-selection quantum hypothesis testing problem where the quantum object being in the state $\rho$ corresponds to the null hypothesis, while the quantum object being in the state $\sigma$ corresponds to the alternative hypothesis. A measurement is performed with positive operator-valued measure (POVM) $\Lambda = \{\Lambda_\rho,\Lambda_\sigma, \I-(\Lambda_\rho+\Lambda_\sigma)\}$. Note that $\Lr,\Ls$ must satisfy the condition $\Lambda_\rho,\Lambda_\sigma\in \mathcal{P}(\hilbert)$ and $\Lambda_\rho+\Lambda_\sigma\leq \I$ for $\Lambda$ to be a valid measurement. Hence, the set of all such measurements is given by 
\begin{align}
	\mathcal{M} \isdefinedas \left\{\Lambda:\Lambda=\{\Lambda_{\rho},\Lambda_{\sigma},\I-\Lambda_{\rho}-\Lambda_{\sigma}\},\Lambda_{\rho}\geq0,\Lambda_{\sigma}\geq0, \Lambda_{\rho}+\Lambda_{\sigma}\leq \I\right\}.
\end{align} 
The performed measurement results in one of the three outcomes corresponding to the operators $\Lr$, $\Ls$, or $\I-\Lr-\Ls$. The first outcome (corresponding to the operator $\Lr$) corresponds to accepting the null hypothesis equivalent to declaring the unknown state as $\rho$, the second outcome (corresponding to the operator $\Ls$) corresponds to accepting the alternative hypothesis and declaring it as $\sigma$. If the third outcome (corresponding to the operator $\I-\Lr-\Ls$) occurs, no decision is made, rejecting the test as inconclusive. The first two outcomes collectively represent the selection or acceptance of the test, where a decision is taken to select one of the hypothesis. Various outcomes and their probabilities are summarized in Table \ref{table: probs}.

\begin{table}[H]
	\centering
	\begin{tabular}{|c||c|c|c||c|}
		\hline
		\multirow{2}{*}{Actual state of} &\multicolumn{4}{c|}{Probability of\ldots}\\
		\cline{2-5}
		\multirow{2}{*}{the quantum object}& \multirow{2}{*}{declaring $\rho$}  & \multirow{2}{*}{declaring $\sigma$}  & \multirow{2}{*}{rejection}  &  {accepting null or alternative}\\ %\cline{2-4} 
		& & & & hypothesis i.e. $\rho$ or $\sigma$\\
		\hline\hline
		\multirow{2}{*}{$\rho$} & \multirow{2}{*}{$\tr\lr{\Lr\rho}$} & \multirow{2}{*}{$\tr\lr{\Ls\rho}$} & \multirow{2}{*}{$\tr\lr{(\I-\Lr-\Ls)\rho}$}& \multirow{2}{*}{$\tr\lr{\Lr\rho}+\tr\lr{\Ls\rho}$} \\
		&                   &                   &               &    \\
		Description ---& Correct outcome                  &   Type-1 error                &       Rejection        &  Acceptance  \\ \hline
		\multirow{2}{*}{$\sigma$} & \multirow{2}{*}{$\tr\lr{\Lr\sigma}$} & \multirow{2}{*}{$\tr\lr{\Ls\sigma}$} &\multirow{2}{*}{$\tr\lr{(\I-\Lr-\Ls)\sigma}$} & \multirow{2}{*}{$\tr\lr{\Lr\sigma}+\tr\lr{\Ls\sigma}$} \\
		&                   &                   &               &    \\		Description ---& Type-2 error               &   Correct outcome               &       Rejection        &  Acceptance  \\ \hline
	\end{tabular}
	\caption{Various outcomes and their probabilities for measurement $\Lambda$ given object is in the state $\rho$ or $\sigma$.}
	\label{table: probs}
\end{table}
\noindent Type-1 error refers to the error when the state of the quantum object is declared to be $\sigma$ when the state is actually $\rho$. Similarly, Type-2 error refers to the error when the state of the quantum object is declared to be $\rho$ when the state is actually $\sigma$. Given prior probabilities $p_\rho$ and $p_\sigma$ (collectively denoted as $p$) such that $p_\rho>0,p_\sigma>0$, and $p_\rho+p_\sigma=1$, the total error probability is $p_\rho\tr(\Lambda_\sigma\rho)+p_\sigma\tr(\Lambda_\rho\sigma)$.

We define \textit{acceptance} as the probability of accepting the test. Note that this corresponds to selecting one of the first two outcomes and excludes the third outcome. Hence, the acceptance for the state $\rho$ is  
\begin{gather}
	A_\rho(\Lambda)\isdefined\tr((\Lambda_\rho+\Lambda_\sigma)\rho) \label{eq: defArho}
	\intertext{and for the state $\sigma$, it is}
	A_\sigma(\Lambda)\isdefined\tr((\Lambda_\rho+\Lambda_\sigma)\sigma).\label{eq: defAsigma}
\end{gather}
It may seem from the definition of acceptance given in \eqref{eq: defArho} and \eqref{eq: defAsigma} that the higher acceptance leads to better measurement, as it corresponds to the probability of accepting at least one hypothesis. However, we can observe that the expression of acceptance consists of two terms: the probability of correct decision and the probability of error. Intuitively, we desire the probability of correct decision to be as large as possible and the error as low as possible, hence a trade-off. Hence, our objective is to maximize acceptance only after minimizing the error, i.e., to find the measurement with the highest acceptance over the set of all the measurements with the minimum error.

The postselected probability of an event is defined as the probability of the event conditioned on the event that the alternative or the null hypothesis is accepted. Given prior probabilities $p_\rho$ and $p_\sigma$, the error probability is $p_\rho\tr(\Lambda_\sigma\rho)+p_\sigma\tr(\Lambda_\rho\sigma)$ and the probability that the null or alternative hypothesis is accepted is $p_\rho A_\rho(\Lambda)+p_\sigma A_\sigma(\Lambda)$. So, the postselected symmetric error $e(\Lambda)$ is defined as \cite{regula2022postselected}
\begin{align}
e(\Lambda)\isdefinedas\frac{p_\rho\tr(\Lambda_\sigma\rho)+p_\sigma\tr(\Lambda_\rho\sigma)}{p_\rho A_\rho(\Lambda)+p_\sigma A_\sigma(\Lambda)}=\frac{\tr(p_\sigma\Lambda_\rho\sigma+p_\rho\Lambda_\sigma\rho)}{\tr((\Lambda_\rho+\Lambda_\sigma)(p_\rho\rho+p_\sigma\sigma))}. \label{eq: defSymError}
\end{align} 

With a little abuse of notation, we have extended definition of $e(\Lambda)$ for any $\Lambda=\{\Lambda_\rho,\Lambda_\sigma\}$ such that $\Lambda_\rho,\Lambda_\sigma\in\mathcal{P}(\hilbert)$. We denote the set of all such operators as 
\begin{align}
	\mathcal{O}\isdefinedas \left\{\Lambda:\Lambda=\{\Lambda_{\rho},\Lambda_{\sigma}\},\Lambda_{\rho},\Lambda_{\sigma}\in\mathcal{P}(\hilbert)\right\}.
\end{align}
Further $\Lambda\in\mathcal{M}$ is a measurement or POVM and taken as $\Lambda=\{\Lambda_\rho,\Lambda_\sigma,\I-\Lambda_\rho-\Lambda_\sigma\}$.  $\Lambda\in\mathcal{O}$ is an operator (pair) and taken as $\Lambda=\{\Lambda_\rho,\Lambda_\sigma\}$ without additional constraints of being a measurement. We now define the minimum postselected symmetric error.

\begin{defn}[\textbf{The minimum postselected symmetric error}]
	Given prior probability $p_\rho$ and $p_\sigma$, the minimum postselected symmetric error is defined as the minimum achievable postselected symmetric error probability over all measurements, i.e., $$\ds e_s(\rho,\sigma,p)\isdefinedas\inf_{\Lambda\in\mathcal{M}} e(\Lambda).$$
\end{defn}
\noindent Here, $\ds \mathcal{M}$ is the set of all measurements. Utilizing \eqref{eq: defSymError}, $e_s(\rho,\sigma,p)$ can be simplified as\cite[Theorem 6]{regula2022postselected}
\begin{align}
	e_s(\rho,\sigma,p)&=\inf_{\Lambda\in\mathcal{M}} e(\Lambda)=\inf_{\Lambda\in\mathcal{M}} \frac{\tr(p_\sigma\Lambda_\rho\sigma+p_\rho\Lambda_\sigma\rho)}{\tr((\Lambda_\rho+\Lambda_\sigma)(p_\rho\rho+p_\sigma\sigma))}\nonumber\\
%	&=\inf_{\Lambda\in\mathcal{M}} \lr{1+\frac{{\tr(p_\rho\Lambda_\rho\rho+p_\sigma\Lambda_\sigma\sigma)}}{\tr(p_\sigma\Lambda_\rho\sigma+p_\rho\Lambda_\sigma\rho)}}^{-1}\nonumber\\
	&= \lr{1+\sup_{\Lambda\in\mathcal{M}}\frac{{\tr(p_\rho\Lambda_\rho\rho+p_\sigma\Lambda_\sigma\sigma)}}{\tr(p_\sigma\Lambda_\rho\sigma+p_\rho\Lambda_\sigma\rho)}}^{-1}\nonumber\\
	&= \lr{1+\sup_{\Lambda\in\mathcal{O}}\frac{{\tr(p_\rho\Lambda_\rho\rho+p_\sigma\Lambda_\sigma\sigma)}}{\tr(p_\sigma\Lambda_\rho\sigma+p_\rho\Lambda_\sigma\rho)}}^{-1}.\label{eq: 2ndLastSymBase}
\end{align}
{\Blue The last step follows from the fact that, for every element $\Lambda_o\in \mathcal{O}$, we have a $\Lambda_m\in\mathcal{M}$ such that the expression $\displaystyle \frac{{\mathrm{Tr}(p_\rho\Lambda_\rho\rho+p_\sigma\Lambda_\sigma\sigma)}}{\mathrm{Tr}(p_\sigma\Lambda_\rho\sigma+p_\rho\Lambda_\sigma\rho)}$ takes the same value for both $\Lambda_o$ and $\Lambda_m$. Also, for every element $\Lambda_m\in\mathcal{M}$, we have a $\Lambda_o\in \mathcal{O}$ such that $\displaystyle \frac{{\mathrm{Tr}(p_\rho\Lambda_\rho\rho+p_\sigma\Lambda_\sigma\sigma)}}{\mathrm{Tr}(p_\sigma\Lambda_\rho\sigma+p_\rho\Lambda_\sigma\rho)}$ takes same value for both $\Lambda_m$ and $\Lambda_0$. We have written for set $\mathcal{O}$ because we get a simple subspace constraint for obtaining the supremum in \eqref{eq: 2ndLastSymBase} and \eqref{eq: cricial2ndLastSymBase}, as will be clear from Theorem \ref{lem: conditionSym}.}

Now, from \cite[Equation (66)]{regula2022postselected}, we know that
\begin{align}
	\sup_{\Lambda\in\mathcal{O}} {\frac{{\tr(p_\rho\Lambda_\rho\rho+p_\sigma\Lambda_\sigma\sigma)}}{\tr(p_\sigma\Lambda_\rho\sigma+p_\rho\Lambda_\sigma\rho)}}=\Xi\left(p_\rho\rho,p_\sigma\sigma\right), \label{eq: cricial2ndLastSymBase}
\end{align}
where the function $\Xi\left(\nu_1,\nu_2\right)$ is Thompson metric \cite{thompson1963certain}, is given as 
\begin{align}
	\Xi\left(\nu_1,\nu_2\right)=\begin{cases}
		\max\{R_{\max}\lr{\nu_1,\nu_2},R_{\max}\lr{\nu_2,\nu_1}\}, &\text{ if $\Pi_{\nu_1}=\Pi_{\nu_2}$}, \\ \infty, & \text{ otherwise.}
	\end{cases} \label{eq: defXi}
\end{align}
Substituting from \eqref{eq: cricial2ndLastSymBase} in \eqref{eq: 2ndLastSymBase}, the minimum postselected symmetric error is given as \cite{regula2022postselected}
\begin{align}
	e_s(\rho,\sigma,p)&=\lr{ \Xi\left(p_\rho\rho,p_\sigma\sigma\right)+1}^{-1}.\label{eq: symBase}
\end{align}

Although the minimum error is given in \cite{regula2022postselected}, along with an example measurement achieving this error, it is not known how to design an arbitrary error-minimizing measurement. One of the key goals in this work is to characterize the complete set of measurements with postselected error to have the minimum value as given in \eqref{eq: symBase}, which we define next.\hfill\\\vspace{-10mm}
\begin{wrapfigure}{r}{0.4\textwidth}
	\vspace{5mm}
	\centering
	\includegraphics[width=0.3\textwidth]{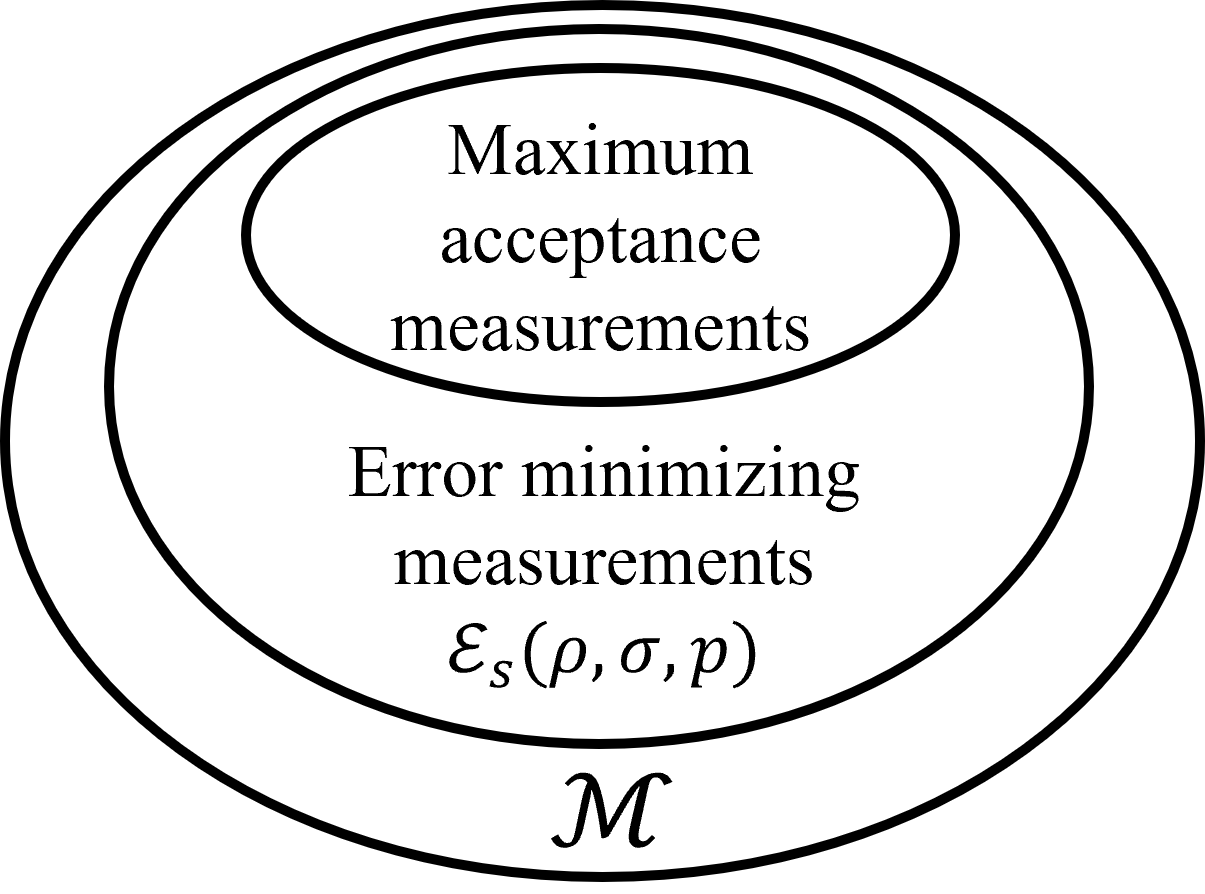}
	\caption{An illustration showing various sets of measurements. $\mathcal{M}$ is set of all measurements. $\mathcal{E}_s(\rho,\sigma,p)$ is set of all error-minimizing measurements. Maximum acceptance measurements comprise the set that achieves the maximum acceptance over $\mathcal{E}_s(\rho,\sigma,p)$.}
	\label{fig: setDiagramSymmetric}
\end{wrapfigure}
\begin{defn}[Set of error-minimizing measurements] The set of all error-minimizing measurements is defined as $$\mathcal{E}_s(\rho,\sigma,p)\isdefinedas\{\Lambda: e(\Lambda)=e_s(\rho,\sigma,p),\Lambda\in\mathcal{M}\}.$$
\end{defn}
 Within the set of error-minimizing measurements, there are measurements for which the acceptance for the state $\rho$ and $\sigma$ is maximized (see Fig. \ref{fig: setDiagramSymmetric}). The maximum acceptance is defined below.

\begin{defn}[The maximum acceptance] The maximum acceptance possible over the set $\mathcal{E}_s(\rho,\sigma,p)$ of all error-minimizing measurements is defined as
\begin{gather*}
	A_\rho^s=\max_{\Lambda\in \mathcal{E}_s(\rho,\sigma,p)} A_\rho(\Lambda)\ \text{ and } \ A_\sigma^s=\max_{\Lambda\in \mathcal{E}_s(\rho,\sigma,p)} A_\sigma(\Lambda)
\end{gather*}
for the states $\rho$ and $\sigma$ respectively. 
\end{defn}

{\Blue For the measurements that achieve the minimum error, we are interested in understanding how frequently a decision is made for each possible input state. Hence, we define acceptance of each state individually. The benefit of defining in this fashion is evident from Theorem \ref{thm: notSymmetricAcceptance}, where acceptance for one of the states turns out to be zero for error-minimizing measurements. This observation would not be evident otherwise.} 

{\Blue It can be observed from the expression of the minimum error in \eqref{eq: symBase} and the definition of $\Xi(\cdot,\cdot)$ in \eqref{eq: defXi} that a finite non-zero minimum error exists in the case when the supports of the two states are equal, i.e., $\Pi_\rho=\Pi_\sigma$. We already know from Theorem 2 in \cite{feng2004unambiguous} that such cases cannot be discriminated unambiguously. Error minimizing measurements for such cases ($\Pi_\rho=\Pi_\sigma$) are discussed in Section \ref{sec: rhoEqsigma}. In the remaining cases, the error vanishes, and the analysis reduces to USD. Such cases are discussed in Section \ref{sec: rhoNeqsigma}. The analysis is split this way because the two cases have different subspace constraints on the error-minimizing measurements, thus requiring separate treatments.}

\section{Case: $\rho$ and $\sigma$ have the same support i.e. $\Pi_\rho=\Pi_\sigma$} \label{sec: rhoEqsigma}

We now study PSD for the case where the supports of two states are not equal, and hence, USD is not feasible. We begin with the first subsection by finding the condition on the measurement operators that must be satisfied for the minimum error to be achieved. We start with a key step \eqref{eq: cricial2ndLastSymBase} in the derivation of \eqref{eq: symBase} from \cite{regula2022postselected}, building towards a direction for obtaining the error-minimizing condition on any measurement. We note that any measurement that achieves the supremum in \eqref{eq: cricial2ndLastSymBase} is error-minimizing. We focus on obtaining the condition for achieving this supremum in the proof of Theorem 1 given in Appendix A. Although the objective of Theorem \ref{lem: conditionSym} is to get the condition for an arbitrary measurement to be an error-minimizing measurement, we also obtained a novel derivation of \eqref{eq: symBase}, which is included as a result of the same theorem. Next, we present some key novel properties of the error-minimizing measurements based on the derived condition. Further, we write the set of error-minimization measurements in a parameterized form, providing a method to construct an arbitrary error-minimizing measurement. Then, we provide an example to show that the acceptance for an arbitrary error-minimizing measurement varies with these parameters, even though each one achieves the minimum error. This demonstrates the need to maximize acceptance for obtaining an optimal measurement. In the last subsection, we derive the expression for maximum achievable acceptance over the set of all error-minimizing measurements. 

\subsection{The set of all error minimizing measurements and construction of an arbitrary error-minimizing measurement}
{\Blue Recall that the minimum postselected error in \eqref{eq: symBase} is obtained in terms of $\Xi(\cdot,\cdot)$. From \eqref{eq: defXi}, note that, if $\Pi_\rho=\Pi_\sigma$, we get $\Xi(p_\rho\rho,p_\sigma\sigma) = \max(R_{\max}(p_\sigma\sigma,p_\rho\rho), R_{\max}(p_\sigma\sigma,p_\rho\rho))$. As we see later that the analysis depends on the relative value of the two terms $\max(R_{\max}(p_\sigma\sigma,p_\rho\rho)$, and $R_{\max}(p_\sigma\sigma,p_\rho\rho))$, we will study the following three cases separately.
\begin{enumerate}[label={$\bullet$ Case $\underline{\mathcal{C}\mathrm{\arabic*}}$: }]
	\setlength{\itemindent}{1.5cm}
	\item $R_{\max}(p_\rho\rho,p_\sigma\sigma)>R_{\max}(p_\sigma\sigma,p_\rho\rho)$,
	\item $R_{\max}(p_\rho\rho,p_\sigma\sigma)<R_{\max}(p_\sigma\sigma,p_\rho\rho)$, and 
	\item $R_{\max}(p_\rho\rho,p_\sigma\sigma)=R_{\max}(p_\sigma\sigma,p_\rho\rho)$.
\end{enumerate} 
In particular, the proof of Theorem \ref{lem: conditionSym} shows that we obtain distinct conditions for an error-minimizing measurement for each of these three cases.  Note that the value of two possible states and their prior probabilities determine which case the analysis would correspond to. }
The following theorem derives the condition on the measurement operators to achieve the minimum postselected symmetric error, along with providing a novel proof of deriving its minimum possible value.

\begin{theorem} \label{lem: conditionSym}
For any $\rho,\sigma\in\mathcal{D}(\hilbert)$, postselected symmetric error is lower bounded as
\begin{align}
	e(\Lambda)\geq e_s(\rho,\sigma,p)\ \forall \  \Lambda\in\mathcal{O}
\end{align}
 and the equality is obtained iff the measurement operators $\{\Lambda_\rho,\Lambda_\sigma\}$ satisfy the condition:
\begin{align}
	\begin{cases}
		\sigma^{1/2}\Lr\sigma^{1/2}\in\mathcal{P}(\Tau^{\max}),\quad\sigma^{1/2}\Ls\sigma^{1/2}=0,& \qquad\text{for Case $\underline{\mathcal{C}1}$,}\\ 
		\sigma^{1/2}\Lr\sigma^{1/2}=0,\qquad\qquad\sigma^{1/2}\Ls\sigma^{1/2}\in\mathcal{P}(\Tau^{\min}),&\qquad\text{for Case $\underline{\mathcal{C}2}$,}\\ 
		\sigma^{1/2}\Lr\sigma^{1/2}\in\mathcal{P}(\Tau^{\max}),\quad\sigma^{1/2}\Ls\sigma^{1/2}\in\mathcal{P}(\Tau^{\min}),&\qquad\text{for Case $\underline{\mathcal{C}3}$,}
	\end{cases} \label{eq: conditionSym}
\end{align}  
where $\Tau^{\max}\isdefinedas\Pi^{\max}_{\sigma^{-1/2}\rho\sigma^{-1/2}}$, and $\Tau^{\min}\isdefinedas\Pi^{\min}_{\sigma^{-1/2}\rho\sigma^{-1/2}}$.
\begin{proof}
	The proof is given in Appendix \ref{proof: conditionSym}.
\end{proof} 
\end{theorem}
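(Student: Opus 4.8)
The plan is to turn the operator optimization into a scalar linear‑fractional one. First I would note that $e(\Lambda)=\bigl(1+Q(\Lambda)\bigr)^{-1}$ where $Q(\Lambda)=\frac{\tr(p_\rho\Lr\rho+p_\sigma\Ls\sigma)}{\tr(p_\sigma\Lr\sigma+p_\rho\Ls\rho)}$, so the claimed bound $e(\Lambda)\geq e_s(\rho,\sigma,p)=\bigl(\Xi(p_\rho\rho,p_\sigma\sigma)+1\bigr)^{-1}$ is equivalent to $Q(\Lambda)\leq\Xi(p_\rho\rho,p_\sigma\sigma)$, with equality holding in both statements for exactly the same $\Lambda$. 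Hence it suffices to maximize $Q$ over $\mathcal{O}$ and to record precisely when the maximum is attained; as a by‑product this gives the promised self‑contained derivation of the value of $e_s$ in \eqref{eq: symBase}.

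Next, using $\Pi_\rho=\Pi_\sigma$ I would write $\rho=\sigma^{1/2}M\sigma^{1/2}$ with $M\isdefinedas\sigma^{-1/2}\rho\sigma^{-1/2}$, and substitute $X_\rho\isdefinedas\sigma^{1/2}\Lr\sigma^{1/2}$ and $X_\sigma\isdefinedas\sigma^{1/2}\Ls\sigma^{1/2}$, both positive semidefinite and supported on $\Pi_\sigma$. Since numerator and denominator of $Q$ see only $\rho$ and $\sigma$, the parts of $\Lr,\Ls$ outside the common support are invisible, so $X_\rho,X_\sigma$ range over all positive operators on that support. A short computation gives $\tr(\Lr\rho)=\tr(X_\rho M)$, $\tr(\Lr\sigma)=\tr(X_\rho)$ and the analogues for $\Ls$, whence $Q=\frac{p_\rho\tr(X_\rho M)+p_\sigma\tr(X_\sigma)}{p_\sigma\tr(X_\rho)+p_\rho\tr(X_\sigma M)}$. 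Diagonalizing $M$ with nonzero eigenvalues $\mu_1\geq\cdots\geq\mu_k>0$ and writing $a_i=\langle i|X_\rho|i\rangle\geq0$, $b_i=\langle i|X_\sigma|i\rangle\geq0$, only these diagonal entries enter, giving $Q=\frac{\sum_i(p_\rho\mu_i a_i+p_\sigma b_i)}{\sum_i(p_\sigma a_i+p_\rho\mu_i b_i)}$.

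I would then regard this as a ratio of two sums of $2k$ nonnegative terms, one per $a_i$ and one per $b_i$, and apply the mediant inequality $\frac{\sum_\alpha n_\alpha}{\sum_\alpha d_\alpha}\leq\max_\alpha\frac{n_\alpha}{d_\alpha}$: the $a_i$ term contributes ratio $p_\rho\mu_i/p_\sigma$ and the $b_i$ term contributes $p_\sigma/(p_\rho\mu_i)$. Therefore $Q\leq\max\{p_\rho\mu_1/p_\sigma,\ p_\sigma/(p_\rho\mu_k)\}$. Identifying $p_\rho\mu_1/p_\sigma=R_{\max}(p_\rho\rho,p_\sigma\sigma)$, and — using that on the common support the nonzero eigenvalues of $\sigma^{-1/2}\rho\sigma^{-1/2}$ and $\rho^{-1/2}\sigma\rho^{-1/2}$ are reciprocal (which is where $\Pi_\rho=\Pi_\sigma$ is essential) — $p_\sigma/(p_\rho\mu_k)=R_{\max}(p_\sigma\sigma,p_\rho\rho)$, the bound is exactly $\Xi(p_\rho\rho,p_\sigma\sigma)$ by \eqref{eq: defXi}.

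Finally, equality in the mediant inequality holds iff every term with positive denominator attains the maximal ratio. If $R_{\max}(p_\rho\rho,p_\sigma\sigma)>R_{\max}(p_\sigma\sigma,p_\rho\rho)$, the maximal ratio is achieved only by the $a_i$ with $\mu_i=\mu_1$, so all $b_i=0$ and $a_i=0$ whenever $\mu_i<\mu_1$; the reverse strict inequality forces the symmetric conclusion; and in the tie case both extremal families survive. To upgrade these diagonal conditions to the operator conditions in \eqref{eq: conditionSym}, I would use that a positive semidefinite operator with a vanishing diagonal entry has the whole corresponding row and column zero: thus $a_i=0$ off the top eigenspace gives $X_\rho=\Tau^{\max}X_\rho\Tau^{\max}\in\cp(\Tau^{\max})$, $b_i=0$ off the bottom eigenspace gives $X_\sigma\in\cp(\Tau^{\min})$, and an all‑zero diagonal gives $X_\rho=0$ (resp.\ $X_\sigma=0$), reproducing $\mathcal{C}1$, $\mathcal{C}2$, $\mathcal{C}3$ verbatim. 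The main obstacle I anticipate is the support bookkeeping: verifying that the substitution $X=\sigma^{1/2}\Lambda\sigma^{1/2}$ loses no generality and correctly matching $\mu_k$ to $R_{\max}(p_\sigma\sigma,p_\rho\rho)$ through the reciprocal‑eigenvalue identity; once these are in place, the optimization and the equality analysis are routine.
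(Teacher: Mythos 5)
Your proposal is correct, and it shares the paper's opening moves — writing $e(\Lambda)=(1+Q(\Lambda))^{-1}$, conjugating to $X_\rho=\sigma^{1/2}\Lr\sigma^{1/2}$, $X_\sigma=\sigma^{1/2}\Ls\sigma^{1/2}$, and reducing everything to the spectral data of $M=\sigma^{-1/2}\rho\sigma^{-1/2}$ — but the core optimization is organized differently. The paper (Appendix B) splits each $X$ into a trace $c_\rho$ or $c_\sigma$ times a unit-trace state, invokes two separately proved operator lemmas (Lemmas B.3 and B.6: $\tr(\zeta\nu)\le\|\nu\|_{\infty}$ for states $\zeta$ with equality iff $\zeta\in\mathcal{S}(\Pi_\nu^{\max})$, and the dual bound with $\Pi_\nu^{\min}$) to bound the two trace terms, and then runs a second, scalar maximization over $(c_\rho,c_\sigma)$ to produce the three cases. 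You collapse both stages into a single mediant inequality on the $2k$ diagonal entries $a_i,b_i$, and recover the operator-level equality conditions from the fact that a PSD operator with a vanishing diagonal entry annihilates the corresponding vector. The underlying mathematics is essentially the same (the paper's lemmas are themselves proved by eigendecomposition), but your version is more self-contained and makes the case split transparent: the three cases are exactly the three ways the maximal ratio among $\{p_\rho\mu_i/p_\sigma\}\cup\{p_\sigma/(p_\rho\mu_i)\}$ can be attained. What the paper's two-stage organization buys is reusability — Lemmas B.3 and B.6 and the $(c_\rho,c_\sigma)$ parameterization are recycled in Theorems \ref{lem: setSym} and \ref{lem: genAccSym} — whereas your argument is leaner for this theorem alone. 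Two points you flagged as "support bookkeeping" do need the care you anticipate: the reciprocal-eigenvalue identity $\|\rho^{-1/2}\sigma\rho^{-1/2}\|_{\infty}=\|\sigma^{-1/2}\rho\sigma^{-1/2}\|_{\infty,0}^{-1}$ genuinely requires $\Pi_\rho=\Pi_\sigma$ (it follows from $\lr{\rho^{-1/2}\sigma\rho^{-1/2}}\lr{\rho^{1/2}\sigma^{-1}\rho^{1/2}}=\Pi_\rho$ together with the equality of nonzero spectra of $AB$ and $BA$), and the degenerate case $X_\rho=X_\sigma=0$, for which $Q$ is $0/0$, must be excluded when stating the equality condition — an issue the paper's own statement also leaves implicit.
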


Recall from \eqref{eq: 2ndLastSymBase} that the minimum value of $e(\Lambda)$ requires finding  $\ds\sup_{\Lambda\in\mathcal{O}}\frac{{\tr(p_\rho\Lambda_\rho\rho+p_\sigma\Lambda_\sigma\sigma)}}{\tr(p_\sigma\Lambda_\rho\sigma+p_\rho\Lambda_\sigma\rho)}$. We substitute $\max$ in the place of $\sup$, as the maximum is shown to be achievable in \cite{regula2022postselected}. We can express the maximum as
\begin{align}
	\max_{\Lambda\in\mathcal{O}}\frac{{\tr(p_\rho\Lambda_\rho\rho+p_\sigma\Lambda_\sigma\sigma)}}{\tr(p_\sigma\Lambda_\rho\sigma+p_\rho\Lambda_\sigma\rho)}&=\max\lr{\max_{\Lr\in\mathcal{P}(\hilbert)}\frac{{\tr(p_\rho\Lambda_\rho\rho)}}{\tr(p_\sigma\Lambda_\rho\sigma)},\max_{\Ls\in\mathcal{P}(\hilbert)}\frac{{\tr(p_\sigma\Lambda_\sigma\sigma)}}{\tr(p_\rho\Lambda_\sigma\rho)}} \label{eq: convRatio}\\
	&=\max(R_{\max}(p_\rho\rho,p_\sigma\sigma),R_{\max}(p_\sigma\sigma,p_\rho\rho)).\nonumber
\end{align}
The first case in \eqref{eq: conditionSym} corresponds to ${R_{\max}(p_\rho\rho,p_\sigma\sigma)}>{ R_{\max}(p_\sigma\sigma,p_\rho\rho)}$, and it signifies the condition that $\max_{\Lr\in\mathcal{P}(\hilbert)}\frac{p_\rho\tr\lr{\Lr\rho}}{p_\sigma\tr\lr{\Lr\sigma}}>\max_{\Ls\in\mathcal{P}(\hilbert)}\frac{p_\sigma\tr\lr{\Ls\sigma}}{p_\rho\tr\lr{\Ls\rho}}$. In this case, $\Lr$ is taken such that $\tr\lr{\Ls\rho}=\tr\lr{\Ls\sigma}=0$ and $\ds \max_{\Lambda\in\mathcal{O}}\frac{{\tr(p_\rho\Lambda_\rho\rho+p_\sigma\Lambda_\sigma\sigma)}}{\tr(p_\sigma\Lambda_\rho\sigma+p_\rho\Lambda_\sigma\rho)}=\max_{\Lr\in\mathcal{P}(\hilbert)}\frac{{\tr(p_\rho\Lambda_\rho\rho)}}{\tr(p_\sigma\Lambda_\rho\sigma)}=R_{\max}(p_\rho\rho,p_\sigma\sigma)$.
Similarly, second case is corresponds to ${R_{\max}(p_\rho\rho,p_\sigma\sigma)}<{ R_{\max}(p_\sigma\sigma,p_\rho\rho)}$, which signifies the condition that\\ $ \max_{\Lr\in\mathcal{P}(\hilbert)}\frac{p_\rho\tr\lr{\Lr\rho}}{p_\sigma\tr\lr{\Lr\sigma}}<\max_{\Ls\in\mathcal{P}(\hilbert)}\frac{p_\sigma\tr\lr{\Ls\sigma}}{p_\rho\tr\lr{\Ls\rho}}$. The third case corresponds to ${R_{\max}(p_\rho\rho,p_\sigma\sigma)}={ R_{\max}(p_\sigma\sigma,p_\rho\rho)}$, and it signifies when $\ds \max_{\Lr\in\mathcal{P}(\hilbert)}\frac{p_\rho\tr\lr{\Lr\rho}}{p_\sigma\tr\lr{\Lr\sigma}}=\max_{\Ls\in\mathcal{P}(\hilbert)}\frac{p_\sigma\tr\lr{\Ls\sigma}}{p_\rho\tr\lr{\Ls\rho}}$, either can be chosen. We will have to deal with the cases separately. %and hence need to represent them compactly. We represent them as $\mathcal{C}1, \mathcal{C}2,$ and $\mathcal{C}3$ respectively for the remaining of this section, as mentioned in \eqref{eq: conditionSym}.}

\begin{remark}[\textbf{Notes on $\Tau^{\max}$ and $\Tau^{\min}$}] \label{rem: notesOnT}
	$\Tau^{\max}$ denotes the projection operator onto the subspace where $\rho$ is largest in comparison to $\sigma$, in the sense that any vector $|\psi\rangle$ in this subspace, we get $\frac{\langle\psi|\rho|\psi\rangle}{\langle\psi|\sigma|\psi\rangle}=R_{\max}(\rho,\sigma)$, which is the highest possible value it can have. Similarly, for the subspace corresponding to the projection operator $\Tau^{\min}$, this ratio $\frac{\langle\psi|\rho|\psi\rangle}{\langle\psi|\sigma|\psi\rangle}=R_{\min}(\rho,\sigma)$, which is also the minimum possible value it can achieve. In the classical scenario, i.e., while discriminating a pair of probability distributions, corresponding to some random variable, on taking density operators corresponding to probabilities to be discriminated, $\Tau^{\max}$ and $\Tau^{\min}$ would respectively correspond to choosing the realization of the random variable for which the ratio of the two probabilities are maximum and minimum.
\end{remark}

\begin{remark}[\textbf{Notes on $\mathcal{P}(\Tau^{\max})$ and $\mathcal{P}(\Tau^{\min})$}] \label{rem: notesOnP}
	Now, $\sigma^{1/2}\Lr\sigma^{1/2}\in\mathcal{P}(\Tau^{\max})$ means that all the eigenvectors of $\sigma^{1/2}\Lr\sigma^{1/2}$ corresponding to positive eigenvalues lie in the subspace corresponding to the operator $\Tau^{\max}$. Similarly, $\sigma^{1/2}\Ls\sigma^{1/2}\in\mathcal{P}(\Tau^{\min})$ means that all the eigenvectors of $\sigma^{1/2}\Ls\sigma^{1/2}$ corresponding to positive eigenvalues lie in the subspace corresponding to the operator $\Tau^{\min}$. So, Theorem \ref{lem: conditionSym} states that, by restricting the subspace where eigenvector of the operators $\sigma^{1/2}\Lr\sigma^{1/2}$ and $\sigma^{1/2}\Ls\sigma^{1/2}$ lie, one can obtain measurements that achieve the minimum postselected symmetric error. 
\end{remark}

Now note that the condition $R_{\max}(p_\rho\rho,p_\sigma\sigma)=R_{\max}(p_\sigma\sigma,p_\rho\rho)$ in $\mathcal{C}3$ is equivalent to
\begin{align}
	R_{\max}(p_\rho\rho,p_\sigma\sigma)&=R_{\max}(p_\sigma\sigma,p_\rho\rho)\\&\Leftrightarrow\frac{p_\rho}{p_\sigma}R_{\max}(\rho,\sigma) = \frac{p_\sigma}{p_\rho}R_{\max}(\sigma,\rho)\\&\Leftrightarrow\frac{p_\rho}{p_\sigma} = \frac{\sqrt{R_{\max}(\sigma,\rho)}}{\sqrt{R_{\max}(\rho,\sigma)}}. \label{eq: ratioFixed}
\end{align}
\eqref{eq: ratioFixed} along with $p_\rho+p_\sigma=1$ gives $p_\rho=p_\rho^*$ and $p_\sigma=p_\sigma^*$ where
\begin{align}
	p_\rho^*= \frac{\sqrt{R_{\max}(\sigma,\rho)}}{\sqrt{R_{\max}(\rho,\sigma)}+\sqrt{R_{\max}(\sigma,\rho)}} \text{ and } p_\sigma^*=\frac{\sqrt{R_{\max}(\rho,\sigma)}}{\sqrt{R_{\max}(\rho,\sigma)}+\sqrt{R_{\max}(\sigma,\rho)}}. \label{eq: star_p}
\end{align} 

Further, we can show that the condition $R_{\max}(p_\rho\rho,p_\sigma\sigma)>R_{\max}(p_\sigma\sigma,p_\rho\rho)$ is equivalent to $p_\rho>p_\rho^*$ (i.e., $p_\sigma<p_\sigma^*$). Hence, we can write the three cases in Theorem 1 as 
\begin{itemize}
	\item {$\underline{\mathcal{C}1}: p_\rho>p_\rho^*$ (i.e., $p_\sigma<p_\sigma^*$)},
	\item {$\underline{\mathcal{C}2}: p_\rho<p_\rho^*$ (i.e., $p_\sigma>p_\sigma^*$)},
	\item {$\underline{\mathcal{C}3}: p_\rho=p_\rho^*$ (i.e., $p_\sigma=p_\sigma^*$)}.
\end{itemize}

\begin{corollary} \label{cor: alwaysOnlyOne}
	If $p,\rho,\sigma$ satisfy $R_{\max}(p_\rho\rho,p_\sigma\sigma)\neq R_{\max}(p_\sigma\sigma,p_\rho\rho)$, or equivalently $p_\rho\ne p^*_\rho$, then for any $\Lambda$ such that $e(\Lambda)=e_s(\rho,\sigma,p)$, one of the following two holds:
	\begin{enumerate}
		\item The measurement never detects $\sigma$ that is  {$\tr\lr{\Lambda_\sigma\rho}=\tr\lr{\Lambda_\sigma\sigma}=0$}. The measurement outcome is either $\rho$ or the third outcome, i.e., rejecting the test(see Fig. \ref{fig:mostlyOne} (a)). 
		\item The measurement never detects $\rho$ that is  {$\tr\lr{\Lambda_\rho\rho}=\tr\lr{\Lambda_\rho\sigma}=0$}. The measurement outcome is either $\sigma$ or the third outcome, i.e., rejecting the test (see Fig. \ref{fig:mostlyOne} (b)).
	\end{enumerate}
So, for $p,\rho,\sigma$ such that $R_{\max}(p_\rho\rho,p_\sigma\sigma)\neq R_{\max}(p_\sigma\sigma,p_\rho\rho)$, any measurement that minimizes postselected symmetric error, either never decides in favor of $\rho$ or never makes a decision in favor of $\sigma$. Fig. \ref{fig:mostlyOne} (a)  and Fig. \ref{fig:mostlyOne} (b) depict the two possible measurements in this case.
	\begin{proof}
	From the Theorem \ref{lem: conditionSym}, when $R_{\max}(p_\rho\rho,p_\sigma\sigma)\neq R_{\max}(p_\sigma\sigma,p_\rho\rho)$ we get 
	\begin{align} 
		e(\Lambda)=e_s(\rho,\sigma,p)\Rightarrow
		\begin{cases}
			\sigma^{1/2}\Lambda_{\sigma}\sigma^{1/2}=0\Rightarrow\tr\lr{\Lambda_\sigma\sigma}=0,\text{ and }\tr\lr{\Lambda_\sigma\rho}=\tr\lr{\Lambda_\sigma\Pi_\rho\rho\Pi_\rho}\\
			\qquad=\tr\lr{\Lambda_\sigma\Pi_\sigma\rho\Pi_\sigma}=\tr\lr{\sigma^{1/2}\Lambda_{\sigma}\sigma^{1/2}\sigma^{-1/2}\rho\sigma^{-1/2}}=0,
			 \\\text{ or }
			\\\sigma^{1/2}\Lambda_{\rho}\sigma^{1/2}=0\Rightarrow\tr\lr{\Lambda_\rho\sigma}=0, \text{ and }\tr\lr{\Lambda_\rho\rho}=\tr\lr{\Lambda_\rho\Pi_\rho\rho\Pi_\rho}\\
			\qquad=\tr\lr{\Lambda_\rho\Pi_\sigma\rho\Pi_\sigma}=\tr\lr{\sigma^{1/2}\Lambda_{\rho}\sigma^{1/2}\sigma^{-1/2}\rho\sigma^{-1/2}}=0.
		\end{cases}
	\end{align}
So, one of the two cases must hold for any error-minimizing measurement.
	\end{proof}
\end{corollary}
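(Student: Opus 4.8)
The plan is to deduce the corollary directly from Theorem \ref{lem: conditionSym}, treating the theorem's operator-level optimality conditions as a black box and merely converting them into the claimed trace identities; no fresh optimization argument is needed.

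First I would use the corollary's hypothesis $R_{\max}(p_\rho\rho,p_\sigma\sigma)\neq R_{\max}(p_\sigma\sigma,p_\rho\rho)$ to exclude case $\mathcal{C}3$, so that Theorem \ref{lem: conditionSym} forces any error-minimizer $\Lambda$ into exactly one of $\mathcal{C}1$ or $\mathcal{C}2$. From $\mathcal{C}1$ I extract the single relevant fact $\sigma^{1/2}\Lambda_\sigma\sigma^{1/2}=0$, and from $\mathcal{C}2$ the mirror fact $\sigma^{1/2}\Lambda_\rho\sigma^{1/2}=0$. Since the two cases are interchanged by swapping the roles of $\rho$ and $\sigma$, I would carry out $\mathcal{C}1$ in full and obtain $\mathcal{C}2$ by the symmetric computation.

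Within $\mathcal{C}1$ the easy half is immediate: by cyclicity of the trace, $\tr(\Lambda_\sigma\sigma)=\tr(\sigma^{1/2}\Lambda_\sigma\sigma^{1/2})=0$. The substantive step is $\tr(\Lambda_\sigma\rho)=0$, for which I must rewrite $\rho$ in a form that exposes the vanishing factor $\sigma^{1/2}\Lambda_\sigma\sigma^{1/2}$. Here I would invoke the standing assumption of this section, $\Pi_\rho=\Pi_\sigma$: combining it with the projector identity $\Pi_\rho\rho\Pi_\rho=\rho$ yields $\rho=\Pi_\sigma\rho\Pi_\sigma$, and since $\sigma^{1/2}$ and $\sigma^{-1/2}$ are commuting functions of $\sigma$ with $\Pi_\sigma=\sigma^{1/2}\sigma^{-1/2}=\sigma^{-1/2}\sigma^{1/2}$, this factors as $\rho=\sigma^{1/2}(\sigma^{-1/2}\rho\sigma^{-1/2})\sigma^{1/2}$. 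Substituting and applying cyclicity once more gives $\tr(\Lambda_\sigma\rho)=\tr(\sigma^{1/2}\Lambda_\sigma\sigma^{1/2}\cdot\sigma^{-1/2}\rho\sigma^{-1/2})=0$.

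The only point requiring care is the legitimacy of this factorization through the generalized inverse, and this is exactly where I expect the main (though mild) obstacle to sit: the assumption $\Pi_\rho=\Pi_\sigma$ is what guarantees that $\rho$ is supported inside the range of $\sigma$, so $\sigma^{-1/2}\rho\sigma^{-1/2}$ is a genuine positive operator and the generalized inverse never acts outside the support on which $\sigma^{-1}\sigma=\Pi_\sigma$ holds. With $\tr(\Lambda_\sigma\rho)=\tr(\Lambda_\sigma\sigma)=0$ established under $\mathcal{C}1$, and the analogous $\tr(\Lambda_\rho\rho)=\tr(\Lambda_\rho\sigma)=0$ falling out of $\mathcal{C}2$ by the same argument with $\rho$ and $\sigma$ interchanged, exactly one of the two alternatives in the corollary holds, which completes the proof.
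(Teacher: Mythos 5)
Your proposal is correct and follows essentially the same route as the paper's proof: exclude case $\mathcal{C}3$ via the hypothesis, then convert the vanishing conditions $\sigma^{1/2}\Lambda_\sigma\sigma^{1/2}=0$ (case $\mathcal{C}1$) and $\sigma^{1/2}\Lambda_\rho\sigma^{1/2}=0$ (case $\mathcal{C}2$) into the claimed trace identities using cyclicity and $\rho=\Pi_\rho\rho\Pi_\rho=\Pi_\sigma\rho\Pi_\sigma=\sigma^{1/2}\left(\sigma^{-1/2}\rho\sigma^{-1/2}\right)\sigma^{1/2}$, exactly as the paper does. One minor wording point: the second case is not literally obtained by swapping $\rho$ and $\sigma$ in the first (the conjugating state remains $\sigma$ in both of the theorem's conditions); it is the identical computation with $\Lambda_\sigma$ replaced by $\Lambda_\rho$, which is precisely what your extracted fact $\sigma^{1/2}\Lambda_\rho\sigma^{1/2}=0$ supports, so the argument goes through unchanged.
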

%
%	
%\begin{figure}[h]
%	\centering
%	\begin{subfigure}[b]{0.25\textwidth}
%		\centering
%		\includegraphics[scale=0.7]{../fig/alwaysOnlyOneRho.png}
%		\caption{$\ $}
%		\label{fig: alwaysOnlyOneRho}
%	\end{subfigure}
%	\hspace{15mm}
%	\begin{subfigure}[b]{0.25\textwidth}
%		\centering
%		\includegraphics[scale=0.7]{../fig/alwaysOnlyOneSigma.png}
%		\caption{$\ $}
%		\label{fig: alwaysOnlyOneSigma}
%	\end{subfigure}
%	\caption{The figure shows the two possible outcomes when $R_{\max}(p_\rho\rho,p_\sigma\sigma)\neq R_{\max}(p_\sigma\sigma,p_\rho\rho)$ for any error minimizing measurement where (a) $\sigma$ never declared, (b) $\rho$ never declared.}
%	\label{fig: alwaysOnlyOne}
%\end{figure}

\begin{remark} \label{rem: betterMetric}
	The statement in the Corollary \ref{cor: alwaysOnlyOne} says that a measurement that achieves the minimum postselected error can only detect one of the state. The observation puts a very fundamental question, ``Does a lower postselected error indicate better hypothesis testing when prior probability is known?" The observation suggests a negative answer. So, the next question is, ``What is a better metric to assess the quality of hypothesis testing when prior probability is known?" We will not address it in this work and leave it as an open question.
\end{remark}
	\begin{figure}[h]
	\centering
	\begin{subfigure}[b]{0.25\textwidth}
		\centering
		\includegraphics[scale=0.6]{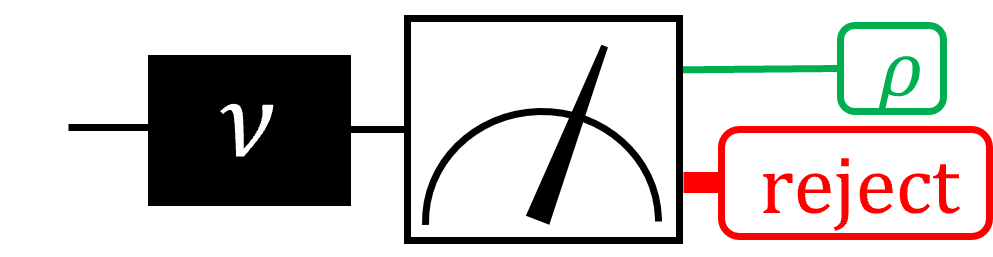}
		\caption{$p_\rho>p_\rho^*$ and $p_\sigma<p_\sigma^*$}
		\label{fig:mostlyOne1}
	\end{subfigure}
	\hspace{2mm}
	\begin{subfigure}[b]{0.25\textwidth}
		\centering
		\includegraphics[scale=0.6]{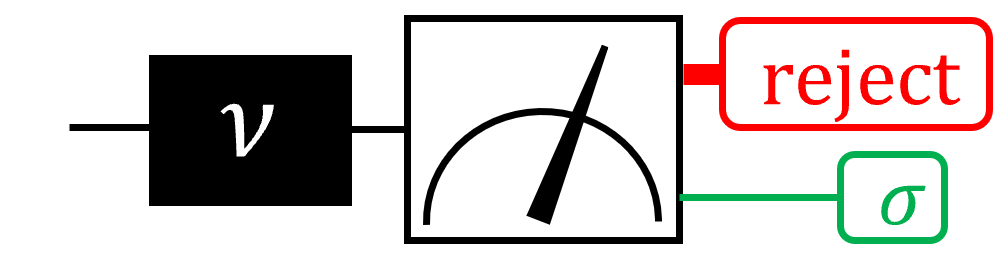}
		\caption{$p_\rho<p_\rho^*$ and $p_\sigma>p_\sigma^*$}
		\label{fig:mostlyOne2}
	\end{subfigure}
	\hspace{2mm}
	\begin{subfigure}[b]{0.25\textwidth}
		\centering
		\includegraphics[scale=0.6]{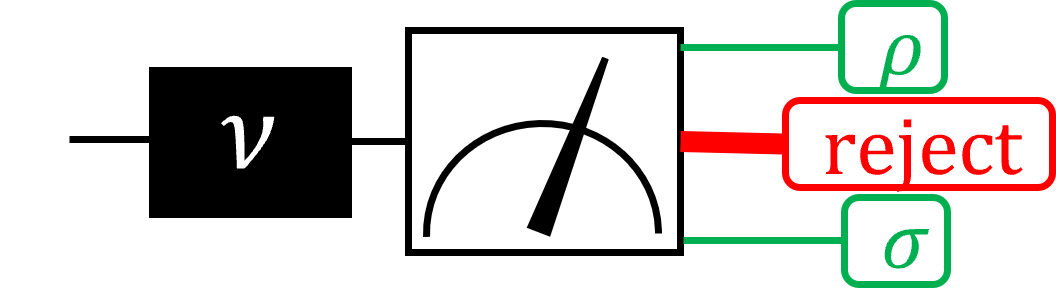}
		\caption{$p_\rho=p_\rho^*$ and $p_\sigma=p_\sigma^*$}
		\label{fig:mostlyOne3}
	\end{subfigure}
	\caption{An illustration showing possible outcomes for different prior probability values of $p_\rho$ and $p_\sigma$. The figure shows the possible outcomes when $p_\rho\ne p_\rho^*$ (equivalently $p_\sigma\ne p_\sigma^*$) for any error minimizing measurement where (a) $\sigma$ never declared, (b) $\rho$ never declared. Further, (c) shows that if $p_\rho = p_\rho^*$ (equivalently $p_\sigma = p_\sigma^*$), all the outcomes are possible.}
	\label{fig:mostlyOne}
\end{figure}
\begin{corollary} \label{cor: equalityCondition}
	A measurement with non-zero probabilities of detecting both $\rho$ and $\sigma$ is possible only if the prior probabilities $p_\rho=p_\rho^*$ and $p_\sigma=p_\sigma^*$. Further
	\begin{itemize}
		\item {$\underline{\mathcal{C}1}: p_\rho>p_\rho^*$ (i.e., $p_\sigma<p_\sigma^*$)}: Error-minimizing measurement never detects $\sigma$. It either detects $\rho$ or rejects the test. See Fig. \ref{fig:mostlyOne}(a).
		\item {$\underline{\mathcal{C}2}: p_\rho<p_\rho^*$ (i.e., $p_\sigma>p_\sigma^*$)}: Error-minimizing measurement never detects $\rho$. It either detects $\sigma$ or rejects the test. See Fig. \ref{fig:mostlyOne}(b).
		\item {$\underline{\mathcal{C}3}: p_\rho=p_\rho^*$ (i.e., $p_\sigma=p_\sigma^*$)}: There are error-minimizing measurements detecting either/both the hypotheses. See Fig. \ref{fig:mostlyOne}(c).
	\end{itemize}
\end{corollary}
\noindent
The observation says that if the prior probability of a state being $\rho$ is high, $\sigma$ is never detected for any error-minimizing measurement. Similarly, if the prior probability of the state being $\sigma$ is high, $\rho$ is never detected. So, an error-minimizing measurement will detect the high probability state or declare nothing. Also, a particular value of the pair $(p_\rho,p_\sigma)$ exists such that an error-minimizing measurement can detect both states.

For better exposition, we now give two examples. The first example shows how Theorem 1 can be used to find the error-minimizing measurements. We also observe that there are measurements, despite having the minimum postselected symmetric error, rejecting both hypotheses with a probability close to $1$. The observation shows that all error-minimizing measurements are not useful and there is a need to find those that also maximize acceptance.

%	\subsection{An illustrative example: Not all the error minimizing measurements have same acceptance}

Here, we take a Hilbert's space with it's basis $\{|0\rangle,|1\rangle,|2\rangle\}$. We consider both $\rho$ and $\sigma$ to be a mixture of 3 orthogonal pure states $\{|0\rangle,|1\rangle,|2\rangle\}$. $\rho$ has a higher probability to be in $|0\rangle$ than the probability of $\sigma$ to be in $|0\rangle$. Further, $\rho$ has a smaller probability to be in $|1\rangle$ than $\sigma$. $|2\rangle$ constitutes $\rho$ and $\sigma$ with equal probability. In relative terms, observe that $|0\rangle$ is relatively prominent in the state $\rho$, $|1\rangle$ being prominent in the state $\sigma$, while $|2\rangle$ has an equal probability of both. %We obtain that the measurements that minimize error declare the unknown state as $\rho$ by measuring $|0\rangle\langle0|$ and $\sigma$ is never detected. We will observe that, for the error-minimizing measurements, the value of acceptances for states $\rho$ and $\sigma$ vary depending on the choice of measurement. For an error-minimizing measurement, the acceptance can be very small, thus not declaring any state with a probability close to 1. We suggest that maximizing acceptance should also be considered while designing an error-minimizing measurement.

\begin{exmp} \label{exmp: first} Let $\{|0\rangle,|1\rangle,|2\rangle\}$ be the basis of a Hilbert’s space. Let us consider that two states% Take $p_\rho=1/2,p_\sigma=1/2$,
	 $$\rho=\frac{\mu}{2}|0\rangle\langle0|+\frac{\mu}{2}|1\rangle\langle1|+(1-\mu)|2\rangle\langle2|\text{ and } \sigma=\frac{\mu}{4}|0\rangle\langle0|+\frac{3\mu}{4}|1\rangle\langle1|+(1-\mu)|2\rangle\langle2|$$ with prior probabilities $p_\rho=1/2,p_\sigma=1/2$. Note that, here $\Pi_\rho=\Pi_\sigma=|0\rangle\langle0|+|1\rangle\langle1|+|2\rangle\langle2|=\I$, so $\I-\Pi_\sigma=0$. Now,
	\begin{align}
		\sigma^{-1/2}\rho\sigma^{-1/2}=2|0\rangle\langle0|+\frac{2}{3}|1\rangle\langle1|+|2\rangle\langle2| \text{ and }\rho^{-1/2}\sigma\rho^{-1/2}=\frac{1}{2}|0\rangle\langle0|+\frac{3}{2}|1\rangle\langle1|+|2\rangle\langle2|.
	\end{align}
	Observe that $R_{\max}(p_\rho\rho,p_\sigma\sigma)=R_{\max}(\rho,\sigma)=2$, $R_{\max}(p_\sigma\sigma,p_\rho\rho)=R_{\max}(\sigma,\rho)=3/2$, $$ \Tau^{\max}=\Pi^{\max}_{\sigma^{-1/2}\rho\sigma^{-1/2}}=|0\rangle\langle0|.$$ Thus, the minimum postselected symmetric error $e_s(\rho,\sigma,p)=1/3$. Note that $R_{\max}(p_\rho\rho,p_\sigma\sigma)>R_{\max}(p_\sigma\sigma,p_\rho\rho)$, so $\mathcal{C}1$ is applicable here. Hence, from Theorem \ref{lem: conditionSym}, we have
	\begin{align*}
		\sigma^{1/2}\Lr\sigma^{1/2}=c|0\rangle\langle0|, \text{ and } \sigma^{1/2}\Ls\sigma^{1/2}=0.
	\end{align*}
	So $\ds \Lr=c\frac{|0\rangle\langle0|}{\mu/4}=\frac{4c}{\mu}|0\rangle\langle0|$, $\Ls=0$. So, the set of POVM characterizing the measurements that achieve the minimum postselected symmetric error is given by
	\begin{align}
		\mathcal{E}_s(\rho,\sigma,p)&=\left\{\{\Lambda_{\rho},\Lambda_{\sigma},\I-\Lambda_{\rho}-\Lambda_{\sigma}\}:\Lr=\frac{4c}{\mu}|0\rangle\langle0|,\Ls=0,c\leq\frac{\mu}{4}  \right\}\nonumber\\
		&=\left\{\{\Lambda_{\rho},\Lambda_{\sigma},\I-\Lambda_{\rho}-\Lambda_{\sigma}\}:0<c \leq 1, \Lambda_{\rho}= c|0\rangle\langle0|, \Lambda_{\sigma}= 0\right\}.
	\end{align}
	So, an arbitrary error-minimizing measurement can be parameterized as $\{c|0\rangle\langle0|,0,\I-c|0\rangle\langle0|\}$ for some $0<c\leq 1$. For these measurements, $\tr\lr{\Lr\rho}=c\frac{\mu}{2}, \tr\lr{\Lr\sigma}=c \frac{\mu}{4}, \tr\lr{\Ls\rho}=0, \tr\lr{\Ls\sigma}=0$. 

Note that, with this POVM, a conclusive decision is being made for only $c\mu/2$ and $c\mu/4$ fraction of cases when the unknown state is $\rho$ and $\sigma$ respectively. If $c$ is small, the performance of the postselected test would be the same, i.e., $e_s(\rho,\sigma,p)=1/3$; however, we will be declaring inconclusive results most of the time, so the measurement would not be useful. We need some metrics to characterize the usefulness of postselected measurement. We use the metric \textit{acceptance} to characterize the usefulness.
\end{exmp}
In the above example, we saw that  all error-minimizing measurements declare the unknown state as $\rho$ if the outcome is $|0\rangle\langle0|$, otherwise they reject the test. Therefore, $\sigma$ is never detected. We observe that, for the error-minimizing measurements, the value of acceptances for states $\rho$ and $\sigma$ vary depending on the choice of measurement, in particular, on the choice of the parameter $c$. For an error-minimizing measurement, the acceptance can be very small, which means that the measurement makes decision with a probability close to $0$. The example suggests that maximizing acceptance should also be considered while designing an error-minimizing measurement. The second example is designed to explain the derived constraint and illustrates various observations made in this subsection about error-minimizing measurements.

\begin{example2} \label{exmp: q1}
	Consider the Hilbert's space $\hilbert_2$ with the basis $\{|0\rangle,|1\rangle\}$. Take the states as $$\rho=\frac{3}{4}|+\rangle\langle +|+\frac{1}{4}|-\rangle\langle -|=\frac{1}{2}|0\rangle\langle 0|+\frac{1}{4}|1\rangle\langle 0|+\frac{1}{4}|0\rangle\langle 1|+\frac{1}{2}|1\rangle\langle 1|,\quad \text{and } \sigma=\frac{3}{4}|0\rangle\langle 0|+\frac{1}{4}|1\rangle\langle 1|.$$
	Note that, in this example, $\Pi_\rho=\Pi_\sigma=\I$. We obtain 
	$$\sigma^{-1/2}\rho\sigma^{-1/2}=\frac{2}{3}|0\rangle\langle 0|+\frac{1}{\sqrt{3}}|1\rangle\langle 0|+\frac{1}{\sqrt{3}}|0\rangle\langle 1|+2|1\rangle\langle 1|.$$
	We get eigenvalues of $\sigma^{-1/2}\rho\sigma^{-1/2}$ as $\ds \frac{4\pm\sqrt{7}}{3}$ and corresponding eigenvectors as $\ds \frac{{\sqrt{3}|0\rangle+(2\pm\sqrt{7})|1\rangle}}{\sqrt{14\pm4\sqrt{7}}}.$
	Thus $R_{\max}(\rho,\sigma)=\frac{4+\sqrt{7}}{3}, R_{\min}(\rho,\sigma)=\frac{4-\sqrt{7}}{3}$. Similarly, $R_{\max}(\sigma,\rho)=\frac{4+\sqrt{7}}{3}.$  So, we get $$p_\rho^*=\frac{\sqrt{R_{\max}(\sigma,\rho)}}{\sqrt{R_{\max}(\rho,\sigma)}+\sqrt{R_{\max}(\sigma,\rho)}}=\frac{1}{2},\text{ and similarly }p_\sigma^*=\frac{1}{2}.$$ The projection operators onto the eigenspace corresponding to the maximum and minimum eigenvalue are respectively given by
	\begin{align*}
		\Tau^{\max}&=\frac{1}{14+4\sqrt{7}}\lr{\sqrt{3}|0\rangle+(2+\sqrt{7})|1\rangle}\lr{\sqrt{3}\langle0|+(2+\sqrt{7})\langle1|} \text{ and }\\
		\Tau^{\min}&=\frac{1}{14-4\sqrt{7}}\lr{\sqrt{3}|0\rangle+(2-\sqrt{7})|1\rangle}\lr{\sqrt{3}\langle0|+(2-\sqrt{7})\langle1|}.
	\end{align*}

	As $p_\rho^*=p_\sigma^*=1/2$, using Theorem \ref{lem: conditionSym}, we obtain the minimum error and error minimizing condition as the following.
	\begin{enumerate} 
		\item If $p_\rho>1/2: e_s(\rho,\sigma,p)=\lr{1+\frac{p_\rho}{p_\sigma}\frac{4+\sqrt{7}}{3}}^{-1}$ and error minimizing measurement should satisfy the constraint that $\sigma^{1/2}\Lr\sigma^{1/2}\in\mathcal{P}(\Tau^{\max})$, $\sigma^{1/2}\Ls\sigma^{1/2}=0$.
		\item If $p_\rho<1/2: e_s(\rho,\sigma,p)=\lr{1+\frac{p_\sigma}{p_\rho}\frac{4+\sqrt{7}}{3}}^{-1}$ and error minimizing measurement should satisfy the constraint that $\sigma^{1/2}\Lr\sigma^{1/2}=0$, $\sigma^{1/2}\Ls\sigma^{1/2}\in\mathcal{P}(\Tau^{\min})$.
		\item If $p_\rho=1/2: e_s(\rho,\sigma,p)=\lr{1+\frac{4+\sqrt{7}}{3}}^{-1}$ and error minimizing measurement should satisfy the constraint that $\sigma^{1/2}\Lr\sigma^{1/2}\in\mathcal{P}(\Tau^{\max})$, $\sigma^{1/2}\Ls\sigma^{1/2}\in\mathcal{P}(\Tau^{\min})$.
	\end{enumerate}
	Note that $\Tau^{\max}$ and $\Tau^{\min}$ are both rank 1 in this example and so, any element in the set $\mathcal{P}(\Tau^{\max})$ and $\mathcal{P}(\Tau^{\max})$ are of the form $c\Tau^{\max}$ and $c\Tau^{\min}$ for some $c\geq0$. Following the fact that $\sigma^{1/2}\Lr\sigma^{1/2}$ and $\sigma^{1/2}\Ls\sigma^{1/2}$ have simple form, and that $\Pi_\sigma=\I$  (thus $\sigma^{-1/2}(\sigma^{1/2}\Ls\sigma^{1/2})\sigma^{-1/2} = \Pi_\sigma\Ls\Pi_\sigma=\Ls$, and same for $\Lr$) in this example, we can obtain constraint on $\Lr,\Ls$ by simply multiplying $\sigma^{-1/2}$ on both sides. A comprehensive way to find the most general form of $\Lr,\Ls$ for error-minimization is given next.
\end{example2}

Theorem \ref{lem: conditionSym} gives the subspace where $\sigma^{1/2}\Lr\sigma^{1/2}$ and $\sigma^{1/2}\Ls\sigma^{1/2}$ should lie for $\{\Lambda_\rho,\Lambda_\sigma,\I-\Lr-\Ls\}$ to be an error minimizing measurement. Building on Theorem \ref{lem: conditionSym}, we now focus on deriving the set of all error-minimizing measurements. We start with the constraint on $\sigma^{1/2}\Lr\sigma^{1/2}$ and  $\sigma^{1/2}\Ls\sigma^{1/2}$  given in \eqref{eq: conditionSym} and use Lemma \ref{lem: genMainProj} from Appendix \ref{sec: genProj}  to obtain the constraint on $\Lr$ and $\Ls$, from constraint on $\sigma^{1/2}\Lr\sigma^{1/2}$ and  $\sigma^{1/2}\Ls\sigma^{1/2}$. Then, we parameterize $\Lr$ and $\Ls$ to satisfy the obtained constraints and then put condition $\Lr+\Ls\leq\I$ to ensure that it remains a valid measurement, thus obtaining the set of all measurements. The set of all error-minimizing measurements is obtained in the following theorem.

\begin{comment}
\noindent \textbf{Note:} The condition for equality is written compactly as 
\begin{align}
	e(\Lambda)=e_s(\rho,\sigma,p)
	\Leftrightarrow \sigma^{1/2}\Lambda_{\rho}\sigma^{1/2} = \tilde c_\rho c_\rho\psi_{\max} \text{ and }  	\sigma^{1/2}\Lambda_{\sigma}\sigma^{1/2}=\tilde c_\sigma c_\sigma \psi_{\min},
\end{align}
with $c_\rho>0,c_\sigma>0$ and 
$\tilde c_\rho,\tilde c_\sigma$ given by
\begin{align} \label{eq: tildeC}
	(\tilde c_\rho,\tilde c_\sigma)=
	\begin{cases}
		(1,0) & \text{if } R_{\max}(p_\rho\rho,p_\sigma\sigma)> R_{\max}(p_\sigma\sigma,p_\rho\rho),\\
		(0,1) &  \text{if } R_{\max}(p_\rho\rho,p_\sigma\sigma)< R_{\max}(p_\sigma\sigma,p_\rho\rho)	,\\
		(1,1), &  \text{if } R_{\max}(p_\rho\rho,p_\sigma\sigma)= R_{\max}(p_\sigma\sigma,p_\rho\rho).
	\end{cases}
\end{align}
We will use this notation $(\tilde c_\rho,\tilde c_\sigma)$ with same definition frequently in this section to avoid the need to write the three cases separately.
\end{comment}
 
\begin{theorem} \label{lem: setSym}
	The set $\mathcal{E}_s(\rho,\sigma,p)$ of all measurements achieving the minimum postselected symmetric error  for the three cases as mentioned in \eqref{eq: conditionSym}, is given respectively by 
	\begin{enumerate}[label={$\underline{\mathcal{C}\mathrm{\arabic*}}$:}]
		\item $\ds \mathcal{E}_s(\rho,\sigma,p)=\Bigg\{\{\Lambda_\rho,\Lambda_\sigma, \I-\Lambda_\rho-\Lambda_\sigma\}:
		\psi_{\max}\in\mathcal{S}(\I-\Pi_\sigma+\proj^{\max}),\Lr=c\frac{\psi_{\max}}{\tr\lr{\psi_{\max}\sigma}},\\ \text{ } \hspace{7cm} \Ls\in\mathcal{P}(\I-\Pi_\sigma),\Ls\leq \I-\Lr, c\leq\left\|\frac{\psi_{\max}}{\tr\lr{\psi_{\max}\sigma}}\right\|_{\infty}^{-1}\Bigg\}.$
		
		\item $\ds \mathcal{E}_s(\rho,\sigma,p)=\Bigg\{\{\Lambda_\rho,\Lambda_\sigma, \I-\Lambda_\rho-\Lambda_\sigma\}: \psi_{\min}\in\mathcal{S}(\I-\Pi_\sigma+\proj^{\min}),\Ls=c\frac{\psi_{\min}}{\tr\lr{\psi_{\min}\sigma}},\\ \text{ } \hspace{7cm}\Lr\in\mathcal{P}(\I-\Pi_\sigma),\Lr\leq \I-\Ls,
		c\leq\left\|\frac{\psi_{\min}}{\tr\lr{\psi_{\min}\sigma}}\right\|_{\infty}^{-1}\Bigg\}.$
		
		\item $\ds \mathcal{E}_s(\rho,\sigma,p)=\Bigg\{\{\Lambda_\rho,\Lambda_\sigma, \I-\Lambda_\rho-\Lambda_\sigma\}:\psi_{\max}\in\mathcal{S}(\I-\Pi_\sigma+\proj^{\max}),\psi_{\min}\in\mathcal{S}(\I-\Pi_\sigma+\proj^{\min}),c_r\in[0,1], \Lr=cc_r\frac{\psi_{\max}}{\tr\lr{\psi_{\max}\sigma}},\Ls=c(1-c_r)\frac{\psi_{\min}}{\tr\lr{\psi_{\min}\sigma}}, c\leq\left\|c_r\frac{\psi_{\max}}{\tr\lr{\psi_{\max}\sigma}}+(1-c_r)\frac{\psi_{\min}}{\tr\lr{\psi_{\min}\sigma}}\right\|_{\infty}^{-1}\Bigg\}.$
	\end{enumerate}
Here $\proj^{\max}=\Pi_{\sigma^{-1/2}\Tau^{\max}\sigma^{-1/2}}$ and $\proj^{\min}=\Pi_{\sigma^{-1/2}\Tau^{\min}\sigma^{-1/2}}$ with $\Tau^{\max}\isdefinedas\Pi^{\max}_{\sigma^{-1/2}\rho\sigma^{-1/2}}$ and $\Tau^{\min}\isdefinedas\Pi^{\min}_{\sigma^{-1/2}\rho\sigma^{-1/2}}$ as defined in Theorem \ref{lem: conditionSym}.
\begin{proof} Using Theorem \ref{lem: genMainProj} and Lemma \ref{lem: null} from Appendix \ref{sec: genProj}, we get
	\begin{gather}
		\begin{gathered}
		\sigma^{1/2}\Gamma\sigma^{1/2}\in\mathcal{P}(\Pi)\Leftrightarrow\Gamma\in\mathcal{P}(\I-\Pi_\sigma+\Pi_{\sigma^{-1/2}\Pi\sigma^{-1/2}})\text{ and }\\
		\sigma^{1/2}\Gamma\sigma^{1/2}=0\Leftrightarrow\Pi_\sigma\Gamma\Pi_\sigma=0\Leftrightarrow\Gamma\in\mathcal{P}(\I-\Pi_\sigma). 
		\end{gathered}
		\label{eq: generalCondition}
	\end{gather}
	We will use these two results for $\Gamma=\Lr$ and $\Gamma=\Ls$ to get the expression for $\Lr$ and $\Ls$. We begin with the condition on $\Lr$ and $\Ls$ derived in Theorem \ref{lem: conditionSym} for the three cases separately.
	 \begin{enumerate}[label={$\underline{\mathcal{C}\mathrm{\arabic*}}$}]
		\item In this case, using Theorem \ref{lem: conditionSym}, for error minimizing POVM, we get the condition
		$\sigma^{1/2}\Lr\sigma^{1/2}\in\mathcal{P}(\Tau^{\max})$ and $\sigma^{1/2}\Ls\sigma^{1/2}=0$. Now, using equivalence shown in \eqref{eq: generalCondition}, we get 
		$$\Lr\in\mathcal{P}(\I-\Pi_\sigma+\proj^{\max}), \ \Ls\in\mathcal{P}(\I-\Pi_\sigma).$$ A general $\Lr$ can be chosen as $\displaystyle\Lr={c}\frac{\psi_{\max}}{\tr\lr{\psi_{\max}\sigma}}$ for some $\psi_{\max}\in\mathcal{S}(\I-\Pi_\sigma+\proj^{\max}),c\geq0$. On putting the condition that $\Lr+\Ls\leq \I$, we obtain $\Ls\leq \I-\Lr$, and for any such $\Ls$ to exist, it has to be ensured that $\Lr\leq \I$ and thus $\ds c\frac{\psi_{\max}}{\tr\lr{\psi_{\max}\sigma}}\leq \I\Rightarrow c\leq\left\|\frac{\psi_{\max}}{\tr\lr{\psi_{\max}\sigma}}\right\|_{\infty}^{-1}$. Writing these conditions together, we get an error-minimizing measurement as
		$$\Lr=c\frac{\psi_{\max}}{\tr\lr{\psi_{\max}\sigma}},\Ls\in\mathcal{P}(\I-\Pi_\sigma),\Ls\leq \I-\Lr,c\leq\left\|\frac{\psi_{\max}}{\tr\lr{\psi_{\max}\sigma}}\right\|_{\infty}^{-1}.$$

		\item In this case, using Theorem \ref{lem: conditionSym}, for error minimizing POVM, we get the condition
		$\sigma^{1/2}\Lr\sigma^{1/2}=0,\sigma^{1/2}\Ls\sigma^{1/2}\in\mathcal{P}(\Tau^{\min})$. Now, using equivalence shown in \eqref{eq: generalCondition}, we get
		$$\Lr\in\mathcal{P}(\I-\Pi_\sigma),\Ls\in\mathcal{P}(\I-\Pi_\sigma+\proj^{\min}).$$
		A general $\Ls$ can be chosen as $\displaystyle\Ls={c}\frac{\psi_{\min}}{\tr\lr{\psi_{\min}\sigma}}$ for some $\psi_{\min}\in\mathcal{S}(\I-\Pi_\sigma+\proj^{\min}),c\geq0$. On putting the condition that $\Lr+\Ls\leq \I$, we obtain $\Lr\leq \I-\Ls$, and for any such $\Lr$ to exist, it has to be ensured that $\Ls\leq \I$ and thus $\ds c\frac{\psi_{\min}}{\tr\lr{\psi_{\min}\sigma}}\leq \I\Rightarrow c\leq\left\|\frac{\psi_{\min}}{\tr\lr{\psi_{\min}\sigma}}\right\|_{\infty}^{-1}$. Writing these conditions together, we get an error-minimizing measurement as
		$$\Ls=c\frac{\psi_{\min}}{\tr\lr{\psi_{\min}\sigma}},\Lr\in\mathcal{P}(\I-\Pi_\sigma),\Lr\leq \I-\Ls,c\leq\left\|\frac{\psi_{\min}}{\tr\lr{\psi_{\min}\sigma}}\right\|_{\infty}^{-1}.$$
		
		\item In this case, using Theorem \ref{lem: conditionSym}, for error minimizing POVM, we get the condition $\sigma^{1/2}\Lr\sigma^{1/2}\in\mathcal{P}(\Tau^{\max})$ and $\sigma^{1/2}\Ls\sigma^{1/2}\in\mathcal{P}(\Tau^{\min})$. Now, using equivalence shown in \eqref{eq: generalCondition}, we get
		$$\Lr\in\mathcal{P}(\I-\Pi_\sigma+\proj^{\max}),\Ls\in\mathcal{P}(\I-\Pi_\sigma+\proj^{\min}).$$	
		Here, a general $\Lr$ and $\Ls$ can be chosen as $\displaystyle\Lr={cc_r}\frac{\psi_{\max}}{\tr\lr{\psi_{\max}\sigma}}$ and $\displaystyle\Ls={c(1-c_r)}\frac{\psi_{\min}}{\tr\lr{\psi_{\min}\sigma}}$ for some $\psi_{\max}\in\mathcal{S}(\I-\Pi_\sigma+\proj^{\max}),\psi_{\min}\in\mathcal{S}(\I-\Pi_\sigma+\proj^{\min}),c\geq0,c_r\in[0,1]$ and putting the condition that $\Lr+\Ls\leq \I$, we get
		$$\Lr=cc_r\frac{\psi_{\max}}{\tr\lr{\psi_{\max}\sigma}},\Ls=c(1-c_r)\frac{\psi_{\min}}{\tr\lr{\psi_{\min}\sigma}},\displaystyle c\leq\left\|c_r\frac{\psi_{\max}}{\tr\lr{\psi_{\max}\sigma}}+(1-c_r)\frac{\psi_{\min}}{\tr\lr{\psi_{\min}\sigma}}\right\|_{\infty}^{-1}.$$
	\end{enumerate}
	
	Writing this as a set, we obtain the result stated in the theorem.
\end{proof}
\end{theorem}
	
{\Blue	\begin{remark}	
		While the work in \cite{regula2022postselected} gives an example that achieves the minimum error, we have given the most general structure of error-minimizing measurements. The example of error-minimizing measurement given in \cite{regula2022postselected} satisfies the condition for error-minimizing measurements given in Theorem \ref{lem: conditionSym} and belongs to the set of error-minimizing measurements given in Theorem \ref{lem: setSym}. So, while the error obtained by example in \cite{regula2022postselected} and our obtained set of measurements $\mathcal{E}_s$ is the same, \cite{regula2022postselected} gives a single example, whereas we provide the complete set of all error-minimizing measurements.
	\end{remark}
}	
\noindent \textbf{Constructing an arbitrary error minimizing measurement: }{Theorem \ref{lem: setSym} gives an intuitive way to construct any error minimizing measurement, which is described in Table \ref{table: paraEqual}. The first column shows the parameters and the following three show how to choose them for different case $\mathcal{C}1, \mathcal{C}2$, and $\mathcal{C}3$. To construct an arbitrary error minimizing measurement, we first select $\psi_{\max}$ and $\psi_{\min}$ according to the applicable case, then choose $c_r$ and $c$ as given in the second row and then construct $\Lr$ and $\Ls$ as given in the last row. The measurement $\{\Lr,\Ls,\I-\Lr-\Ls\}$ thus constructed will be an error minimizing measurement.

\begin{table}[H]
	\centering
	\begin{tabular}{||c||c|c|c||}
		\hline\hline
		{Cases} &$\mathcal{C}1 $&$\mathcal{C}2 $&$\mathcal{C}3 $\\
		\hline\hline
		\multirow{2}{*}{$\psi_{\max}$,} &\multirow{2}{*}{$\psi_{\max}\in\mathcal{S}(\I-\Pi_\sigma+\proj^{\max})$,}& \multirow{2}{*}{$\psi_{\max}$ not needed,}&	\multirow{2}{*}{$
			\psi_{\max}\in\mathcal{S}(\I-\Pi_\sigma+\proj^{\max}),$}
		\\
		\multirow{2}{*}{$\psi_{\min}$} &\multirow{2}{*}{$\psi_{\min}$ not needed}&\multirow{2}{*}{$\psi_{\min}\in\mathcal{S}(\I-\Pi_\sigma+\proj^{\min})$}&	\multirow{2}{*}{	$
		\psi_{\min}\in\mathcal{S}(\I-\Pi_\sigma+\proj^{\min})
		$}\\&&&\\\hline\multirow{2}{*}{$c_r$,}&\multirow{2}{*}{$c_r$ not needed,}&\multirow{2}{*}{$c_r$ not needed,}&\multirow{2}{*}{$c_r\in[0,1]$,}\\&&&\\
		$c$& $\displaystyle c\leq\left\|\frac{\psi_{\max}}{\tr\lr{\psi_{\max}\sigma}}\right\|_{\infty}^{-1}$& $\displaystyle c\leq\left\|\frac{\psi_{\min}}{\tr\lr{\psi_{\min}\sigma}}\right\|_{\infty}^{-1}$&$\displaystyle c\leq\left\|\frac{c_r\psi_{\max}}{\tr\lr{\psi_{\max}\sigma}}+\frac{(1-c_r)\psi_{\min}}{\tr\lr{\psi_{\min}\sigma}}\right\|_{\infty}^{-1}$\\&&&\vspace{-2mm}\\
		\hline
		\multirow{3}{*}{$\Lr,\Ls$} & $\displaystyle
			\Lr=c\frac{\psi_{\max}}{\tr\lr{\psi_{\max}\sigma}},$& $\displaystyle
		\Ls=c\frac{\psi_{\min}}{\tr\lr{\psi_{\min}\sigma}},$&$\displaystyle
	 \Lr=cc_r\frac{\psi_{\max}}{\tr\lr{\psi_{\max}\sigma}},$\\

	& $\ds\Ls\in\mathcal{P}(\I-\Pi_\sigma),\Ls\leq \I-\Lr$& $\displaystyle\Lr\in\mathcal{P}(\I-\Pi_\sigma),\Lr\leq \I-\Ls$&$\displaystyle
	\Ls=c(1-c_r)\frac{\psi_{\min}}{\tr\lr{\psi_{\min}\sigma}}$\\ 
	\hline\hline
	\end{tabular}
\caption{Parameterization of error minimizing measurements and method to construct an arbitrary error minimizing measurement.}
\label{table: paraEqual}
\end{table}
\noindent Note that this method giving the general and parametric form of measurements, provides us with freedom in choosing the measurement by varying $\psi_{\max},\psi_{\min},c$ and $c_r$.

We now give another example that illustrates how to find the set of error-minimizing measurements for the system given in the Example \ref{exmp: q1}.

\begin{example2} \label{exmp: q2} Applying definition of $\proj^{\max}$ and $\proj^{\min}$ given in Theorem \ref{lem: setSym} to the system described in Example \ref{exmp: q1}, we get
	\begin{align*}
		\proj^{\max}&=\frac{1}{12+4\sqrt{7}}\lr{|0\rangle+(2+\sqrt{7})|1\rangle}\lr{\langle0|+(2+\sqrt{7})\langle1|} \text{ and }\\
		\proj^{\min}&=\frac{1}{12-4\sqrt{7}}\lr{|0\rangle+(2-\sqrt{7})|1\rangle}\lr{\langle0|+(2-\sqrt{7})\langle1|}.
	\end{align*}
	Further, from theorem \ref{lem: setSym}, we have $\psi_{\max}\in\mathcal{S}(\I-\Pi_\sigma+\proj^{\max})$ and $\psi_{\min}\in\mathcal{S}(\I-\Pi_\sigma+\proj^{\min})$. Note that $\Pi_\sigma=\I$ and $\proj^{\max}$ and $\proj^{\min}$ are rank 1, so the only element in $\mathcal{S}(\I-\Pi_\sigma+\proj^{\max}) = \mathcal{S}(\proj^{\max})$ is $\proj^{\max}$ and the only element in $\mathcal{S}(\proj^{\min})$ is $\proj^{\min}$. Hence, we get $\psi_{\max}=\proj^{\max}$ and $\psi_{\min}=\proj^{\min}$. Further $\tr\lr{\psi_{\max}\sigma}=\tr\lr{\proj^{\max}\sigma}=\frac{1}{4}\frac{14+4\sqrt{7}}{12+4\sqrt{7}}$ and  $\tr\lr{\psi_{\min}\sigma}=\tr\lr{\proj_{\min}\sigma}=\frac{1}{4}\frac{14-4\sqrt{7}}{12-4\sqrt{7}}$.  So, the set of error minimizing measurements can be given as below.
	\begin{enumerate} 
		\item If $\ds p_\rho>1/2: \left\{\{c\frac{4(12+4\sqrt{7})}{14+4\sqrt{7}}\proj^{\max},0,\I-c\frac{4(12+4\sqrt{7})}{14+4\sqrt{7}}\proj^{\max}\}: 0<c\leq \frac{1}{4}\frac{14+4\sqrt{7}}{12+4\sqrt{7}} \right\}$.
		\item If $\ds p_\rho<1/2: \left\{\{0,c\frac{4(12-4\sqrt{7})}{14-4\sqrt{7}}\proj^{\min},\I-c\frac{4(12-4\sqrt{7})}{14-4\sqrt{7}}\proj^{\min}\}: 0<c\leq \frac{1}{4}\frac{14-4\sqrt{7}}{12-4\sqrt{7}} \right\}$.
		\item If $ p_\rho=1/2:\\ \left\{\{cc_rc\frac{4(12+4\sqrt{7})}{14+4\sqrt{7}}\proj^{\max},c(1-c_r)c\frac{4(12-4\sqrt{7})}{14-4\sqrt{7}}\proj^{\min},  \I-cc_r\frac{4(12+4\sqrt{7})}{14+4\sqrt{7}}\proj^{\max}-c(1-c_r)\frac{4(12-4\sqrt{7})}{14-4\sqrt{7}}\proj^{\min}\}: \right.\\
		\left. \hspace{5cm} 0\leq c_r\leq 1, c\leq \left\|c_r\frac{4(12+4\sqrt{7})}{14+4\sqrt{7}}\proj^{\max}+(1-c_r)\frac{4(12-4\sqrt{7})}{14-4\sqrt{7}}\proj^{\min}\right\|_{\infty}^{-1} \right\}$.
	\end{enumerate}
	In the above description of the set, it is clear that:
	\begin{enumerate}	
		\item If $p_\rho>1/2:$ the maximum value of c is $ \frac{1}{4}\frac{14+4\sqrt{7}}{12+4\sqrt{7}} = 0.272$.
		\item If $p_\rho<1/2:$ the maximum value of c is $ \frac{1}{4}\frac{14-4\sqrt{7}}{12-4\sqrt{7}} = 0.603$.
		\item If $p_\rho=1/2:$ the maximum value of c is $ \max_{0\leq c_r\leq 1} \left\|c_r\frac{4(12+4\sqrt{7})}{14+4\sqrt{7}}\proj^{\max}-(1-c_r)\frac{4(12-4\sqrt{7})}{14-4\sqrt{7}}\proj^{\min}\right\|_{\infty}^{-1} $.
	\end{enumerate}	
	We are able to obtain a numerical value of bound on $c$ in the first two cases because $\proj^{\max}$ and $\proj^{\min}$ are rank 1 operators. Usually, the depends on $\psi_{\max}$ and $\psi_{\min}$. We will see in the next subsection that acceptance depends only on $c$ except for the third case. So, easily obtaining the maximum value of $c$ helps find the maximum value of acceptance. 
\end{example2}

\subsection{The maximum acceptance achieved by error-minimizing measurements}
Note that Theorem \ref{lem: setSym} gives the set of all measurements that minimize the postselected symmetric error as $\mathcal{E}_s(\rho,\sigma,p)$ and gives their parameterization in terms of $(c,\psi_{\max},\Ls)$, $(c,\psi_{\min},\Lr)$ or $(c,c_r,\psi_{\max},\psi_{\min})$.} The next Lemma evaluates acceptance for a given error-minimizing measurement as a function of these parameters $c$ and $c_r$.

\begin{theorem} \label{lem: genAccSym}
	For a test utilizing a measurement $\Lambda\in \mathcal{E}_s(\rho,\sigma,p)$, as given in Theorem \ref{lem: setSym},  its acceptance under three cases is given as below.
\begin{enumerate}[label={For case $\underline{\mathcal{C}\mathrm{\arabic*}}$,}]
	\setlength{\itemindent}{2cm}
	\item $A_\rho(\Lambda)=c R_{\max}(\rho,\sigma)\text{ and  }A_\sigma(\Lambda)=c$.
	\item $A_\rho(\Lambda)=c R_{\min}(\rho,\sigma)\text{ and  }A_\sigma(\Lambda)=c$.
	\item $A_\rho(\Lambda)=cc_rR_{\max}(\rho,\sigma)+c(1-c_r)R_{\min}(\rho,\sigma)\text{ and  }A_\sigma(\Lambda)=c$.
\end{enumerate}
	Here, $c$ and $c_r$ depend on $\Lambda$ and are as given in the Table \ref{table: paraEqual}.
\begin{proof} Using Theorem \ref{lem: genMainProj} from Appendix \ref{sec: genProj}, we have $$\psi_{\max}\in\mathcal{S}(\I-\Pi_\sigma+\proj^{\max})\subseteq\mathcal{P}(\I-\Pi_\sigma+\proj^{\max})\Rightarrow \sigma^{1/2} \psi_{\max} \sigma^{1/2}\in \mathcal{P}(\Tau^{\max}).$$
	Thus  $\tr\lr{\psi_{\max}\rho}=\tr\lr{\psi_{\max}\Pi_\rho\rho\Pi_\rho}\stackrel{(a)}{=}\tr\lr{\psi_{\max}\Pi_\sigma\rho\Pi_\sigma}$ 
	\begin{align*}
		&\stackrel{(b)}{=}\tr\lr{\sigma^{1/2}\psi_{\max}\sigma^{1/2}\sigma^{-1/2}\rho\sigma^{-1/2}}\\
		&\stackrel{(c)}{=}\tr\lr{\sigma^{1/2}\psi_{\max}\sigma^{1/2}\Tau^{\max}\sigma^{-1/2}\rho\sigma^{-1/2}}\\
		&\stackrel{(d)}{=}R_{\max}(\rho,\sigma)\tr\lr{\sigma^{1/2}\psi_{\max}\sigma^{1/2}\Tau^{\max}}\\
		&\stackrel{(e)}{=}R_{\max}(\rho,\sigma)\tr\lr{\sigma^{1/2}\psi_{\max}\sigma^{1/2}}.
	\end{align*}
	Here, $(a)$ is due to $\Pi_\sigma=\Pi_\rho$. $(b)$ is obtained from $\Pi_\sigma=\sigma^{-1/2}\sigma^{1/2}$ and using cyclic property of trace. $(c)$ and $(e)$ uses that $\sigma^{1/2} \psi_{\max} \sigma^{1/2}\in \mathcal{P}(\Tau^{\max})$. $(d)$ follows from the fact that $\Tau^{\max}$ is the projection onto the subspace corresponding to the maximum eigenvalue of $\sigma^{-1/2}\rho\sigma^{-1/2}$ and so, we have $\Tau^{\max}\sigma^{-1/2}\rho\sigma^{-1/2}=R_{\max}(\rho,\sigma)\Tau^{\max}$. Thus, we obtain 
	\begin{align}
		\tr\lr{\psi_{\max}\rho}=R_{\max}(\rho,\sigma)\tr\lr{\psi_{\max}\sigma}. \label{eq: psiMaxRho}
	\end{align}
	Similarly, we can show that
	\begin{align}
		\tr\lr{\psi_{\min}\rho}=R_{\min}(\rho,\sigma)\tr\lr{\psi_{\min}\sigma}. \label{eq: psiMinRho}
	\end{align}
	So, for any $\Gamma$ such that
	\begin{gather}
		\Gamma=c\frac{\psi_{\max}}{\tr\lr{\psi_{\max}\sigma}} \Rightarrow \tr\lr{\Gamma\rho}=c\frac{\tr\lr{\psi_{\max}\rho}}{\tr\lr{\psi_{\max}\sigma}}=cR_{\max}(\rho,\sigma) \text{ and }\tr\lr{\Gamma\sigma}=c\frac{\tr\lr{\psi_{\max}\sigma}}{\tr\lr{\psi_{\max}\sigma}}=c. \label{eq: gammaMax}
		\intertext{Similarly, for any $\ds \Gamma=c\frac{\psi_{\min}}{\tr\lr{\psi_{\min}\sigma}}$, we get }
		\tr\lr{\Gamma\rho}=cR_{\min}(\rho,\sigma) \text{ and }\tr\lr{\Gamma\sigma}=c. \label{eq: gammaMin}
		\intertext{On taking }
		\Gamma\in\mathcal{P}(\I-\Pi_{\sigma}) \Rightarrow \tr\lr{\Gamma\rho}=0 \text{ and }\tr\lr{\Gamma\sigma}=0. \label{eq: gammaVoid}
	\end{gather}
	
	We will use these results from \eqref{eq: gammaMax}, \eqref{eq: gammaMin}, and \eqref{eq: gammaVoid} directly for $\Gamma=\Lr$ and $\Gamma=\Ls$. Now, for three cases as given in Theorem \ref{lem: conditionSym}, acceptance is given as follows.
	
	\begin{enumerate}[label={$\underline{\mathcal{C}\mathrm{\arabic*}}$}]
	
		\item Recall that $\Lr$ and $\Ls$ are parameterized as
		$\ds\Lr=c\frac{\psi_{\max}}{\tr\lr{\psi_{\max}\sigma}},\Ls\in\mathcal{P}(\I-\Pi_\sigma),\Ls\leq \I-\Lr$. Hence, corresponding acceptance is 
		\begin{align*}
		A_\rho(\Lambda)&=\tr\lr{\Lr\rho}+\tr\lr{\Ls\rho}=c R_{\max}(\rho,\sigma)+0=c R_{\max}(\rho,\sigma),\\
		A_\sigma(\Lambda)&=\tr\lr{\Lr\sigma}+\tr\lr{\Ls\sigma}=c +0=c.
		\end{align*}
	
		\item Recall that $\Lr$ and $\Ls$ are parameterized as
		$\ds\Ls=c\frac{\psi_{\min}}{\tr\lr{\psi_{\min}\sigma}},\Lr\in\mathcal{P}(\I-\Pi_\sigma),\Lr\leq \I-\Ls$. Hence, corresponding acceptance is 
		\begin{align*}	
		A_\rho(\Lambda)&=\tr\lr{\Lr\rho}+\tr\lr{\Ls\rho}=c R_{\min}(\rho,\sigma)+0=c R_{\min}(\rho,\sigma),\\
		A_\sigma(\Lambda)&=\tr\lr{\Lr\sigma}+\tr\lr{\Ls\sigma}=c+0=c.
		\end{align*}	
	
		\item Recall that $\Lr$ and $\Ls$ are parameterized as
		$\ds \Lr=cc_r\frac{\psi_{\max}}{\tr\lr{\psi_{\max}\sigma}},\Ls=c(1-c_r)\frac{\psi_{\min}}{\tr\lr{\psi_{\min}\sigma}},\Lr+\Ls\leq \I$. Hence, corresponding acceptance is given as
		\begin{align*}	
		A_\rho(\Lambda)&=\tr\lr{\Lr\rho}+\tr\lr{\Ls\rho}=cc_rR_{\max}(\rho,\sigma)+c(1-c_r)R_{\min}(\rho,\sigma),\\
		A_\sigma(\Lambda)&=\tr\lr{\Lr\sigma}+\tr\lr{\Ls\sigma}=c.
		\end{align*}		
		
	\end{enumerate} \vspace{-7mm}
\end{proof}
\end{theorem}
\begin{remark}
Barring the case $R_{\max}(p_\rho\rho,p_\sigma\sigma)= R_{\max}(p_\sigma\sigma,p_\rho\rho)$, the expression for acceptance depends only on $c$ linearly. In turn, $c$ is upper bounded by a function of $\psi_{\max}, \psi_{\min}$ and $c_r$. So, to maximize acceptance, we only need to focus on maximizing $c$ for all error-minimizing measurements. 
\end{remark}

We now give the maximum value of acceptance that an error-minimizing measurement can have,  in the following theorem.

\begin{theorem} \label{thm: symmetric}
	For an error-minimizing measurement $\Lambda$, (i.e., $e(\Lambda)=e_s(\rho,\sigma,p), \Lambda\in\mathcal{M}$), the maximum value of acceptance for the states $\rho$ and $\sigma$ is given below.
	\begin{enumerate}[label={$\underline{\mathcal{C}\mathrm{\arabic*}}$}]
		\item $\ds A^s_\rho= R_{\max}(\rho,\sigma) \tr\lr{\proj^{\max}\sigma}\quad\text{ and}\quad  A^s_\sigma=\tr\lr{\proj^{\max}\sigma}.$
		
		\item $\ds A^s_\rho= R_{\min}(\rho,\sigma) \tr\lr{\proj^{\min}\sigma}\quad\text{ and}\quad A^s_\sigma=\tr\lr{\proj^{\min}\sigma}.$
		
		\item $A^s_\sigma=\max_{c_r\in[0,1]}\Upsilon_{\Tau^{\max},\Tau^{\min}}(\sigma,c_r)$ and 
		$$A^s_\rho=\max_{c_r\in[0,1]}(c_rR_{\max}(\rho,\sigma)+(1-c_r)R_{\min}(\rho,\sigma))\Upsilon_{\Tau^{\max},\Tau^{\min}}(\sigma^{1/2}\Pi_{\proj^{\max}+\proj^{\min}}\sigma^{1/2},c_r).$$
	\end{enumerate}
	Here, $\Upsilon_{\Pi_1,\Pi_2}(\sigma,r)$ is defined as %for any $\tr\lr{\Pi_1\Pi_2}=0$, we have
	$$\Upsilon_{\Pi_1,\Pi_2}(\sigma,r)\isdefinedas\{\max c: c(r\psi_1+(1-r)\psi_2)\leq\sigma \text{ for some }\psi_1\in\mathcal{S}(\Pi_1),\psi_2\in\mathcal{S}(\Pi_2), \text{ with }\tr\lr{\Pi_1\Pi_2}=0\}.$$
	More detailed properties of $\Upsilon_{\Pi_1,\Pi_2}\lr{\sigma,r}$
	are given in Appendix \ref{sec: upsilon}, which can help find closed-form expressions of maximum acceptance in specific cases.

\begin{proof} Let us take the three cases separately.	
\begin{enumerate}[label={$\underline{\mathcal{C}\mathrm{\arabic*}}$}]
	\item From Lemma \ref{lem: genAccSym},
	$A^s_\rho= R_{\max}(\rho,\sigma)\max_{\mathcal{E}_s(\rho,\sigma,p)} c\qquad \text{ and }\qquad A^s_\sigma=\max_{\mathcal{E}_s(\rho,\sigma,p)} c.$ Now, from Lemma \ref{lem: singleMin} in Appendix \ref{sec: maxNormMinimization}, we have
	$$\max_{\mathcal{E}_s(\rho,\sigma,p)} c=\max_{\psi_{\max}\in\mathcal {P}(\I-\Pi_\sigma+\proj^{\max})}\left\|\frac{\psi_{\max}}{\tr\lr{\psi_{\max}\sigma}}\right\|_{\infty}^{-1}=\tr\lr{\proj^{\max}\sigma}.$$
	Substituting it in the expression of $A_\rho^s$ and $A_\sigma^s$, as given in Theorem \ref{lem: genAccSym}, we get the result stated in the theorem.
	\item From Lemma \ref{lem: genAccSym},
	$A^s_\rho= R_{\min}(\rho,\sigma)\max_{\mathcal{E}_s(\rho,\sigma,p)} c\qquad \text{ and }\qquad A^s_\sigma=\max_{\mathcal{E}_s(\rho,\sigma,p)} c.$ Now, from Lemma \ref{lem: singleMin} in Appendix \ref{sec: maxNormMinimization}, we have
	$$\max_{\mathcal{E}_s(\rho,\sigma,p)} c=\max_{\psi_{\min}\in\mathcal {P}(\I-\Pi_\sigma+\proj^{\min})}\left\|\frac{\psi_{\min}}{\tr\lr{\psi_{\min}\sigma}}\right\|_{\infty}^{-1}=\tr\lr{\proj^{\min}\sigma}.$$
	Substituting it in the expression of $A_\rho^s$ and $A_\sigma^s$, as given in Theorem \ref{lem: genAccSym}, we get the result stated in the theorem.
	\item From Theorem \ref{lem: genAccSym},
	$A_\rho(\Lambda)= cc_rR_{\max}(\rho,\sigma)+c(1-c_r)R_{\min}(\rho,\sigma)\text{ and }A_\sigma(\Lambda)=c.$
	Note that these functions have to be maximized with respect to $c$ and $c_r$. Taking $c$ common and maximizing it first, we get
	\begin{align*}
		A^s_\rho&=\max_{\mathcal{E}_s(\rho,\sigma,p)} cc_rR_{\max}(\rho,\sigma)+c(1-c_r)R_{\min}(\rho,\sigma)\\
		&=\max_{c_r\in[0,1]}c_rR_{\max}(\rho,\sigma)+(1-c_r)R_{\min}(\rho,\sigma) \max_{\substack{\psi_{\max}\in\mathcal{S}(\I-\Pi_\sigma+\proj^{\max}),\\\psi_{\min}\in\mathcal{S}(\I-\Pi_\sigma+\proj^{\min})}}\left\|\frac{{c_r}\psi_{\max}}{\tr\lr{\psi_{\max}\sigma}}+\frac{\lr{1-c_r}\psi_{\min}}{\tr\lr{\psi_{\min}\sigma}}\right\|_{\infty}^{-1} \\
		&=\max_{c_r\in[0,1]}(c_rR_{\max}(\rho,\sigma)+(1-c_r)R_{\min}(\rho,\sigma))\Upsilon_{\Tau^{\max},\Tau^{\min}}(\sigma,c_r).
	\end{align*}
Here, the last step is obtained from Lemma \ref{lem: doubleMin} in Appendix \ref{sec: maxNormMinimization}. Now maximizing $A_\sigma(\Lambda)$, we obtain
	\begin{align*}
		A_\sigma^s&=\max_{\mathcal{E}_s(\rho,\sigma,p)} c=\max_{c_r\in[0,1]}\max_{\substack{\psi_{\max}\in\mathcal{S}(\I-\Pi_\sigma+\proj^{\max}),\\\psi_{\min}\in\mathcal{S}(\I-\Pi_\sigma+\proj^{\min})}}\left\|c_r\frac{\psi_{\max}}{\tr\lr{\psi_{\max}\sigma}}+(1-c_r)\frac{\psi_{\min}}{\tr\lr{\psi_{\min}\sigma}}\right\|_{\infty}^{-1}\\
		&=\max_{c_r\in[0,1]}\Upsilon_{\Tau^{\max},\Tau^{\min}}(\sigma^{1/2}\Pi_{\proj^{\max}+\proj^{\min}}\sigma^{1/2},c_r).
	\end{align*}
	Here, the last step is due to Lemma \ref{lem: doubleMin} in Appendix \ref{sec: maxNormMinimization}.
\end{enumerate} \vspace{-6mm}
\end{proof}
\end{theorem}

{\Blue 
	\begin{remark} \label{rem: desiredAccEq}
		Note that for any desired value of acceptance, error-minimizing measurements can be designed by choosing the appropriate value of constant $c$ and $c_r$, provided that the desired acceptance remains less than the maximum obtained in Theorem \ref{thm: symmetric}.
	\end{remark}
}

\begin{remark}
Note that for the case $\mathcal{C}1$, the expressions of acceptance for $\rho$ and $\sigma$, both are linear in the parameter $c$ in Theorem \ref{lem: genAccSym}, so the parameter-values that maximize acceptance for both states $\rho$ and $\sigma$ is the same. Thus, the same measurements maximize both acceptances. This observation also holds for the case $\mathcal{C}2$. In contrast, in the case $\mathcal{C}3$, we obtained acceptances as different functions of $c$ and $c_r$. So, the measurements that maximize acceptance for the state $\rho, i.e., A_\rho(\Lambda)$, will not be maximizing acceptance for the state $\sigma, i.e., A_\sigma(\Lambda)$. Similarly, the measurement that maximizes the acceptance for the state $\sigma, i.e., A_\sigma(\Lambda)$ will not maximize acceptance for the state $\rho, i.e., A_\rho(\Lambda)$.
\end{remark}

\begin{remark} \label{rem: obtainsMaxAccEq}
	The maximum acceptance for the two state can simultaneously be obtained by taking the measurement described by the POVM $\Lambda=\{\proj^{\max},0,\I-\proj^{\max}\}$ and $\Lambda=\{0,\proj^{\min},\I-\proj^{\min}\}$ in $\mathcal{C}1$ and $\mathcal{C}2$ respectively. In $\mathcal{C}3$, finding the measurement that achieves the maximum acceptance for the state $\rho$ (or $\sigma$) is involved. First step is finding $c_r$ such that $A_\rho^s$ (or $A_\sigma^s$) is obtained and placing corresponding optimal $\psi_{\max}$ and $\psi_{\min}$ and $c=\Upsilon_{\Tau^{\max},\Tau^{\min}}(\sigma,c_r)$, the maximum acceptance achieving measurement can be obtained by solving the optimization step. In $\mathcal{C}3$, the measurements that maximize acceptance for state $\rho$ and $\sigma$ are different.
\end{remark}

Now, having completed the analysis for the pair of states with the same support, in the next section, we maximize the acceptance for the pair of states $\rho$ and $\sigma$ such that $\Pi_\rho\neq\Pi_\sigma$.

\section{Case: $\rho$ and $\sigma$ do not have the same support i.e. $\Pi_\rho\neq\Pi_\sigma$} \label{sec: rhoNeqsigma}
Recall from \eqref{eq: symBase} that the minimum postselected symmetric error $e_s(\rho,\sigma,p)=\lr{ \Xi\left(p_\rho\rho,p_\sigma\sigma\right)+1}^{-1}$. 
If $\Pi_\rho\neq\Pi_\sigma$, from the definition given in \eqref{eq: defXi}, it turns out that $\Xi\left(p_\rho\rho,p_\sigma\sigma\right)=\infty$, hence $e_s(\rho,\sigma,p)=0$. In this section, we begin with finding the condition on $\Lambda$ to obtain $e(\Lambda)=e_s(\rho,\sigma,p)=0$. The following theorem states the set of all possible error-minimizing measurements as a union of three sets, each corresponding to one of the three conditions that ensures that the error vanishes. An arbitrary measurement must belong to one of three sets $\mathcal{E}_s^1(\rho,\sigma)$, $\mathcal{E}_s^2(\rho,\sigma)$ or $\mathcal{E}_s^3(\rho,\sigma)$ defined in the Theorem below to ensure that $e(\Lambda)=0$. A parameterization of measurements in these sets is described later in Lemma \ref{lem: para1}, \ref{lem: para2} and \ref{lem: para3} to allow construction of an arbitrary measurement from each of these sets.
\begin{theorem} \label{thm: beyondSymmetricSplit}
	For any measurement $\Lambda=\{\Lambda_\rho,\Lambda_\sigma,\I-\Lambda_\rho-\Lambda_\sigma\}\in\mathcal{M}$, $e(\Lambda)=0$ if and only if 
	\begin{gather}
		\Lambda\in \mathcal{E}_s(\rho,\sigma,p)=\mathcal{E}^1_{s}(\rho,\sigma)\cup \mathcal{E}^2_{s}(\rho,\sigma)\cup \mathcal{E}^3_{s}(\rho,\sigma),\text{ with }
	\end{gather}
	\vspace*{-8mm}
	\begin{align*}
		\mathcal{E}^1_{s}(\rho,\sigma)&\isdefinedas\{\Lambda:\Lr\in\mathcal{P}(\I-\Pi_\sigma),\Ls\in\mathcal{P}(\I-\Pi_{\rho+\sigma}),\tr\lr{\Lr\rho}\neq0,\Lr+\Ls\leq\I\},\\
		\mathcal{E}^2_{s}(\rho,\sigma)&\isdefinedas\{\Lambda:\Lr\in\mathcal{P}(\I-\Pi_{\rho+\sigma}),\Ls\in\mathcal{P}(\I-\Pi_\rho),\tr\lr{\Ls\sigma}\neq0,\Lr+\Ls\leq\I\},\\
		\mathcal{E}^3_{s}(\rho,\sigma)&\isdefinedas\{\Lambda:\Lr\in\mathcal{P}(\I-\Pi_\sigma),\Ls\in\mathcal{P}(\I-\Pi_\rho),\tr\lr{\Lr\rho}\neq0,\tr\lr{\Ls\sigma}\neq0,\Lr+\Ls\leq\I\}.
	\end{align*}
\begin{proof} The proof is given in Appendix \ref{proof: beyondSymmetricSplit}.
\end{proof}
\end{theorem}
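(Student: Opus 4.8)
The plan is to reduce the single equation $e(\Lambda)=0$ to a pair of trace-vanishing constraints together with a strict positivity requirement on the denominator, and then to translate each trace condition into a support (projection) condition. Since $e(\Lambda)$ is a ratio, $e(\Lambda)=0$ means precisely that the numerator $\tr\lr{p_\sigma\Lr\sigma+p_\rho\Ls\rho}$ vanishes \emph{and} the denominator $\tr\lr{(\Lr+\Ls)(p_\rho\rho+p_\sigma\sigma)}$ is strictly positive (otherwise the ratio is the undefined $0/0$ rather than $0$). I would first treat the numerator: because $p_\rho,p_\sigma>0$ and each of $\tr\lr{\Lr\sigma}$ and $\tr\lr{\Ls\rho}$ is a trace of a product of two positive semidefinite operators, hence nonnegative, the numerator vanishes if and only if $\tr\lr{\Lr\sigma}=0$ and $\tr\lr{\Ls\rho}=0$ separately.

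The key technical ingredient is the equivalence, for $A,B\geq0$, that $\tr(AB)=0\Leftrightarrow A\in\mathcal{P}(\I-\Pi_B)$. This follows from writing $\tr(AB)=\tr\lr{(B^{1/2}A^{1/2})^{\dagger}(B^{1/2}A^{1/2})}\geq0$, so the trace is zero iff $B^{1/2}A^{1/2}=0$, which is equivalent to $\Pi_B A\Pi_B=0$, i.e. $A=(\I-\Pi_B)A(\I-\Pi_B)$. Applying this gives the two ``always present'' support constraints obeyed by every error-vanishing measurement: $\tr\lr{\Lr\sigma}=0\Leftrightarrow\Lr\in\mathcal{P}(\I-\Pi_\sigma)$ and $\tr\lr{\Ls\rho}=0\Leftrightarrow\Ls\in\mathcal{P}(\I-\Pi_\rho)$.

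Next I would analyze the denominator under the assumption that the numerator already vanishes. The two cross terms $\tr\lr{\Lr\sigma}$ and $\tr\lr{\Ls\rho}$ are then zero, so the denominator collapses to $p_\rho\tr\lr{\Lr\rho}+p_\sigma\tr\lr{\Ls\sigma}$, a weighted sum of two nonnegative terms with positive weights. Its strict positivity is therefore equivalent to $\tr\lr{\Lr\rho}\neq0$ or $\tr\lr{\Ls\sigma}\neq0$, which partitions into exactly three exhaustive cases: (i) $\tr\lr{\Lr\rho}\neq0$ and $\tr\lr{\Ls\sigma}=0$; (ii) $\tr\lr{\Lr\rho}=0$ and $\tr\lr{\Ls\sigma}\neq0$; (iii) both nonzero. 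To match the stated sets I would use that combining two support constraints merges into one: since $\ker(\rho+\sigma)=\ker\rho\cap\ker\sigma$ for positive operators, any operator satisfying both $\tr\lr{\,\cdot\,\rho}=0$ and $\tr\lr{\,\cdot\,\sigma}=0$ lies in $\mathcal{P}(\I-\Pi_{\rho+\sigma})$. Feeding this into the three cases reproduces $\mathcal{E}^1_s,\mathcal{E}^2_s,\mathcal{E}^3_s$; for instance in case (i), $\Ls$ satisfies both vanishing traces and so $\Ls\in\mathcal{P}(\I-\Pi_{\rho+\sigma})$, while $\Lr\in\mathcal{P}(\I-\Pi_\sigma)$ with $\tr\lr{\Lr\rho}\neq0$, exactly $\mathcal{E}^1_s$. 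The converse inclusions follow by direct substitution, checking in each set that the numerator is zero and the denominator is positive.

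I expect the main obstacle to be careful bookkeeping rather than conceptual depth: establishing the identity $\mathcal{P}(\I-\Pi_\rho)\cap\mathcal{P}(\I-\Pi_\sigma)=\mathcal{P}(\I-\Pi_{\rho+\sigma})$ cleanly, and ensuring the degenerate $0/0$ measurement (the case $\tr\lr{\Lr\rho}=\tr\lr{\Ls\sigma}=0$, e.g. $\Lr=\Ls=0$) is correctly excluded from all three sets, so that the ``if and only if'' is tight and the union is genuinely exhaustive.
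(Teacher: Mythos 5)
Your proposal is correct and follows essentially the same route as the paper's proof in Appendix \ref{proof: beyondSymmetricSplit}: split $e(\Lambda)=0$ into numerator vanishing plus denominator positive, convert $\tr\lr{\Lr\sigma}=0$ and $\tr\lr{\Ls\rho}=0$ into support conditions via the equivalence $\tr(AB)=0\Leftrightarrow A\in\mathcal{P}(\I-\Pi_B)$ (the paper's Lemma \ref{lem: nullState}), and then case-split on which of $\tr\lr{\Lr\rho}$, $\tr\lr{\Ls\sigma}$ is nonzero, merging the doubly-vanishing case into $\mathcal{P}(\I-\Pi_{\rho+\sigma})$. The points you flag as needing care (the $0/0$ exclusion and the identity $\mathcal{P}(\I-\Pi_\rho)\cap\mathcal{P}(\I-\Pi_\sigma)=\mathcal{P}(\I-\Pi_{\rho+\sigma})$) are handled the same way in the paper.
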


%{\Blue 
%	\begin{remark}
%		On taking the number of states to be discriminated as $2$, USD measurements obtained in \cite{chefles1998unambiguous,feng2004unambiguous,stojnic2007unambiguous,bandyopadhyay2014unambiguous,jafarizadeh2008optimal} belong to the set given in Theorem \ref{thm: beyondSymmetricSplit}.
%	\end{remark}
%}

\begin{remark} \label{rm: errorMinMeasurementProperties} The three sets described in the Theorem \ref{thm: beyondSymmetricSplit} have a specific structure of outcomes, which is described below (also see Fig. \ref{fig:threeSets}).
	\begin{enumerate}[]
		\item The first set $\mathcal{E}_s^1(\rho,\sigma)$ consists of all the measurements for which only $\tr\lr{\Lr\rho}$ i.e. the probability of detecting $\rho$ when the given state is $\rho$ is non-zero and the probabilities $\tr\lr{\Lr\sigma}$, $\tr\lr{\Ls\rho}$ and $\tr\lr{\Ls\sigma}$ vanish. So, for any measurement that belongs to this set, if the given input state is $\rho$, the measurement either declares in favor of $\rho$ or doesn't make any decision (i.e. reject the test). However, if the given state is $\sigma$, it always rejects both the hypotheses. Fig. \ref{fig:threeSets}\text{(a)} illustrates this point.
		
		\item Similarly, the second set i.e. $\mathcal{E}_s^2(\rho,\sigma)$ consists of all the measurements for which only $\tr\lr{\Ls\sigma}$ i.e. the probability of detecting $\sigma$ when the given state is $\sigma$ is non-zero and the probabilities $\tr\lr{\Lr\rho}$, $\tr\lr{\Lr\sigma}$, and $\tr\lr{\Ls\rho}$ vanish. So, for any measurement that belongs to this set, if the given input state is $\sigma$, the measurement either declares in favor of $\sigma$ or reject the test. However, if the given state is $\rho$, it always rejects both the hypotheses. Fig. \ref{fig:threeSets}\text{(b)} illustrates this point.
		
		\item The third set i.e. $\mathcal{E}_s^3(\rho,\sigma)$ consists of all the measurements for which both $\tr\lr{\Lr\rho}$ and $\tr\lr{\Ls\sigma}$ are non-zero. So, for the given state $\rho$ (or $\sigma$), it either declares outcome as $\rho$ (or $\sigma$), or rejects both the hypotheses. Fig. \ref{fig:threeSets}\text{(c)} illustrates this point.
	\end{enumerate}
\end{remark}

\begin{figure}[h]
	\centering
	\begin{subfigure}[b]{0.25\textwidth}
		\centering
		\includegraphics[scale=0.6]{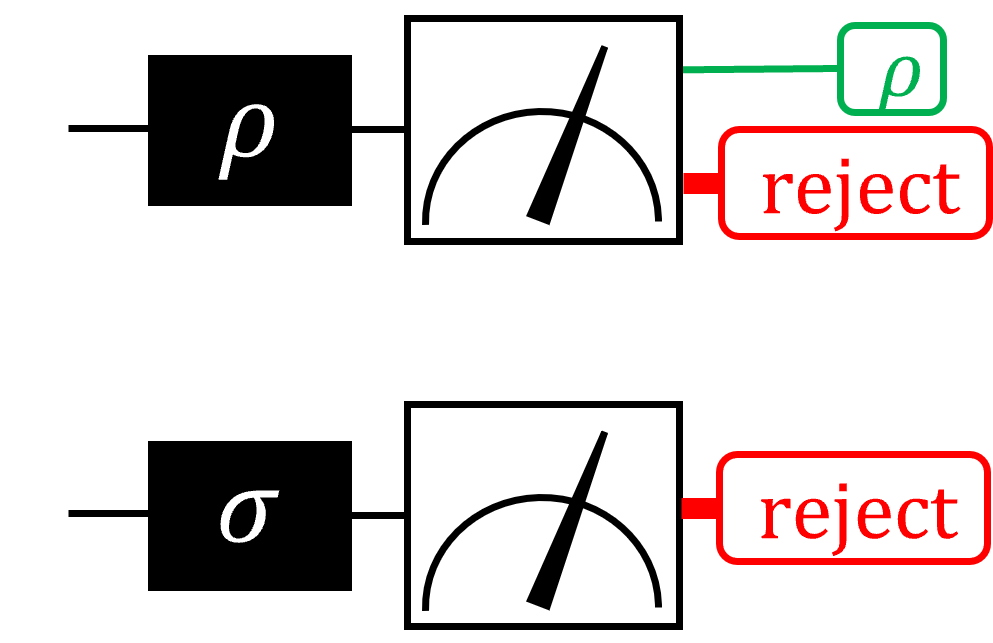}
		\caption{$\mathcal{E}^1_{s}(\rho,\sigma)$}
		\label{fig:threeSets1}
	\end{subfigure}
	\hspace{2mm}
	\begin{subfigure}[b]{0.25\textwidth}
		\centering
		\includegraphics[scale=0.6]{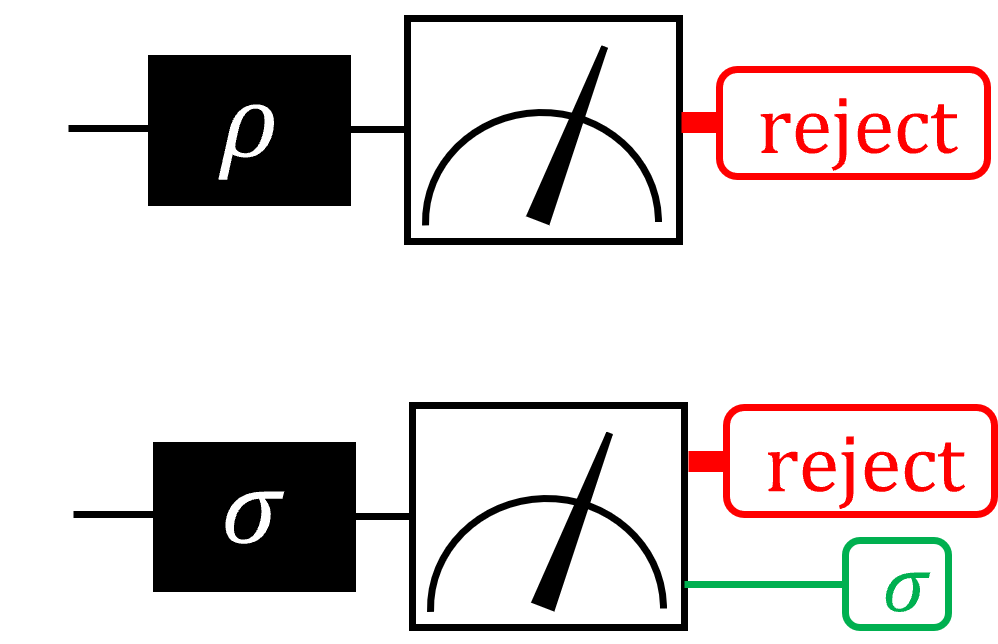}
		\caption{$\mathcal{E}^2_{s}(\rho,\sigma)$}
		\label{fig:threeSets2}
	\end{subfigure}
	\hspace{2mm}
	\begin{subfigure}[b]{0.25\textwidth}
		\centering
		\includegraphics[scale=0.6]{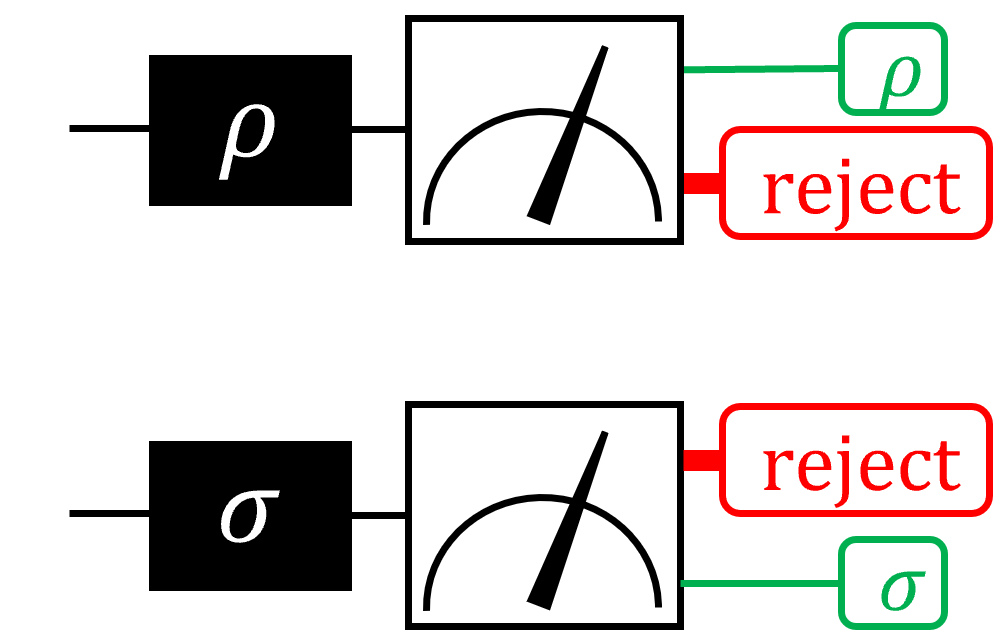}
		\caption{$\mathcal{E}^3_{s}(\rho,\sigma)$}
		\label{fig:threeSets3}
	\end{subfigure}
	\caption{The figure shows possible outcomes for an arbitrary measurement from the set $\mathcal{E}^1_{s}(\rho,\sigma)$, $\mathcal{E}^2_{s}(\rho,\sigma)$ and $\mathcal{E}^3_{s}(\rho,\sigma)$.}
	\label{fig:threeSets}
\end{figure}

\begin{remark}
	Note that for any measurement for which error vanishes, $\tr(p_\sigma\Lambda_\rho\sigma+p_\rho\Lambda_\sigma\rho)=0$, or equivalently $\tr(\Lambda_\rho\sigma)=0$ and $\tr(\Lambda_\sigma\rho)=0$. Thus, both type-1 and type-2 error must vanish. It follows from the fact that the numerator of $e(\Lambda)$ has to vanish to ensure that the error vanishes. So, such a measurement either declares the correct outcome or rejects both hypotheses but never declares the wrong outcome. {\Blue Quantum state discrimination with such requirements is called \textbf{unambiguous state discrimination (USD)}. Measurements that lead to zero probability of wrong outcome are called USD measurements. Such discrimination has been studied quite a lot in the past. On taking the number of states to be discriminated as $2$, USD measurements obtained in \cite{chefles1998unambiguous,feng2004unambiguous,stojnic2007unambiguous,bandyopadhyay2014unambiguous,jafarizadeh2008optimal} belong to the set given in Theorem \ref{thm: beyondSymmetricSplit}.} Further, note that one of $\tr\lr{\Lr\rho}$ or $\tr\lr{\Ls\sigma}$, i.e., at least one of the probability of correct decision has to be non-zero to ensure that the denominator in $e(\Lambda)$ remains non-zero.
\end{remark}

\begin{remark}
Note that the set and respective conditions depend only on the subspace spanned by eigenvectors of $\rho$ and $\sigma$ and not on the prior probabilities $p_\rho$ and $p_\sigma$. So, error-minimizing measurements are independent of the prior probabilities when $\Pi_\rho\neq \Pi_\sigma$.
\end{remark}

\begin{corollary} \label{col: case1}
	If $\Pi_\sigma<\Pi_\rho, \mathcal{E}_{s}(\rho,\sigma,p)=\mathcal{E}^1_{s}(\rho,\sigma)$. So, the set of all error-minimizing measurements is given only by the first set, i.e. $\mathcal{E}_s^1(\rho,\sigma)$. Further, for the postselected symmetric error minimizing measurements, acceptance for the state $\sigma$, $A_\sigma(\Lambda)=0$, so if the input state is $\sigma$, an error-minimizing measurement always rejects the test. Also, any error-minimizing measurement never declares the unknown state as $\sigma$. See Fig. \ref{fig:threeSets}(a) for clarity. 
	\begin{proof}
		If $\Pi_\sigma\leq\Pi_\rho\Rightarrow \Pi_{\rho+\sigma}=\Pi_\rho$. So, if $\Ls\in\mathcal{P}(\I-\Pi_\rho),$ then $\Ls\in\mathcal{P}(\I-\Pi_{\rho+\sigma})$ and so $\tr\lr{\Ls\sigma}=0$, so $\mathcal{E}_s^2(\rho,\sigma)$ and $\mathcal{E}_s^3(\rho,\sigma)$ are empty sets and $$\mathcal{E}_s(\rho,\sigma,p)=\mathcal{E}_s^1(\rho,\sigma).$$
		Now, note that for any $\Lambda\in \mathcal{E}_s^1(\rho,\sigma),\tr\lr{\Ls\sigma}=0$ and $\tr\lr{\Ls\rho}=0$, so $A_\sigma(\Lambda)=0$ for any error minimizing measurement.
	\end{proof}
\end{corollary}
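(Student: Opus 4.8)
The plan is to use the support hypothesis $\Pi_\sigma<\Pi_\rho$ to collapse the three-way union of Theorem \ref{thm: beyondSymmetricSplit} down to its first piece, and then read off the acceptance claim directly from the support constraints defining $\mathcal{E}^1_s(\rho,\sigma)$.

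First I would record the single structural consequence that drives everything: since $\Pi_\sigma<\Pi_\rho$ means the support of $\sigma$ is (strictly) contained in that of $\rho$, the support of $\rho+\sigma$ equals that of $\rho$, i.e. $\Pi_{\rho+\sigma}=\Pi_\rho$. I would also note that $\Pi_\sigma\leq\Pi_\rho$ together with the projector identity $\Pi_\sigma\sigma\Pi_\sigma=\sigma$ yields $\Pi_\rho\sigma=\sigma$, hence $(\I-\Pi_\rho)\sigma=0$.

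Next I would show $\mathcal{E}^2_s(\rho,\sigma)=\mathcal{E}^3_s(\rho,\sigma)=\emptyset$. Both sets require a component $\Ls\in\mathcal{P}(\I-\Pi_\rho)$ with $\tr\lr{\Ls\sigma}\neq 0$. But invariance gives $\Ls=(\I-\Pi_\rho)\Ls(\I-\Pi_\rho)$, so by cyclicity of the trace $\tr\lr{\Ls\sigma}=\tr\lr{\Ls(\I-\Pi_\rho)\sigma(\I-\Pi_\rho)}=0$ using $(\I-\Pi_\rho)\sigma=0$. This contradicts $\tr\lr{\Ls\sigma}\neq0$, so neither set has any element. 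Theorem \ref{thm: beyondSymmetricSplit} then forces $\mathcal{E}_s(\rho,\sigma,p)=\mathcal{E}^1_s(\rho,\sigma)$.

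Finally, for any $\Lambda\in\mathcal{E}^1_s(\rho,\sigma)$ I would compute $A_\sigma(\Lambda)=\tr\lr{\Lr\sigma}+\tr\lr{\Ls\sigma}$ term by term. Since $\Lr\in\mathcal{P}(\I-\Pi_\sigma)$ and $(\I-\Pi_\sigma)\sigma=0$, the first term vanishes; since $\Ls\in\mathcal{P}(\I-\Pi_{\rho+\sigma})=\mathcal{P}(\I-\Pi_\rho)$, the second vanishes by the identity just established. Hence $A_\sigma(\Lambda)=0$, i.e. on input $\sigma$ the measurement always rejects. The same invariance of $\Ls$ also gives $\tr\lr{\Ls\rho}=0$ via $(\I-\Pi_\rho)\rho=0$, so the outcome $\sigma$ is never declared, regardless of the input. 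The only thing demanding care is the bookkeeping of the invariance and support identities (in particular deriving $\Pi_\rho\sigma=\sigma$ from $\Pi_\sigma\leq\Pi_\rho$); once $\Pi_{\rho+\sigma}=\Pi_\rho$ and $(\I-\Pi_\rho)\sigma=0$ are in hand, the logical structure is immediate.
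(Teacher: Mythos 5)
Your proposal is correct and follows essentially the same route as the paper: use $\Pi_\sigma\leq\Pi_\rho\Rightarrow\Pi_{\rho+\sigma}=\Pi_\rho$ to show that the condition $\Ls\in\mathcal{P}(\I-\Pi_\rho)$ forces $\tr\lr{\Ls\sigma}=0$, which empties $\mathcal{E}_s^2(\rho,\sigma)$ and $\mathcal{E}_s^3(\rho,\sigma)$, and then read off $A_\sigma(\Lambda)=0$ from the support constraints defining $\mathcal{E}_s^1(\rho,\sigma)$. If anything, your version is slightly more complete, since you explicitly note that $\tr\lr{\Lr\sigma}=0$ (from $\Lr\in\mathcal{P}(\I-\Pi_\sigma)$) is also needed for $A_\sigma(\Lambda)=0$, a step the paper leaves implicit.
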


\begin{corollary} \label{col: case2}
	If $\Pi_\rho<\Pi_\sigma, \mathcal{E}_{s}(\rho,\sigma)=\mathcal{E}^2_{s}(\rho,\sigma)$ and $A_\rho(\Lambda)=0$.
	So, the set of all error-minimizing measurements is given only by the second set, i.e., $\mathcal{E}_s^2(\rho,\sigma)$. Further, for the postselected symmetric error minimizing measurements, acceptance for the state $\rho$, $A_\rho(\Lambda)=0$, so if the input state is $\rho$, the error-minimizing measurement always rejects both the hypotheses. Also, any error-minimizing measurement never declares the unknown state as $\rho$. See Fig. \ref{fig:threeSets}(b) for clarity. 
	\begin{proof}Follows similar to the proof of Corollary \ref{col: case1}.
	\end{proof}
\end{corollary}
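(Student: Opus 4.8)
The plan is to mirror the argument of Corollary~\ref{col: case1} with the roles of $\rho$ and $\sigma$ interchanged, reading off the structure of the three sets from Theorem~\ref{thm: beyondSymmetricSplit}. First I would record the elementary support facts implied by $\Pi_\rho<\Pi_\sigma$, namely $\Pi_\rho\leq\Pi_\sigma$ (hence $\Pi_\sigma\Pi_\rho=\Pi_\rho$), which gives $\Pi_{\rho+\sigma}=\Pi_\sigma$ and $\Pi_\sigma\rho\Pi_\sigma=\rho$, so that $(\I-\Pi_\sigma)\rho=\rho(\I-\Pi_\sigma)=0$. These identities are the only ingredients beyond Theorem~\ref{thm: beyondSymmetricSplit}.

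The key step is to show that the defining membership condition $\tr\lr{\Lr\rho}\neq0$ of both $\mathcal{E}^1_{s}(\rho,\sigma)$ and $\mathcal{E}^3_{s}(\rho,\sigma)$ is impossible under $\Pi_\rho<\Pi_\sigma$. In both sets one requires $\Lr\in\mathcal{P}(\I-\Pi_\sigma)$, i.e. $(\I-\Pi_\sigma)\Lr(\I-\Pi_\sigma)=\Lr$. Using cyclicity of the trace together with the support fact above, I would compute $\tr\lr{\Lr\rho}=\tr\lr{\Lr(\I-\Pi_\sigma)\rho(\I-\Pi_\sigma)}=0$, contradicting $\tr\lr{\Lr\rho}\neq0$. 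Hence $\mathcal{E}^1_{s}(\rho,\sigma)$ and $\mathcal{E}^3_{s}(\rho,\sigma)$ are both empty, and the decomposition in Theorem~\ref{thm: beyondSymmetricSplit} collapses to $\mathcal{E}_s(\rho,\sigma)=\mathcal{E}^2_{s}(\rho,\sigma)$, giving the first assertion.

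It then remains to evaluate the acceptance on $\mathcal{E}^2_{s}(\rho,\sigma)$. For any such $\Lambda$ one has $\Lr\in\mathcal{P}(\I-\Pi_{\rho+\sigma})=\mathcal{P}(\I-\Pi_\sigma)$ and $\Ls\in\mathcal{P}(\I-\Pi_\rho)$. The first membership again yields $\tr\lr{\Lr\rho}=0$ (and, applying the same identity to $\sigma$, $\tr\lr{\Lr\sigma}=0$, so $\rho$ is never declared on either input), while $\Ls\in\mathcal{P}(\I-\Pi_\rho)$ gives $\tr\lr{\Ls\rho}=\tr\lr{\Ls(\I-\Pi_\rho)\rho(\I-\Pi_\rho)}=0$ since $(\I-\Pi_\rho)\rho=0$. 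Adding the two contributions gives $A_\rho(\Lambda)=\tr\lr{\Lr\rho}+\tr\lr{\Ls\rho}=0$, i.e. the input $\rho$ is always rejected. I do not expect any genuine obstacle: the content is entirely projection/support bookkeeping, and the single point requiring care is the repeated use of the identity $\tr\lr{\Gamma\mu}=0$ whenever $\Gamma\in\mathcal{P}(\I-\Pi_\nu)$ and $\Pi_\mu\leq\Pi_\nu$, which follows from $\Pi_\nu\mu\Pi_\nu=\mu$ and cyclicity of the trace. The whole argument is exactly dual to Corollary~\ref{col: case1}, with the inequality between supports reversed.
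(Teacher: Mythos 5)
Your proposal is correct and follows essentially the same route as the paper, which simply dualizes the support-bookkeeping argument of Corollary \ref{col: case1}: under $\Pi_\rho<\Pi_\sigma$ one has $\Pi_{\rho+\sigma}=\Pi_\sigma$, so the requirement $\tr\lr{\Lr\rho}\neq0$ in $\mathcal{E}^1_{s}$ and $\mathcal{E}^3_{s}$ is incompatible with $\Lr\in\mathcal{P}(\I-\Pi_\sigma)$, leaving only $\mathcal{E}^2_{s}$, on which $A_\rho(\Lambda)=0$ by the same trace identities. The only cosmetic difference is that you verify the vanishing traces by a direct computation rather than by chaining set memberships as the paper does for the mirrored case.
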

We now give a parameterization of the three sets in the following three lemmas.
\begin{lemma} \label{lem: para1} The set $\mathcal{E}^1_{s}(\rho,\sigma)$ is completely characterized as $\mathcal{E}^1_{s}(\rho,\sigma)=$
	\begin{align*}
		\Bigg\{\Lambda:&\Lr=\frac{c_1\psi_\rho}{\tr\lr{\psi_\rho\rho}},\psi_\rho\in\mathcal{S}(\I-\Pi_\sigma),\Ls\leq\I-\Lr,\Ls\in\mathcal{P}(\I-\Pi_{\rho+\sigma}),0<c_1\leq\left\|\frac{\psi_\rho}{\tr\lr{\psi_\rho\rho}}\right\|_{\infty}^{-1}\Bigg\}.
	\end{align*}
\begin{proof}From the definition of $\mathcal{E}_s^1(\rho,\sigma)$ as given in Theorem \ref{thm: beyondSymmetricSplit}, we get
	\begin{gather*}
		\Lr\in\mathcal{P}(\I-\Pi_\sigma),\Ls\in\mathcal{P}(\I-\Pi_{\rho+\sigma}),\tr\lr{\Lr\rho}\neq0,\Lr+\Ls\leq\I.
		\intertext{Taking $\psi_\rho\in\mathcal{S}(\I-\Pi_\sigma),c_1>0$, the above mentioned conditions can be parameterized as}
		\ds \Lr=\frac{c_1\psi_\rho}{\tr\lr{\psi_\rho\rho}},c_1>0,\psi_\rho\in\mathcal{S}(\I-\Pi_\sigma),\Lr\leq\I,\Ls\leq\I-\Lr,\Ls\in\mathcal{P}(\I-\Pi_{\rho+\sigma}).
		\intertext{Using $\Lr\leq \I$, we get an upper bound on $c_1$ and thus equivalent parameterization becomes}
		\ds \Lr=\frac{c_1\psi_\rho}{\tr\lr{\psi_\rho\rho}},\psi_\rho\in\mathcal{S}(\I-\Pi_\sigma),0<c_1\leq\left\|\frac{\psi_\rho}{\tr\lr{\psi_\rho\rho}}\right\|_{\infty}^{-1},\Ls\leq\I-\Lr,\Ls\in\mathcal{P}(\I-\Pi_{\rho+\sigma}).
	\end{gather*}
	This gives the desired result.
\end{proof}

\end{lemma}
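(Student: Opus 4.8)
The plan is to prove the claimed set equality by establishing a chain of logical equivalences between the defining constraints of $\mathcal{E}^1_{s}(\rho,\sigma)$ given in Theorem \ref{thm: beyondSymmetricSplit} and the constraints appearing in the parameterized description. Since the two conditions on $\Ls$, namely $\Ls\in\mathcal{P}(\I-\Pi_{\rho+\sigma})$ and $\Ls\leq\I-\Lr$, appear verbatim on both sides, the entire content reduces to re-expressing the conditions on $\Lr$: that $\Lr\in\mathcal{P}(\I-\Pi_\sigma)$, that $\tr\lr{\Lr\rho}\neq0$, and the consequence $\Lr\leq\I$ of $\Lr+\Ls\leq\I$ (valid since $\Ls\geq0$).

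First I would handle the forward direction. Given any $\Lr\geq0$ with $\Lr\in\mathcal{P}(\I-\Pi_\sigma)$ and $\tr\lr{\Lr\rho}\neq0$, note that $\Lr\neq0$, hence $\tr\lr{\Lr}>0$. I would then \emph{define} the normalized direction $\psi_\rho\isdefinedas\Lr/\tr\lr{\Lr}\in\mathcal{S}(\I-\Pi_\sigma)$, which is a density operator invariant under $\I-\Pi_\sigma$, together with the scalar $c_1\isdefinedas\tr\lr{\Lr\rho}>0$. A direct substitution (using $\tr\lr{\psi_\rho\rho}=c_1/\tr\lr{\Lr}$) shows $c_1\psi_\rho/\tr\lr{\psi_\rho\rho}=\Lr$, so $\Lr$ takes exactly the claimed form. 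This correspondence is in fact bijective on the nonzero part: given $\Lr$, the operator $\psi_\rho$ must be proportional to $\Lr$ with unit trace and is therefore uniquely determined, which then forces $c_1=\tr\lr{\Lr\rho}$.

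Next I would convert the operator inequality $\Lr\leq\I$ into the scalar bound on $c_1$. Here I would invoke the elementary fact that for a positive semidefinite operator $X$ one has $c_1X\leq\I$ if and only if $c_1\|X\|_{\infty}\leq1$; applying this with $X=\psi_\rho/\tr\lr{\psi_\rho\rho}$ yields $0<c_1\leq\|\psi_\rho/\tr\lr{\psi_\rho\rho}\|_{\infty}^{-1}$, which is precisely the stated range. The reverse direction is then immediate: any $\Lr=c_1\psi_\rho/\tr\lr{\psi_\rho\rho}$ with $\psi_\rho\in\mathcal{S}(\I-\Pi_\sigma)$ and $c_1$ in the stated range is a positive multiple of a density operator supported on $\I-\Pi_\sigma$, hence lies in $\mathcal{P}(\I-\Pi_\sigma)$; it satisfies $\tr\lr{\Lr\rho}=c_1>0$ and $\Lr\leq\I$, and combined with $\Ls\leq\I-\Lr$ recovers $\Lr+\Ls\leq\I$. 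Assembling these equivalences back into set-builder notation gives the stated characterization.

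This argument is essentially a bookkeeping re-parameterization, so I do not expect a deep obstacle. The only point requiring mild care is the legitimacy of the normalization: one must check that $\tr\lr{\Lr\rho}\neq0$ guarantees $\Lr\neq0$, so that dividing by $\tr\lr{\Lr}$ is valid, and that the denominator $\tr\lr{\psi_\rho\rho}$ is nonzero, which holds because $\psi_\rho\propto\Lr$ and $\tr\lr{\Lr\rho}\neq0$. Everything else is a transparent translation of positivity and the sub-normalization constraint into the scalar parameter $c_1$ and the direction $\psi_\rho$.
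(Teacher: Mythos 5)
Your proposal is correct and follows essentially the same route as the paper: both proofs are a chain of equivalences that re-parameterizes $\Lr$ as $c_1\psi_\rho/\tr\lr{\psi_\rho\rho}$ with $\psi_\rho\in\mathcal{S}(\I-\Pi_\sigma)$ and converts $\Lr\leq\I$ into the scalar bound $c_1\leq\left\|\psi_\rho/\tr\lr{\psi_\rho\rho}\right\|_{\infty}^{-1}$, leaving the conditions on $\Ls$ untouched. Your version merely makes explicit the normalization $\psi_\rho=\Lr/\tr\lr{\Lr}$, $c_1=\tr\lr{\Lr\rho}$ and the well-definedness checks that the paper leaves implicit.
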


\begin{lemma} \label{lem: para2}  The set $\mathcal{E}^2_{s}(\rho,\sigma)$ is completely characterized as $\mathcal{E}^2_{s}(\rho,\sigma)=$
\begin{align*}
	\Bigg\{\Lambda&:\Ls=\frac{c_2\psi_\sigma}{\tr\lr{\psi_\sigma\sigma}},\psi_\sigma\in\mathcal{S}(\I-\Pi_\rho),\Lr\leq\I-\Ls,\Lr\in\mathcal{P}(\I-\Pi_{\rho+\sigma}),0<c_2\leq\left\|\frac{\psi_\sigma}{\tr\lr{\psi_\sigma\sigma}}\right\|_{\infty}^{-1}\Bigg\}.
\end{align*}
	\begin{proof} The proof follows similar to the proof of Lemma \ref{lem: para1}.
%		From the definition of $\mathcal{E}_s^2(\rho,\sigma)$ as given in Theorem \ref{thm: beyondSymmetricSplit}, we get
%		\begin{align*}
%			&\hspace{-5mm}\Lr\in\mathcal{P}(\I-\Pi_{\rho+\sigma}),\Ls\in\mathcal{P}(\I-\Pi_\rho),\tr\lr{\Ls\sigma}\neq0,\Lr+\Ls\leq\I\\
%			&\Leftrightarrow\ds \Ls=\frac{c_2\psi_\sigma}{\tr\lr{\psi_\sigma\sigma}},c_2>0,\psi_\sigma\in\mathcal{S}(\I-\Pi_\rho),\Ls\leq\I,\Lr\leq\I-\Ls,\Lr\in\mathcal{P}(\I-\Pi_{\rho+\sigma})\\
%			&\Leftrightarrow\ds \Ls=\frac{c_2\psi_\sigma}{\tr\lr{\psi_\sigma\sigma}},\psi_\sigma\in\mathcal{S}(\I-\Pi_\rho),0<c_2\leq\left\|\frac{\psi_\sigma}{\tr\lr{\psi_\sigma\sigma}}\right\|_{\infty}^{-1},\Lr\leq\I-\Ls,\Lr\in\mathcal{P}(\I-\Pi_{\rho+\sigma}).
%		\end{align*}
%	This gives the desired result.
	\end{proof}	
\end{lemma}

\begin{lemma} \label{lem: para3} The set $\mathcal{E}^3_{s}(\rho,\sigma)$ is completely characterized as
	\begin{align*}
		\mathcal{E}^3_{s}(\rho,\sigma)=\Bigg\{\Lambda:\Lr=\frac{c_3c_r\psi_\rho}{\tr\lr{\psi_\rho\rho}}, \psi_\rho\in\mathcal{S}(\I-\Pi_\sigma),& \ \Ls=\frac{c_3(1-c_r)\psi_\sigma}{\tr\lr{\psi_\sigma\sigma}},\psi_\sigma\in\mathcal{S}(\I-\Pi_\rho),\\&
		c_r\in[0,1], 0<c_3\leq\left\|\frac{c_r\psi_\rho}{\tr\lr{\psi_\rho\rho}}+\frac{(1-c_r)\psi_\sigma}{\tr\lr{\psi_\sigma\sigma}}\right\|_{\infty}^{-1}\Bigg\}.
	\end{align*}
\begin{proof} From the definition of $\mathcal{E}_s^3(\rho,\sigma)$ as given in Theorem \ref{thm: beyondSymmetricSplit}, we get
	\begin{gather*}
		\Lr\in\mathcal{P}(\I-\Pi_\sigma),\Ls\in\mathcal{P}(\I-\Pi_\rho),\tr\lr{\Lr\rho}\neq0,\tr\lr{\Ls\sigma}\neq0,\Lr+\Ls\leq\I.
		\intertext{Taking $\psi_\rho\in\mathcal{S}(\I-\Pi_\sigma),\psi_\sigma\in\mathcal{S}(\I-\Pi_\rho),c_r\in[0,1],c_3>0$, the above mentioned conditions can be parameterized as}
		\Lr=\frac{c_3c_r\psi_\rho}{\tr\lr{\psi_\rho\rho}},\Ls=\frac{c_3(1-c_r)\psi_\sigma}{\tr\lr{\psi_\sigma\sigma}},\Lr+\Ls\leq\I.
		\intertext{Now, substituting $\Lr$ and $\Ls$ in the condition $\Lr+\Ls\leq\I$, we get this condition as a bound on $c_3$ as}
		\Lr=\frac{c_3c_r\psi_\rho}{\tr\lr{\psi_\rho\rho}},\Ls=\frac{c_3(1-c_r)\psi_\sigma}{\tr\lr{\psi_\sigma\sigma}}, c_3\leq\left\|\frac{c_r\psi_\rho}{\tr\lr{\psi_\rho\rho}}+\frac{(1-c_r)\psi_\sigma}{\tr\lr{\psi_\sigma\sigma}}\right\|_{\infty}^{-1}.
	\end{gather*}
This gives the desired result.
\end{proof}
\end{lemma}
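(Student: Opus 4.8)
The plan is to mirror the two-step parameterization already used for $\mathcal{E}^1_s$ and $\mathcal{E}^2_s$ in Lemmas \ref{lem: para1} and \ref{lem: para2}, the essential new feature being that $\Lr$ and $\Ls$ are now coupled only through the single constraint $\Lr+\Ls\leq\I$, so their scales cannot be bounded independently. Starting from the defining conditions of $\mathcal{E}^3_s(\rho,\sigma)$ in Theorem \ref{thm: beyondSymmetricSplit}, namely $\Lr\in\mathcal{P}(\I-\Pi_\sigma)$, $\Ls\in\mathcal{P}(\I-\Pi_\rho)$, $\tr\lr{\Lr\rho}\neq0$, $\tr\lr{\Ls\sigma}\neq0$, and $\Lr+\Ls\leq\I$, I would first observe that since $\tr\lr{\Lr\rho}\neq0$ forces $\Lr\neq0$, any admissible $\Lr$ can be written as $\Lr=a\,\psi_\rho/\tr\lr{\psi_\rho\rho}$ with density operator $\psi_\rho=\Lr/\tr(\Lr)\in\mathcal{S}(\I-\Pi_\sigma)$ and positive scalar $a=\tr\lr{\Lr\rho}$; symmetrically $\Ls=b\,\psi_\sigma/\tr\lr{\psi_\sigma\sigma}$ with $\psi_\sigma\in\mathcal{S}(\I-\Pi_\rho)$ and $b=\tr\lr{\Ls\sigma}>0$. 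This choice conveniently encodes the two detection probabilities directly as $a$ and $b$.

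The second step is to recombine the scalars into a single budget $c_3=a+b>0$ together with a split ratio $c_r=a/(a+b)$, so that $a=c_3c_r$ and $b=c_3(1-c_r)$, yielding the stated forms $\Lr=c_3c_r\psi_\rho/\tr\lr{\psi_\rho\rho}$ and $\Ls=c_3(1-c_r)\psi_\sigma/\tr\lr{\psi_\sigma\sigma}$. The two nonvanishing-trace conditions translate to $c_3c_r\neq0$ and $c_3(1-c_r)\neq0$, which together with $c_3>0$ pin $c_r$ to the open interval $(0,1)$ (equivalently, $c_r\in[0,1]$ subject to the nonzero-trace requirements, matching the boundary degeneration into $\mathcal{E}^2_s$ as $c_r\to0$ and into $\mathcal{E}^1_s$ as $c_r\to1$).

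Finally, I would turn the coupling constraint into the bound on $c_3$. Writing $\Lr+\Ls=c_3\lr{c_r\psi_\rho/\tr\lr{\psi_\rho\rho}+(1-c_r)\psi_\sigma/\tr\lr{\psi_\sigma\sigma}}$, which is a positive operator, and using that for any $X\geq0$ one has $X\leq\I\Leftrightarrow\|X\|_\infty\leq1$, the constraint $\Lr+\Ls\leq\I$ becomes exactly $c_3\leq\bigl\|c_r\psi_\rho/\tr\lr{\psi_\rho\rho}+(1-c_r)\psi_\sigma/\tr\lr{\psi_\sigma\sigma}\bigr\|_\infty^{-1}$. Collecting these equivalences and writing them as a set gives the claimed characterization; the reverse inclusion is immediate, since every operator of the stated form is manifestly positive, lies in $\mathcal{P}(\I-\Pi_\sigma)$ respectively $\mathcal{P}(\I-\Pi_\rho)$, satisfies $\Lr+\Ls\leq\I$ by the $c_3$-bound, and has $\tr\lr{\Lr\rho}=c_3c_r\neq0$ and $\tr\lr{\Ls\sigma}=c_3(1-c_r)\neq0$.

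I expect the only delicate point to be the coupling itself: whereas in Lemma \ref{lem: para1} the operator $\Ls$ is essentially free and only $\Lr$ requires a scale bound, here neither scalar can be bounded separately, so the operator-norm bound must be taken on the combined positive operator $c_r\psi_\rho/\tr\lr{\psi_\rho\rho}+(1-c_r)\psi_\sigma/\tr\lr{\psi_\sigma\sigma}$ rather than on $\Lr$ and $\Ls$ individually. Beyond this, the argument is a routine bookkeeping of normalizations, and the remaining steps are the same positivity and invariant-subspace verifications as in the preceding two lemmas.
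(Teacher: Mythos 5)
Your proposal is correct and follows essentially the same route as the paper's proof: parameterize $\Lr$ and $\Ls$ by normalized density operators in $\mathcal{S}(\I-\Pi_\sigma)$ and $\mathcal{S}(\I-\Pi_\rho)$ with scalars $c_3,c_r$, then convert $\Lr+\Ls\leq\I$ into the operator-norm bound on $c_3$; your explicit extraction of $\psi_\rho=\Lr/\tr(\Lr)$, $a=\tr\lr{\Lr\rho}$, $c_3=a+b$, $c_r=a/(a+b)$ just spells out the reparameterization the paper states tersely. Your remark that the nonvanishing-trace conditions actually pin $c_r$ to the open interval $(0,1)$ is a valid (and slightly more careful) reading than the paper's stated $c_r\in[0,1]$, but it does not change the substance of the argument.
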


{\Blue 
	\begin{remark}
		We can see that parameterized error-minimizing measurements in Lemma \ref{lem: para1}, \ref{lem: para2},  and \ref{lem: para3} also satisfy conditions for USD obtained in \cite{chefles1998unambiguous,feng2004unambiguous,stojnic2007unambiguous,bandyopadhyay2014unambiguous,jafarizadeh2008optimal}. Hence, we can see that the set of error minimizing measurements and their parameterized form presented in this paper as a generalization of USD measurements. As an intermediate stage of the derivation, unambiguity condition in \cite{jafarizadeh2008optimal} is mentioned as $\mathrm{supp}(\Lambda_\rho)\subseteq \mathrm{ker}(\sigma)\text{ and }\mathrm{supp}(\Lambda_\sigma) \subseteq \mathrm{ker}(\rho)$, which is also satisfied by all the measurements in Lemma \ref{lem: para1},\ref{lem: para2}, and \ref{lem: para3}.
	\end{remark}
}

\begin{remark}
	Note that all in the three sets, i.e., $\mathcal{E}^1_{s}(\rho,\sigma), \mathcal{E}^2_{s}(\rho,\sigma)$ and $\mathcal{E}^3_{s}(\rho,\sigma)$, while parameterize measurement in terms of $\psi_\rho$ and $\psi_\sigma$,  taking $\psi_\rho$ and $\psi_\sigma$ such that $\tr\lr{\psi_\rho\rho}=0$ or $\tr\lr{\psi_\sigma\sigma}=0$ is not valid, as clear from the expression of $\Lr$ and $\Ls$ in Lemma \ref{lem: para1}, Lemma \ref{lem: para2}, and Lemma \ref{lem: para3}. This ensures that $\tr\lr{\Lambda_\rho\rho}\neq0$ or $\tr\lr{\Ls\sigma}\neq0$. This is needed because $\tr\lr{\Lambda_\rho\rho}\neq0$ or $\tr\lr{\Ls\sigma}\neq0$ is among the conditions on the sets $\mathcal{E}_s^1(\rho,\sigma)$, $\mathcal{E}_s^2(\rho,\sigma)$ and $\mathcal{E}_s^3(\rho,\sigma)$ in Theorem \ref{thm: beyondSymmetricSplit}. This further ensures that denominator of $e(\Lambda)$ is non-zero and $e(\Lambda)$ remains defined.
\end{remark}

\noindent \textbf{Constructing an arbitrary error minimizing measurement: }In Theorem \ref{thm: beyondSymmetricSplit}, we saw that an arbitrary error-minimizing measurement $\Lambda$ must belong to one of $\mathcal{E}_s^1(\rho,\sigma)$, $\mathcal{E}_s^2(\rho,\sigma)$ or $\mathcal{E}_s^3(\rho,\sigma)$. Further, the sets were written in a parameterized form in the Lemma \ref{lem: para1}, Lemma \ref{lem: para2} and Lemma \ref{lem: para3}. Utilizing these parameters, a method to construct an arbitrary error-minimizing measurement is given in Table \ref{table: neqConstruction}. The first column in the table states the free variables/parameters to take, and the other three give the constraints on these. The conditions for $\mathcal{E}_s^1(\rho,\sigma)$, $\mathcal{E}_s^2(\rho,\sigma)$ and $\mathcal{E}_s^3(\rho,\sigma)$ are written in second, third and fourth columns respectively. To construct an arbitrary error-minimizing measurement from $\mathcal{E}_s^1(\rho,\sigma)$, we can start from the first row, pick an arbitrary free variable satisfying the conditions stated in the second column, and proceed towards the last row in this way. After getting $\Lr$ and $\Ls$ in the last row, the error-minimizing measurement is given by $\{\Lr,\Ls,\I-\Lr-\Ls\}$. Constructing measurements from the sets $\mathcal{E}_s^2(\rho,\sigma)$ or $\mathcal{E}_s^3(\rho,\sigma)$ utilizes similar process over corresponding columns.

\begin{table}[H]
	\centering
	\begin{tabular}{||c||c|c|c||}
		 \hline
		 \hline
		 \multirow{2}{*}{Set}&\multirow{2}{*}{$\mathcal{E}_s^1(\rho,\sigma)$}&\multirow{2}{*}{$\mathcal{E}_s^2(\rho,\sigma)$}&\multirow{2}{*}{$\mathcal{E}_s^3(\rho,\sigma)$}\\ &  &  & \\\hline\hline
		\multirow{2}{*}{$\psi_\rho$,}& \multirow{2}{*}{$\psi_\rho\in\mathcal{S}(\I-\Pi_\sigma)$,}  &\multirow{2}{*}{Not needed,} &  \multirow{2}{*}{$\psi_\rho\in\mathcal{S}(\I-\Pi_\sigma)$,}  \\&&&\\
		\multirow{2}{*}{$\psi_\sigma$}& \multirow{2}{*}{Not needed} &  \multirow{2}{*}{$\psi_\sigma\in\mathcal{S}(\I-\Pi_\rho)$}  & \multirow{2}{*}{$\psi_\sigma\in\mathcal{S}(\I-\Pi_\rho)$}   \\&&&\\\hline
		\multirow{2}{*}{$c_r$,}&  \multirow{2}{*}{Not needed,} &  \multirow{2}{*}{Not needed,} & \multirow{2}{*}{$ c_r\in[0,1]$,} \\&&&\\
		\multirow{2}{*}{$c_{\{\cdot\}}$}& \multirow{2}{*}{$\ds c_1\leq\left\|\frac{\psi_\rho}{\tr\lr{\psi_\rho\rho}}\right\|_{\infty}^{-1}$}  &  \multirow{2}{*}{$\ds c_2\leq\left\|\frac{\psi_\sigma}{\tr\lr{\psi_\sigma\sigma}}\right\|_{\infty}^{-1}$} & \multirow{2}{*}{$\ds c_3\leq\left\|\frac{c_r\psi_\rho}{\tr\lr{\psi_\rho\rho}}+\frac{(1-c_r)\psi_\sigma}{\tr\lr{\psi_\sigma\sigma}}\right\|_{\infty}^{-1}$} \\
		&&&\\&&&\\\hline
		\multirow{4}{*}{$\Lr,\Ls$} & \multirow{2}{*}{$\displaystyle
		\Lr=c_1\frac{\psi_{\rho}}{\tr\lr{\psi_{\rho}\rho}},$}& \multirow{2}{*}{ $\displaystyle
		\Ls=c_2\frac{\psi_{\sigma}}{\tr\lr{\psi_{\sigma}\sigma}},$}& \multirow{2}{*}{$\displaystyle
		\Lr=c_3c_r\frac{\psi_{\rho}}{\tr\lr{\psi_{\rho}\rho}},$}\\&&&\\
		& \multirow{2}{*}{$\ds\Ls\in\mathcal{P}(\I-\Pi_{\rho+\sigma}),\Ls\leq \I-\Lr$}& \multirow{2}{*}{$\displaystyle\Lr\in\mathcal{P}(\I-\Pi_{\rho+\sigma}),\Lr\leq \I-\Ls$}&\multirow{2}{*}{$\displaystyle
		\Ls=c_3(1-c_r)\frac{\psi_{\sigma}}{\tr\lr{\psi_{\sigma}\sigma}}$}\\&&&\\
		\hline\hline
	\end{tabular}
\caption{A method to construct an arbitrary error-minimizing measurement from each of the three sets $\mathcal{E}_s^1(\rho,\sigma)$, $\mathcal{E}_s^2(\rho,\sigma)$ and $\mathcal{E}_s^3(\rho,\sigma)$, illustrating the selection of values for various parameters}
\label{table: neqConstruction}
\end{table}
Now, we derive the acceptance for an arbitrary error minimizing measurement from the three sets in terms of the parameters given in Table \ref{table: neqConstruction}.
\begin{theorem} \label{thm: notSymmetricAcceptance}
	For an arbitrary error minimizing measurement $\Lambda\in \mathcal{E}_s(\rho,\sigma)$, expressions of acceptance for the states $\rho$ and $\sigma$ are given as follows.
\begin{itemize}
	\item If $\Lambda\in \mathcal{E}_s^1(\rho,\sigma)$: $A_\rho(\Lambda)=c_1$ and $A_\sigma(\Lambda)=0$.
	\item If $\Lambda\in \mathcal{E}_s^2(\rho,\sigma)$: $A_\rho(\Lambda)=0$ and $A_\sigma(\Lambda)=c_2$.
	\item If $\Lambda\in \mathcal{E}_s^3(\rho,\sigma)$: $A_\rho(\Lambda)=c_3c_r$ and $A_\sigma(\Lambda)=c_3(1-c_r)$.
\end{itemize}
\noindent Here, $c_1,c_2,c_3$ and $c_r$ are parameters as given in Table \ref{table: neqConstruction}.
	\begin{proof}We know that 
		\begin{align*}
			\Gamma\in\mathcal{P}(\I-\Pi_\rho)&\Rightarrow\tr\lr{\Gamma\rho}=0,\\
			\Gamma\in\mathcal{P}(\I-\Pi_\sigma)&\Rightarrow\tr\lr{\Gamma\sigma}=0,\\ 
			\text{and } \Gamma\in\mathcal{P}(\I-\Pi_{\rho+\sigma})&\Rightarrow\tr\lr{\Gamma\rho}=0\text{ and }\tr\lr{\Gamma\sigma}=0.
		\end{align*}
		Using these, for measurements in the three sets $\mathcal{E}_s^1(\rho,\sigma)$, $\mathcal{E}_s^2(\rho,\sigma)$, and $\mathcal{E}_s^3(\rho,\sigma)$, acceptance is calculated as given below one by one.
		\begin{itemize}
			\item If $\ds \Lambda\in \mathcal{E}_s^1(\rho,\sigma)$, then $\ds \Lr=\frac{c_1\psi_{\rho}}{\tr\lr{\psi_{\rho}\rho}},\psi_\rho\in\mathcal{P}(\I-\Pi_\sigma),\Ls\in\mathcal{P}(\I-\Pi_{\rho+\sigma})$ and so
			\begin{gather*}
				\tr\lr{\Lr\rho}=c_1,\tr\lr{\Lr\sigma}=0,\tr\lr{\Ls\rho}=0,\tr\lr{\Ls\sigma}=0 \quad 
				\Rightarrow A_\rho(\Lambda)=c_1,A_\sigma(\Lambda)=0.
			\end{gather*}
		
			\item If $\Lambda\in \mathcal{E}_s^2(\rho,\sigma)$, then $\ds \Ls=c_2\frac{\psi_{\sigma}}{\tr\lr{\psi_{\sigma}\sigma}},\psi_\sigma\in\mathcal{P}(\I-\Pi_\rho),\Lr\in\mathcal{P}(\I-\Pi_{\rho+\sigma})$ and so
			\begin{gather*}
			\tr\lr{\Lr\rho}=0,\tr\lr{\Lr\sigma}=0,\tr\lr{\Ls\rho}=0,\tr\lr{\Ls\sigma}=c_2\quad
			\Rightarrow A_\rho(\Lambda)=0,A_\sigma(\Lambda)=c_2.
			\end{gather*}
			\item If $\Lambda\in \mathcal{E}_s^3(\rho,\sigma)$, then $\ds \Lr=\frac{c_3c_r\psi_{\rho}}{\tr\lr{\psi_{\rho}\rho}}, \Ls=\frac{c_3(1-c_r)\psi_{\sigma}}{\tr\lr{\psi_{\sigma}\sigma}},\psi_\rho\in\mathcal{P}(\I-\Pi_\sigma),\psi_\sigma\in\mathcal{P}(\I-\Pi_\rho)$ and so
			\begin{align*}
			\tr\lr{\Lr\rho}=c_3c_r,\tr\lr{\Lr\sigma}=0, \tr\lr{\Ls\rho}& =0, \tr\lr{\Ls\sigma}=c_3(1-c_r)\\
			& \Rightarrow A_\rho(\Lambda)=c_3c_r,A_\sigma(\Lambda)=c_3(1-c_r).
			\end{align*}
		\end{itemize}\vspace{-7mm}
	\end{proof}
\end{theorem}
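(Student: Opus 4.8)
The plan is to exploit the elementary fact that an operator supported on the orthogonal complement of a state's range annihilates that state under the trace, and then substitute the parameterizations from Lemmas \ref{lem: para1}--\ref{lem: para3} directly into the definitions \eqref{eq: defArho} and \eqref{eq: defAsigma} of acceptance. Since each of the three sets has already been written in a clean parameterized form, the whole argument reduces to evaluating the four trace quantities $\tr\lr{\Lr\rho}$, $\tr\lr{\Lr\sigma}$, $\tr\lr{\Ls\rho}$, $\tr\lr{\Ls\sigma}$ and summing them appropriately.

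First I would record the three support facts for any $\Gamma\in\mathcal{P}(\hilbert)$: namely $\Gamma\in\mathcal{P}(\I-\Pi_\rho)\Rightarrow\tr\lr{\Gamma\rho}=0$, $\Gamma\in\mathcal{P}(\I-\Pi_\sigma)\Rightarrow\tr\lr{\Gamma\sigma}=0$, and consequently $\Gamma\in\mathcal{P}(\I-\Pi_{\rho+\sigma})\Rightarrow\tr\lr{\Gamma\rho}=\tr\lr{\Gamma\sigma}=0$, the last because $\Pi_{\rho+\sigma}$ dominates both $\Pi_\rho$ and $\Pi_\sigma$. These follow immediately from $\Pi_\nu\nu\Pi_\nu=\nu$, since $\tr\lr{\Gamma\nu}=\tr\lr{\Pi_\nu\Gamma\Pi_\nu\nu}$ vanishes whenever $\Gamma$ lives in the complementary subspace. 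These are exactly the facts invoked at the start of the proof block in the theorem.

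Next, for each set I would plug in the corresponding parameterization from Table \ref{table: neqConstruction} and form $A_\rho(\Lambda)=\tr\lr{\Lr\rho}+\tr\lr{\Ls\rho}$ and $A_\sigma(\Lambda)=\tr\lr{\Lr\sigma}+\tr\lr{\Ls\sigma}$. For $\mathcal{E}_s^1(\rho,\sigma)$, where $\Lr=c_1\psi_\rho/\tr\lr{\psi_\rho\rho}$ with $\psi_\rho\in\mathcal{S}(\I-\Pi_\sigma)$ and $\Ls\in\mathcal{P}(\I-\Pi_{\rho+\sigma})$, the normalizing denominator makes $\tr\lr{\Lr\rho}=c_1$, while $\tr\lr{\Lr\sigma}=0$ by the support of $\psi_\rho$ and $\tr\lr{\Ls\rho}=\tr\lr{\Ls\sigma}=0$, giving $A_\rho=c_1$, $A_\sigma=0$. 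The set $\mathcal{E}_s^2(\rho,\sigma)$ is handled identically with the roles of $\rho$ and $\sigma$ interchanged. For $\mathcal{E}_s^3(\rho,\sigma)$, both $\tr\lr{\Lr\rho}=c_3c_r$ and $\tr\lr{\Ls\sigma}=c_3(1-c_r)$ survive, whereas the cross terms $\tr\lr{\Lr\sigma}$ and $\tr\lr{\Ls\rho}$ vanish by the support facts, yielding $A_\rho=c_3c_r$ and $A_\sigma=c_3(1-c_r)$.

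There is essentially no genuine obstacle: once the parameterizations and the annihilation facts are in hand, the theorem is a short substitution. The only point requiring a little care is to confirm that the normalizing denominators $\tr\lr{\psi_\rho\rho}$ and $\tr\lr{\psi_\sigma\sigma}$ are nonzero so the expressions are well defined. This is precisely guaranteed by the membership conditions $\tr\lr{\Lr\rho}\neq0$ and $\tr\lr{\Ls\sigma}\neq0$ built into the definitions of the three sets in Theorem \ref{thm: beyondSymmetricSplit} and flagged in the preceding remark, so the division is legitimate throughout.
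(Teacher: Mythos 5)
Your proposal is correct and follows essentially the same route as the paper's proof: record the support-annihilation facts for $\mathcal{P}(\I-\Pi_\rho)$, $\mathcal{P}(\I-\Pi_\sigma)$ and $\mathcal{P}(\I-\Pi_{\rho+\sigma})$, then substitute the parameterizations of the three sets into the definitions of $A_\rho(\Lambda)$ and $A_\sigma(\Lambda)$ and evaluate the four traces. Your added check that the normalizing denominators $\tr\lr{\psi_\rho\rho}$ and $\tr\lr{\psi_\sigma\sigma}$ are nonzero is a sensible refinement that the paper handles only in a surrounding remark.
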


Now, we maximize the acceptance over the set $\mathcal{E}\left(\rho,\sigma,p\right)$. To do so, we maximize the expression obtained in Theorem \ref{thm: notSymmetricAcceptance} with respect to the parameters. The maximum acceptance expression is stated in the following theorem.

\begin{theorem}[Maximum acceptance] \label{thm: maxAccNeq} Given a pair of states as $\rho$ and $\sigma$ such that $\Pi_\rho\neq\Pi_\sigma$, then the maximum value of acceptance for states $\rho$ and $\sigma$ are summarized below in the Table \ref{table: maxAccNeq}. 
\begin{table}[H]
\centering
\begin{tabular}{||c||c|c||}
	\hline\hline
	\multirow{2}{*}{$\rho,\sigma$ such that}& \multirow{2}{*}{$A_\rho^s$} & \multirow{2}{*}{$A_\sigma^s$}  \\ &&  \\ \hline\hline
	\multirow{2}{*}{$\Pi_\rho>\Pi_\sigma$}& \multirow{2}{*}{$1-\tr\lr{\Pi_\sigma\rho}$} & \multirow{2}{*}{$0$} \\ &&\\ \hline
	\multirow{2}{*}{$\Pi_\rho<\Pi_\sigma$}& \multirow{2}{*}{$0$} & \multirow{2}{*}{$1-\tr\lr{\Pi_\rho\sigma}$}\\ && \\ \hline
	\multirow{2}{*}{$\Pi_\rho\not<\Pi_\sigma$ and $\Pi_\rho\not>\Pi_\sigma$}& \multirow{2}{*}{$1-\tr\lr{\Pi_\sigma\rho}$} & \multirow{2}{*}{$1-\tr\lr{\Pi_\rho\sigma}$}\\ && \\ \hline\hline
\end{tabular}
\caption{Table summarizing the maximum achievable acceptance for various cases considered in Section \ref{sec: rhoNeqsigma}}
\label{table: maxAccNeq}
\end{table}
\begin{proof} Recall from Corollary \ref{col: case1} that when $\Pi_\rho>\Pi_\sigma$, $\mathcal{E}_s(\rho,\sigma,p)=\mathcal{E}^1_{s}(\rho,\sigma)$. Similarly, from Corollary \ref{col: case2}, we know that when $\Pi_\rho<\Pi_\sigma$, $\mathcal{E}_s(\rho,\sigma,p)=\mathcal{E}^2_{s}(\rho,\sigma)$. We begin with the third case, i.e., $\Pi_\rho\not<\Pi_\sigma$ and $\Pi_\rho\not>\Pi_\sigma$. In this case $\mathcal{E}_s(\rho,\sigma,p)=\mathcal{E}^1_{s}(\rho,\sigma)\cup \mathcal{E}^2_{s}(\rho,\sigma)\cup \mathcal{E}^3_{s}(\rho,\sigma)$. Starting from maximizing acceptance for the state $\rho$ over the set $\mathcal{E}_s(\rho,\sigma,p)$, which can be obtained by maximizing over all the three sets followed by taking the maximum of the three, we get
\begin{align}
		A_\rho^s=\max_{\Lambda\in \mathcal{E}_s(\rho,\sigma,p)} A_\rho(\Lambda)&=\max\lr{\max_{\Lambda\in \mathcal{E}_s^1(\rho,\sigma)} A_\rho(\Lambda),\max_{\Lambda\in \mathcal{E}_s^2(\rho,\sigma)} A_\rho(\Lambda),\max_{\Lambda\in \mathcal{E}_s^3(\rho,\sigma)} A_\rho(\Lambda)}.
	\end{align}
Note that $\ds A_\rho(\Lambda)=0\ \forall\ \Lambda\in \mathcal{E}_s^2(\rho,\sigma)$. Hence,  we have
	\begin{align}		
		A_\rho^s&=\max\lr{\max_{\Lambda\in \mathcal{E}_s^1(\rho,\sigma)} A_\rho(\Lambda),\max_{\Lambda\in \mathcal{E}_s^3(\rho,\sigma)} A_\rho(\Lambda)}.
	\end{align}
Using the expression of acceptance from Theorem \ref{thm: notSymmetricAcceptance}, we get 
	\begin{align}
		A_\rho^s&=\max\lr{\max_{\Lambda\in \mathcal{E}_s^1(\rho,\sigma)} c_1,\max_{\Lambda\in \mathcal{E}_s^3(\rho,\sigma)} c_3c_r}.
	\end{align}
Applying bounds on $c_1$ and $c_3$ for measurements in $\mathcal{E}_s^1(\rho,\sigma)$ and $\mathcal{E}_s^3(\rho,\sigma)$ from Lemma \ref{lem: para1} and Lemma \ref{lem: para3} respectively (also given in Table \ref{table: neqConstruction}), we get
	\begin{align}
		A_\rho^s&=\max\lr{\max_{\psi_\rho\in\mathcal{S}(\I-\Pi_\sigma)} \left\|\frac{\psi_\rho}{\tr\lr{\psi_\rho\rho}}\right\|_{\infty}^{-1},\max_{\psi_\rho\in\mathcal{S}(\I-\Pi_\sigma),\psi_\sigma\in\mathcal{S}(\I-\Pi_\rho)} c_r\left\|\frac{c_r\psi_\rho}{\tr\lr{\psi_\rho\rho}}+\frac{(1-c_r)\psi_\sigma}{\tr\lr{\psi_\sigma\sigma}}\right\|_{\infty}^{-1}}.
	\end{align}
Taking $c_r$ inside the max-norm expression
	\begin{align}	
		A_\rho^s&=\max\lr{\max_{\psi_\rho\in\mathcal{S}(\I-\Pi_\sigma)} \left\|\frac{\psi_\rho}{\tr\lr{\psi_\rho\rho}}\right\|_{\infty}^{-1},\max_{\psi_\rho\in\mathcal{S}(\I-\Pi_\sigma),\psi_\sigma\in\mathcal{S}(\I-\Pi_\rho)} \left\|\frac{\psi_\rho}{\tr\lr{\psi_\rho\rho}}+\frac{(1-c_r)\psi_\sigma}{c_r\tr\lr{\psi_\sigma\sigma}}\right\|_{\infty}^{-1}}. \label{eq: whichOfTheTwo}
	\end{align}
Note that $\ds\frac{\psi_\rho}{\tr\lr{\psi_\rho\rho}}\leq \frac{\psi_\rho}{\tr\lr{\psi_\rho\rho}}+\frac{(1-c_r)\psi_\sigma}{c_r\tr\lr{\psi_\sigma\sigma}}\Rightarrow\left\|\frac{\psi_\rho}{\tr\lr{\psi_\rho\rho}}\right\|_{\infty}\leq \left\|\frac{\psi_\rho}{\tr\lr{\psi_\rho\rho}}+\frac{(1-c_r)\psi_\sigma}{c_r\tr\lr{\psi_\sigma\sigma}}\right\|_{\infty}$. So, we get 
\begin{align*}
	&\left\|\frac{\psi_\rho}{\tr\lr{\psi_\rho\rho}}+\frac{(1-c_r)\psi_\sigma}{c_r\tr\lr{\psi_\sigma\sigma}}\right\|_{\infty}^{-1}\leq \left\|\frac{\psi_\rho}{\tr\lr{\psi_\rho\rho}}\right\|_{\infty}^{-1}\\
	&\Rightarrow \max_{\psi_\rho\in\mathcal{S}(\I-\Pi_\sigma),\psi_\sigma\in\mathcal{S}(\I-\Pi_\rho)} \left\|\frac{\psi_\rho}{\tr\lr{\psi_\rho\rho}}+\frac{(1-c_r)\psi_\sigma}{c_r\tr\lr{\psi_\sigma\sigma}}\right\|_{\infty}^{-1}\leq \max_{\psi_\rho\in\mathcal{S}(\I-\Pi_\sigma)} \left\|\frac{\psi_\rho}{\tr\lr{\psi_\rho\rho}}\right\|_{\infty}^{-1}.
\end{align*}
Substituting above in \eqref{eq: whichOfTheTwo}, we get
	\begin{align}
		A_\rho^s&=\max_{\psi_\rho\in\mathcal{S}(\I-\Pi_\sigma)}\left\|\frac{\psi_\rho}{\tr\lr{\psi_\rho\rho}}\right\|_{\infty}^{-1}=\max_{\psi_\rho\in\mathcal{S}(\Pi_{\rho+\sigma}-\Pi_{\sigma})}\left\|\frac{\psi_\rho}{\tr\lr{\psi_\rho\rho}}\right\|_{\infty}^{-1}. \label{eq: direct}
	\end{align}
The equality can be proved similar to the proof of \eqref{eq: projEquiNorm} in proof of Lemma \ref{lem: singleMin} of Appendix \ref{sec: maxNormMinimization}. Note that among the all $\tilde\psi_\rho\in\mathcal{P}(\Pi_{\rho+\sigma}-\Pi_{\sigma})$, such that $\|\tilde\psi_\rho\|_{\infty}=1$, the one that maximizes $\tr\lr{\tilde\psi_\rho\rho}$ is $\tilde\psi_\rho=(\Pi_{\rho+\sigma}-\Pi_{\sigma})$. Hence, we obtain the maximum at $\ds\psi_\rho=\frac{\Pi_{\rho+\sigma}-\Pi_{\sigma}}{\tr\lr{\Pi_{\rho+\sigma}-\Pi_{\sigma}}}$ and is given by
\begin{align}	
	A_\rho^s&=\tr\lr{(\Pi_{\rho+\sigma}-\Pi_{\sigma}))\rho}=1-\tr\lr{\Pi_\sigma\rho}.
\end{align}
The derivation of $A_\sigma^s=1-\tr\lr{\Pi_\rho\sigma}$ is similar, and can be obtained following the same process.

For the case $\Pi_\rho<\Pi_\sigma$ or $\Pi_\rho>\Pi_\sigma$, the maximum value of acceptance can be obtained in a similar way by maximizing over the sets $\mathcal{E}_s^1(\rho,\sigma)$ or $\mathcal{E}_s^2(\rho,\sigma)$ respectively. %, which make the problem simpler and we directly get expression as in \eqref{eq: direct}, thus obtaining the stated values in Table \ref{table: maxAccNeq}.
\end{proof}

{\Blue 
	\begin{remark} \label{rem: desiredAccNeq}
		Note that for any desired value of acceptance, error-minimizing measurements can be designed by choosing the appropriate value of constant $c_1,c_2$, or $(c_3,c_r)$, provided that acceptance remains less than the maximum obtained in Theorem \ref{thm: maxAccNeq}.
	\end{remark}
}

\begin{remark} \label{rem: obtainsMaxAccNeq}
	If $\Pi_\rho>\Pi_\sigma$, a measurement that maximize acceptance is $\{\Pi_\rho>\Pi_\sigma,0,\I-\Pi_\rho+\Pi_\sigma\}$. If $\Pi_\rho<\Pi_\sigma$, a measurement that maximize acceptance is $\{0,\Pi_\sigma-\Pi_\rho,\I+\Pi_\rho-\Pi_\sigma\}$. In the case when $\Pi_\rho\not<\Pi_\sigma$, $\Pi_\rho\not>\Pi_\sigma$, and $\Pi_\rho\not=\Pi_\sigma$, a measurement that maximize acceptance is given by $\{\Pi_{\rho+\sigma}-\Pi_{\sigma},\Pi_{\rho+\sigma}-\Pi_{\rho},\I+\Pi_{\rho}+\Pi_{\sigma}-2\Pi_{\rho+\sigma}\}$.	
%	In all three cases, the stated maximum acceptance for the state $\rho$, i.e., $A_\rho^s$ can be achieved by taking the measurement $\{\Pi_{\rho+\sigma}-\Pi_{\sigma},0,\I+\Pi_{\sigma}-\Pi_{\rho+\sigma}\}$. In all three cases, the stated maximum acceptance for the state $\sigma$, i.e., $A_\sigma^s$ is achieved by taking the measurement $\{0,\Pi_{\rho+\sigma}-\Pi_{\rho},\I+\Pi_{\rho}-\Pi_{\rho+\sigma}\}$.
\end{remark}

\end{theorem}

{\Blue 
\begin{remark} \label{remark: compE2} For $n=2$ case, the condition for the existence of USD in Theorem 2 in \cite{feng2004unambiguous} is given as $\textrm{supp}(\rho+\sigma)\neq \textrm{supp}(\rho)$ and $\textrm{supp}(\rho+\sigma)\neq \textrm{supp}(\sigma)$. It simplifies to the condition $\Pi_\rho\not=\Pi_\sigma$, $\Pi_\rho\not<\Pi_\sigma$, and $\Pi_\rho\not>\Pi_\sigma$. From \eqref{eq: symBase}, we know that zero error (i.e., USD) can be obtained for any pair of states satisfying $\Pi_\rho\neq\Pi_\sigma$. So, for a pair of states satisfying $\Pi_\rho<\Pi_\sigma$, and $\Pi_\rho>\Pi_\sigma$, USD is possible, but the maximum acceptance for one of the states vanishes as obtained in Theorem \ref{thm: maxAccNeq}. So, the condition obtained in Theorem 2 in \cite{feng2004unambiguous} is sufficient but not necessary for USD. However, the condition obtained in Theorem 2 in \cite{feng2004unambiguous} is necessary and sufficient for USD with non-zero acceptance for both states. It is not mentioned in \cite{feng2004unambiguous}, but assumed in the proof of Theorem 2 in \cite{feng2004unambiguous}, and it becomes explicit from the maximum acceptance obtained in Theorem \ref{thm: maxAccNeq} in our paper.
	
\end{remark}
}

\section{Conclusions}
\label{sec:conclusion}
%\enlargethispage{-20cm} 

In this work, we have given the set of all measurements that minimize the postselected symmetric error. We make an important observation about the error-minimizing measurements that such measurements never declare one of the possible states unless the prior probability has a particular value given by $(p_\rho^*,p_\sigma^*)$ in \eqref{eq: star_p}. For a lower prior probability of either of the states, any error-minimizing measurement never declares that state. So, if $p_\rho<p_\rho^*$, any error-minimizing measurement never declares the unknown state as $\rho$. This observation raises a fundamental question about minimizing the postselected symmetric error $e(\Lambda)$ as a means to search for the best measurement. Then, we have stated a method to construct an arbitrary error-minimizing measurement in terms of variables, which can be varied to obtain other error-minimizing measurements. Furthermore, we show by an example that the value of acceptance varies for different measurements taken from the set of error-minimizing measurements, despite all being error-minimizing. The example illustrates the need to maximize acceptance over the set of error-minimizing measurements. So, we have obtained the expression of acceptance for an arbitrary error-minimizing measurement in terms of free variables used to construct the error-minimizing measurement. Then, we have derived the maximum acceptance. We obtain the maximum acceptance for both sates in closed-form expression if prior probabilities $p_\rho\neq p_\rho^*$ and $p_\sigma\neq p_\sigma^*$, and as an optimization problem otherwise. These results are generalizable in a simplified form for discriminating classical probability distributions by taking appropriate density matrices and measurements.

The maximum acceptance obtained in this work over the set of all error-minimizing measurements tells that any lower value of acceptance can be obtained while ensuring that the postselected symmetric error remains the same, i.e., achieves its minimum value. However, if a higher value of acceptance is desired, the postselected symmetric error will increase. So, this work opens up a problem of studying the trade-off between acceptance postselected symmetric error. {\Blue Such a study has been done in the past with different metrics for error and acceptance \cite{fiuravsek2003optimal}, and requires an independent study in the context of the problem studied in our work.} Such a study is likely to be useful in adaptive quantum hypothesis testing. This work also opens up other new directions for research, such as studying acceptance in the asymptotic case. It may lead to a new class of quantum divergence parameterized by acceptance. {\Blue In the past, one-shot USD has been generalized to USD of unknown states, and separation increasing transformations \cite{zhou2012unambiguous,chefles1998quantum,bergou2005universal}. PSD, studied in our work, can also be generalized in a similar fashion. \cite{hashimoto2010unitary} generalizes  maximizing the success probability given an error margin for discrimination of states \cite{hayashi2008state} to unitary operations. Work studied here can similarly be generalized to unitary operations and channels.} Postselected communication over quantum channels was proposed in\cite{ji2023postselected} and capacity is derived. A metric similar to acceptance in our work needs to be considered while obtaining capacity for postselected communication. The method of finding constraints on an arbitrary measurement for it to be an error-minimizing measurement, studied in our work, can be generalized to problems on probabilistic protocols, as in \cite{regula2022tight,regula2022probabilistic,regula2023overcoming} to obtain the set of all operations that maximize the success probability. Overall, this work opens up a new arena to explore other matrices for the performance of postselected symmetric hypothesis testing problems, going beyond error probability and studying acceptance in related problems.

%%%%%%
%% To balance the columns at the last page of the paper use this
%% command somewhere at the top of the first column of the last page:
%%
% \enlargethispage{-10cm} 
%%
%% where the exact amount of page reduction has to be adapted to the
%% actual situation.
%%
%% If the balancing should occur in the middle of the references, use
%% the following trigger:
%%
%\IEEEtriggeratref{3}
%%
%% which triggers a \newpage (i.e., new column) just before the given
%% reference number. Note that you need to adapt this if you modify
%% the paper. The "triggered" command can be changed if desired:
%%
% \IEEEtriggercmd{\enlargethispage{-20cm}}
%%
%%%%%%

%%%%%%

\appendices
%\input{oldAppendixA}
%\newpage
%\input{newAppendixA}
%\newpage
\section{Proof of Theorem \ref{lem: conditionSym}}\label{proof: conditionSym}
In this appendix, we take $\nu\in\mathcal{P}(\hilbert)$ as a fixed operator and $\zeta\in\mathcal{S}(\hilbert)$ as a variable operator. We find the maximum and the minimum value of $\tr\lr{\zeta\nu}$ by varying $\zeta\in\mathcal{S}(\hilbert)$ in Lemma \ref{lem: max_lemma} and Lemma \ref{lem: min_lemma} respectively. We also obtain the necessary and sufficient condition on $\zeta$ to obtain the maximum and the minimum value of $\tr\lr{\zeta\nu}$ in Lemma \ref{lem: max_lemma} and Lemma \ref{lem: min_lemma} respectively. We utilize these bounds, and condition on $\zeta$ for equality of these bounds in the proof of Theorem \ref{lem: conditionSym}. 

Eigenvalue decomposition of $\nu$ and the projection operator of the subspace spanned by eigenvectors is given as $\nu=\sum_{i} k_{\nu,i}|e_i\rangle\langle e_i|$  and $\Pi_\nu=\sum_{i} |e_i\rangle\langle e_i|$, where $k_{\nu,i}$ represents the eigenvalue and $|e_i\rangle$ is corresponding eigenvector of $\nu$. The projection operators the corresponding to the maximum and minimum eigenvalues are given as 
	\begin{align}
		\Pi^{\max}_\nu=\sum_{i:k_{\nu,i}=\|\nu\|_{\infty}} |e_i\rangle\langle e_i| \text{ and }	\Pi^{\min}_\nu=\sum_{i:k_{\nu,i}=\|\nu\|_{\infty,0}} |e_i\rangle\langle e_i| \text{ respectively.} \label{eq: decomposeNu}
	\end{align}
	
	\begin{lem} \label{lem: max_lemma}
		For a given $\nu\in\mathcal{P}(\hilbert)$, we get $%\|\nu\|_{\infty,0}\leq
		\tr\lr{\zeta\nu}\leq\|\nu\|_{\infty} \ \forall \ \zeta\in\mathcal{D}(\hilbert)$  
		and $$\tr\lr{\zeta\nu}=\|\nu\|_{\infty}\Leftrightarrow\zeta\in \mathcal{S}(\Pi_{\nu}^{\max}).$$
		\begin{proof}
			The proof is done in 2 steps. First, we have shown that $\tr\lr{\zeta\nu}\leq\|\nu\|_{\infty}$ and $\tr\lr{\zeta\nu}=\|\nu\|_{\infty} \Rightarrow \zeta\in \mathcal{S}(\Pi_{\nu}^{\max})$, followed by showing $\zeta\in \mathcal{S}(\Pi_{\nu}^{\max})\Rightarrow\tr\lr{\zeta\nu}=\|\nu\|_{\infty}$. \begin{enumerate}
				
				\item %$\tr\lr{\zeta\nu}\leq \|\nu\|_{\infty} $ and $\tr\lr{\zeta\nu}=\|\nu\|_{\infty} \Rightarrow \zeta\in\mathcal{S}(\Pi_\nu^{\max})$.
				Let us take the eigenvalue decomposition of $\zeta=\sum_ik_{\zeta,i}|\phi_i\rangle\langle\phi_i|$ with $\sum_i k_{\zeta,i}=1$. Hence, $\tr\lr{\zeta \nu}=\sum_ik_{\zeta,i} \langle\phi_i|\nu|\phi_i\rangle$. Note that $\langle\phi_i|\nu|\phi_i\rangle \leq \|\nu\|_{\infty}$ from the definition of max norm. Further, $\langle\phi_i|\nu|\phi_i\rangle = \|\nu\|_{\infty}\Leftrightarrow\Pi_\nu^{\max}|\phi_i\rangle=|\phi_i\rangle$ as for maximization of $\langle\phi_i|\nu|\phi_i\rangle$, $|\phi_i\rangle$ must lie in the eigenspace corresponding to the maximum eigenvalue of $\nu$.
				Hence, 
				\begin{align}
					\tr\lr{\zeta \nu}=\sum_ik_{\zeta,i} \langle\phi_i|\nu|\phi_i\rangle \leq \|\nu\|_{\infty} \sum_ik_{\zeta,i} = \|\nu\|_{\infty},
				\end{align}
				with $\tr\lr{\zeta \nu} = \|\nu\|_{\infty} \Leftrightarrow \langle\phi_i|\nu|\phi_i\rangle = \|\nu\|_{\infty}\ \forall\ i  \Leftrightarrow\Pi_\nu^{\max}|\phi_i\rangle=|\phi_i\rangle\ \forall\ i$. Hence, 			
				\begin{align}
					\Pi_{\nu}^{\max}\zeta\Pi_{\nu}^{\max} = \sum_i k_{\zeta,i} \Pi_{\nu}^{\max}|\phi_i\rangle\langle\phi_i|\Pi_{\nu}^{\max} = \sum_i k_{\zeta,i} |\phi_i\rangle\langle\phi_i| = \zeta \Rightarrow \zeta \in \mathcal{S}(\Pi_{\nu}^{\max}).
				\end{align}
				So, $\tr\lr{\zeta\nu}\leq\|\nu\|_{\infty}$ and $\tr\lr{\zeta\nu}=\|\nu\|_{\infty} \Rightarrow \zeta\in \mathcal{S}(\Pi_{\nu}^{\max})$.
				\item Note that $\nu=\sum_i k_{\nu,i} |e_i\rangle\langle e_i|$, hence $\Pi_\nu=\sum_{i}|e_i\rangle\langle e_i|$ and $\ds\Pi_\nu^{\max}=\sum_{i: k_{\nu,i}=\|\nu\|_{\infty}}|e_i\rangle\langle e_i|$. Now, from the definition of set $\mathcal{S}(\Pi_\nu^{\max})$, it follows that $\Pi_\nu^{\max}\zeta\Pi_\nu^{\max}=\zeta$ and $\tr\lr{\zeta}=1$. So we get
				\begin{align}
				{\zeta\nu} &= {\Pi_\nu^{\max}\zeta\Pi_\nu^{\max}\nu}= {\Pi_\nu^{\max}\zeta\sum_{\{i:k_{\nu,i}=\|\nu\|_{\infty}\}}\sum_j |e_i\rangle\langle e_i| k_{\nu,j} |e_j\rangle\langle e_j|}\\
					&= {\Pi_\nu^{\max}\zeta\sum_{\{i:k_{\nu,i}=\|\nu\|_{\infty}\}}\sum_j \one(i=j) k_{\nu,j} |e_j\rangle\langle e_j|}\\
					&= {\Pi_\nu^{\max}\zeta\sum_{i: k_{\nu,i}=\|\nu\|_{\infty}} \|\nu\|_{\infty} |e_j\rangle\langle e_j|} = \|\nu\|_{\infty}{\Pi_\nu^{\max}\zeta\Pi_\nu^{\max}}
					=\|\nu\|_{\infty}{\zeta}.
				\end{align}
				So we get $\zeta\in\mathcal{S}(\Pi_\nu^{\max})\Rightarrow{\zeta\nu}=\|\nu\|_{\infty}{\zeta} \Rightarrow\tr\lr{\zeta\nu} = \|\nu\|_{\infty} \tr\lr{\zeta}= \|\nu\|_{\infty}$.
			\end{enumerate}		
			Combining the two, we get the desired result.
		\end{proof}
	\end{lem}
	
	\begin{lem} \label{lem: min_lemma}
		For a given $\nu\in\mathcal{P}(\hilbert)$, we get $%\|\nu\|_{\infty,0}\leq
		\tr\lr{\zeta\nu}\geq\|\nu\|_{\infty,0} \ \forall \ \zeta\in\mathcal{S}(\Pi_\nu)$  
		and $$\tr\lr{\zeta\nu}=\|\nu\|_{\infty,0}\Leftrightarrow\zeta\in \mathcal{S}(\Pi_{\nu}^{\min}).$$
		\begin{proof} The proof follows similar to the proof of Lemma \ref{lem: max_lemma}.
		\end{proof}
	\end{lem}
	
\noindent \textbf{Proof of Theorem \ref{lem: conditionSym}}: From the definition of $e(\Lambda)$ in \eqref{eq: defSymError},	$e(\Lambda)$ can be written as $e(\Lambda)=\lr{1+R_{\Lambda,p}(\rho,\sigma)}^{-1}$, where 
		\begin{align}
			R_{\Lambda,p}(\rho,\sigma) \isdefined \frac{p_\rho\tr(\Lambda_{\rho} \rho)+p_\sigma\tr(\Lambda_{\sigma} \sigma)}{p_\sigma\tr(\Lambda_{\rho} \sigma)+p_\rho{\tr(\Lambda_{\sigma} \rho)}}.
		\end{align}
		On taking $\Lambda_\rho^s=\sigma^{1/2}\Lambda_\rho\sigma^{1/2}$ and $\Lambda_\sigma^s=\sigma^{1/2}\Lambda_\sigma\sigma^{1/2}$, we get
		\begin{align*}
			\tr\lr{\sigma^{-1/2}\Lambda_\rho^{s}\sigma^{-1/2}\rho} &= \tr\lr{\Pi_{\sigma}\Lambda_\rho\Pi_{\sigma}\rho}\stackrel{(a)}{=} \tr\lr{\Pi_{\rho}\Lambda_\rho\Pi_{\rho}\rho}
			= \tr\lr{\Lambda_\rho\Pi_{\rho}\rho\Pi_{\rho}}= \tr\lr{\Lambda_\rho\rho},
		\end{align*}
		where $(a)$ is due to $\Pi_\rho=\Pi_\sigma$. Also,
		$\tr\lr{\Lambda_\rho^s}=\tr(\sigma^{1/2}\Lambda_\rho\sigma^{1/2}) = \tr\lr{\Lambda_\rho\sigma}.$ 
		Similarly, $\tr\lr{\sigma^{-1/2}\Lambda_\sigma^{s}\sigma^{-1/2}\rho}=\tr\lr{\Lambda_\sigma\rho}$ and $\tr\lr{\Lambda_\sigma^s}=\tr\lr{\Lambda_\sigma\sigma}$. On substituting these in $R_{\Lambda,p}(\rho,\sigma)$, we get
		\begin{align*}
			R_{\Lambda,p}&(\rho,\sigma)=	\frac{p_\rho\tr\lr{\Lambda_{\rho}^s\sigma^{-1/2}\rho\sigma^{-1/2}}+p_\sigma\tr\lr{\Lambda_{\sigma}^s}}{p_\sigma\tr\lr{\Lambda_{\rho}^s}+p_\rho\tr\lr{\Lambda_{\sigma}^s\sigma^{-1/2}\rho\sigma^{-1/2}}}.
		\end{align*}
		Choosing $\Lambda_{\rho}^s=c_{\rho}\Lambda_{\rho}^{s,c}$ and $\Lambda_{\sigma}^s=c_{\sigma}\Lambda_{\sigma}^{s,c}$ such that trace of $\Lambda_{\rho}^{s,c}$ and $\Lambda_{\sigma}^{s,c}$ is $1$. We get
		\begin{align}
			R_{\Lambda,p}(\rho,\sigma)&=\frac{p_\sigma c_{\sigma}+p_\rho c_{\rho}\tr\lr{\Lambda_{\rho}^{s,c}\sigma^{-1/2}\rho\sigma^{-1/2}}}{p_\sigma c_{\rho}+p_\rho c_{\sigma}\tr\lr{\Lambda_{\sigma}^{s,c}\sigma^{-1/2}\rho\sigma^{-1/2}}} =\frac{c_\sigma+c_\rho\frac{p_\rho}{p_\sigma}\tr\lr{\Lambda_{\rho}^{s,c}\sigma^{-1/2}\rho\sigma^{-1/2}}}{c_\rho+c_\sigma\frac{p_\rho}{p_\sigma}\tr\lr{\Lambda_{\sigma}^{s,c}\sigma^{-1/2}\rho\sigma^{-1/2}}}.\label{eq: rLambdaSimplified}
		\end{align}
		Using Lemma \ref{lem: max_lemma} and Lemma \ref{lem: min_lemma}, we get
		\begin{gather*}
			\tr\lr{\Lambda_{\rho}^{s,c}\sigma^{-1/2}\rho\sigma^{-1/2}}\leq\|\sigma^{-1/2}\rho\sigma^{-1/2}\|_{\infty} \text{ with equality iff }\Lambda_{\rho}^{s,c}\in \mathcal{S}(\Tau^{\max}) \text{ and}\\
			\tr\lr{\Lambda_{\sigma}^{s,c}\sigma^{-1/2}\rho\sigma^{-1/2}}\geq\|\sigma^{-1/2}\rho\sigma^{-1/2}\|_{\infty,0} \text{ with equality iff }\Lambda_{\sigma}^{s,c}\in \mathcal{S}(\Tau^{\min}),
		\end{gather*}
		where $\Tau^{\max} = \Pi_{\sigma^{-1/2}\rho\sigma^{-1/2}}^{\max}$ and $\Tau^{\min} = \Pi_{\sigma^{-1/2}\rho\sigma^{-1/2}}^{\min}$. Substituting these bounds in the expression of $R_{\Lambda,p}(\rho,\sigma)$ in \eqref{eq: rLambdaSimplified}, we get 
		\begin{align}
			R_{\Lambda,p}&(\rho,\sigma)\leq \frac{c_\sigma+c_\rho R_{\max}(p_\rho\rho,p_\sigma\sigma)}{c_\rho+c_\sigma R_{\min}(p_\rho\rho,p_\sigma\sigma)} %\\&
			= R_{\max}(p_\sigma\sigma,p_\rho\rho) \frac{c_\sigma+c_\rho R_{\max}(p_\rho\rho,p_\sigma\sigma)}{c_\sigma+c_\rho R_{\max}(p_\sigma\sigma,p_\rho\rho)}, \label{eq: firstInq}
		\end{align}
		with the equality being obtained iff $\Lambda_{\rho}^{s,c}\in \mathcal{S}(\Tau^{\max})$ and $\Lambda_{\sigma}^{s,c}\in \mathcal{S}(\Tau^{\min})$. Now, ignoring the constant multiplier $R_{\max}(p_\sigma\sigma,p_\rho\rho)$, RHS of \eqref{eq: firstInq} is maximized as
		\begin{align}
			\max_{c_\rho,c_\sigma}\frac{c_\sigma+c_\rho R_{\max}(p_\rho\rho,p_\sigma\sigma)}{c_\sigma+c_\rho R_{\max}(p_\sigma\sigma,p_\rho\rho)}=\begin{cases}
				\frac{R_{\max}(p_\rho\rho,p_\sigma\sigma)}{ R_{\max}(p_\sigma\sigma,p_\rho\rho)},& \text{if }{R_{\max}(p_\rho\rho,p_\sigma\sigma)}>{ R_{\max}(p_\sigma\sigma,p_\rho\rho)},	\\ 1,&\text{otherwise}.
			\end{cases} \label{eq: max_cRhoSigma}
		\end{align}
		The maximum in \eqref{eq: max_cRhoSigma} being obtained at $\ds\underset{c_\rho,c_\sigma}{\arg\max}\frac{c_\sigma+c_\rho R_{\max}(p_\rho\rho,p_\sigma\sigma)}{c_\sigma+c_\rho R_{\max}(p_\sigma\sigma,p_\rho\rho)}$
		\begin{align}
			=\begin{cases}
				c_\rho>0,c_\sigma=0,& \text{if }{R_{\max}(p_\rho\rho,p_\sigma\sigma)}>{ R_{\max}(p_\sigma\sigma,p_\rho\rho)},\\
				c_\rho=0,c_\sigma>0,& \text{if }{R_{\max}(p_\rho\rho,p_\sigma\sigma)}<{ R_{\max}(p_\sigma\sigma,p_\rho\rho)},\\
				c_\rho\ge0,c_\sigma\ge0,(c_\rho,c_\sigma)\neq(0,0),& \text{if }{R_{\max}(p_\rho\rho,p_\sigma\sigma)}={ R_{\max}(p_\sigma\sigma,p_\rho\rho)}.
			\end{cases} \label{eq: cnoditionOnC}
		\end{align}
		Combining \eqref{eq: firstInq} and \eqref{eq: max_cRhoSigma}, we obtain
		\begin{align}
			R_{\Lambda,p}(\rho,\sigma)&\leq\begin{cases}
			{R_{\max}(p_\rho\rho,p_\sigma\sigma)},& \text{if }{R_{\max}(p_\rho\rho,p_\sigma\sigma)}>{ R_{\max}(p_\sigma\sigma,p_\rho\rho)},	\\ 
			{ R_{\max}(p_\sigma\sigma,p_\rho\rho)},&\text{otherwise}
		\end{cases}  \label{eq: RLd} \\
		&=\max({R_{\max}(p_\rho\rho,p_\sigma\sigma)},{ R_{\max}(p_\sigma\sigma,p_\rho\rho)}). 
		\end{align}
		So, $\min e(\Lambda)=(1+\max({R_{\max}(p_\rho\rho,p_\sigma\sigma)},{ R_{\max}(p_\sigma\sigma,p_\rho\rho)}))^{-1}$, which is one of the desired results.
		 
		Further, the minimum of $e(\Lambda)$ being achieved at $c_\rho$ and $c_\sigma$ satisfying \eqref{eq: cnoditionOnC} and $\Lambda_{\rho}^{s,c}\in \mathcal{S}(\Tau^{\max})$ and $\Lambda_{\sigma}^{s,c}\in \mathcal{S}(\Tau^{\min})$. Combining $\Lambda_{\rho}^{s,c}\in \mathcal{S}(\Tau^{\max})$ with $c_\rho=0$ and $c_\rho>0$ gives $\Lambda_{\rho}^{s}=\sigma^{1/2}\Ls\sigma^{1/2}=0$ and $\Lambda_{\rho}^{s}=\sigma^{1/2}\Ls\sigma^{1/2}\in \mathcal{P}(\Tau^{\max})$ respectively. Similarly, we get condition on $\sigma^{1/2}\Ls\sigma^{1/2}$. Thus, the final condition for equality in \eqref{eq: RLd}, and so the condition on $\Lambda$ for the minimum of $e(\Lambda)$ to be achieved, is obtained as
		$$	\begin{cases}
			\sigma^{1/2}\Lr\sigma^{1/2}\in\mathcal{P}(\Tau^{\max}),\quad\sigma^{1/2}\Ls\sigma^{1/2}=0,& \text{if }{R_{\max}(p_\rho\rho,p_\sigma\sigma)}>{ R_{\max}(p_\sigma\sigma,p_\rho\rho)},	\\ 
			\sigma^{1/2}\Lr\sigma^{1/2}=0,\quad\sigma^{1/2}\Ls\sigma^{1/2}\in\mathcal{P}(\Tau^{\min}),& \text{if }{R_{\max}(p_\rho\rho,p_\sigma\sigma)}<{ R_{\max}(p_\sigma\sigma,p_\rho\rho)},	\\ 
			\sigma^{1/2}\Lr\sigma^{1/2}\in\mathcal{P}(\Tau^{\max}),\quad\sigma^{1/2}\Ls\sigma^{1/2}\in\mathcal{P}(\Tau^{\min}),&\text{otherwise},
		\end{cases}
		$$
		which completes the proof.

%\newpage

\section{Generalized form of operator $\zeta$ given a constraint on subspace where it lies} \label{sec: genProj}

\begin{lem} \label{lem: null} For a variable operator $\zeta\in\mathcal{P}(\hilbert)$ and a fixed operator $\nu\in\mathcal{P}(\hilbert)$, the following are equivalent: \\
	$(\mathrm{1}) \ \tr\lr{\zeta\nu}=0 \qquad (\mathrm{2}) \ \Pi_\nu\zeta=\zeta\Pi_\nu=0 \qquad (\mathrm{3}) \  \zeta\in\mathcal{P}(\I-\Pi_\nu)\qquad (\mathrm{4}) \ \Pi_\nu\zeta\Pi_\nu=0.$
	\begin{proof} We will prove $(\mathrm{1})\Rightarrow(\mathrm{2})\Rightarrow(\mathrm{3})\Rightarrow(\mathrm{4})\Rightarrow(\mathrm{1})$.
		\\\noindent\textbf{(1)$\to$(2): }Let $\zeta=\sum_i k_i |\psi_i\rangle\langle\psi_i|$ be eigenvalue decomposition of $\zeta$ with $k_i>0$, then
		\begin{gather*}
			\tr\lr{\zeta\nu}=0  \Rightarrow  \sum _i k_i \langle\psi_i|\nu|\psi_i\rangle=0.
			\intertext{Note that $k_i>0$ and $\langle\psi_i|\nu|\psi_i\rangle\geq0$, hence $\sum_i k_i\langle\psi_i|\nu|\psi_i\rangle=0$ iff $\langle\psi_i|\nu|\psi_i\rangle=0$. So we obtain}
			\langle\psi_i|\nu|\psi_i\rangle=0 \ \forall\ i \stackrel{(a)}{\Rightarrow} \nu^{1/2}|\psi_i\rangle=0 \ \forall\ i \ \Rightarrow  \Pi_\nu|\psi_i\rangle=0 \ \forall\ i \ \Rightarrow  \Pi_\nu\zeta=\zeta\Pi_\nu=0.
		\end{gather*}
		$(a)$ follows from the fact that, if $\|\nu^{1/2}|\psi_i\rangle\|^2=\langle\psi_i|\nu|\psi_i\rangle=0$, then the vector $\nu^{1/2}|\psi_i\rangle=0$.
		\\\noindent\textbf{(2)$\to$(3): }$\Pi_\nu\zeta=\zeta\Pi_\nu=0\Rightarrow (\I-\Pi_\nu)\zeta=\zeta=\zeta(\I-\Pi_\nu)\Rightarrow \zeta=(\I-\Pi_\nu)\zeta(\I-\Pi_\nu)\Rightarrow \zeta\in\mathcal{P}(\I-\Pi_\nu)$. 
		\\\noindent\textbf{(3)$\to$(4): }$\zeta\in\mathcal{P}(\I-\Pi_\nu)\Rightarrow \zeta=(\I-\Pi_\nu)\zeta(\I-\Pi_\nu)\Rightarrow\Pi_\nu\zeta\Pi_\nu=\Pi_\nu(\I-\Pi_\nu)\zeta(\I-\Pi_\nu)\Pi_\nu = (\Pi_\nu-\Pi_\nu)\zeta(\Pi_\nu-\Pi_\nu)=0$.
		\\\noindent\textbf{(4)$\to$(1):} $\Pi_\nu\zeta\Pi_\nu=0\Rightarrow\nu^{1/2}\zeta\nu^{1/2}=0\Rightarrow \tr\lr{\zeta\nu}=0$.
	\end{proof}
\end{lem}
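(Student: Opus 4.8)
The plan is to establish the three-way equivalence by proving the cyclic chain of implications $(1)\Rightarrow(2)\Rightarrow(3)\Rightarrow(1)$, which is the economical route since only one of the three arrows carries any real content.

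First I would attack the implication $(1)\Rightarrow(2)$, which is the only place where the positive semidefiniteness of $\zeta$ is genuinely used (the statement is false for indefinite Hermitian $\zeta$). My preferred argument factors through the square root: writing $\Pi\zeta\Pi = (\zeta^{1/2}\Pi)^{\dagger}(\zeta^{1/2}\Pi)$, which holds because $\zeta^{1/2}$ and $\Pi$ are both Hermitian, the hypothesis $\Pi\zeta\Pi = 0$ forces $\zeta^{1/2}\Pi = 0$, using the elementary fact that $A^{\dagger}A = 0$ implies $A = 0$ for any operator $A$. Multiplying by $\zeta^{1/2}$ on the left then gives $\zeta\Pi = 0$, and taking adjoints yields $\Pi\zeta = 0$. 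An equivalent route, closer to the style used elsewhere in the appendix, is to take the trace: $\tr(\Pi\zeta\Pi) = \tr(\Pi\zeta) = \sum_i k_i\langle\psi_i|\Pi|\psi_i\rangle$ in the spectral decomposition $\zeta = \sum_i k_i|\psi_i\rangle\langle\psi_i|$ with $k_i > 0$. Since each summand is nonnegative and the total vanishes, every $\langle\psi_i|\Pi|\psi_i\rangle$, being the squared norm of $\Pi|\psi_i\rangle$, must vanish, whence $\Pi|\psi_i\rangle = 0$ for all $i$ and therefore $\Pi\zeta = \zeta\Pi = 0$.

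The remaining two arrows are purely algebraic manipulations with the projector identities $\Pi^2 = \Pi$ and $\Pi(\I-\Pi) = 0$, so I expect them to be routine. For $(2)\Rightarrow(3)$ I would use $\Pi\zeta = \zeta\Pi = 0$ to rewrite $(\I-\Pi)\zeta(\I-\Pi) = \zeta - \Pi\zeta - \zeta\Pi + \Pi\zeta\Pi = \zeta$, which is precisely the membership $\zeta\in\mathcal{P}(\I-\Pi)$ once one observes that $\I-\Pi$ is itself a projector and that $\zeta\geq 0$ is given. For $(3)\Rightarrow(1)$ I would substitute $\zeta = (\I-\Pi)\zeta(\I-\Pi)$ into $\Pi\zeta\Pi$ and collapse it using $\Pi(\I-\Pi) = 0$ on both the left and the right.

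The only genuine obstacle is the $(1)\Rightarrow(2)$ step; everything else is bookkeeping with idempotents. The subtlety there is that passing from the weaker ``diagonal block vanishes'' condition $\Pi\zeta\Pi = 0$ to the stronger ``whole row and column vanish'' condition $\Pi\zeta = \zeta\Pi = 0$ requires positivity in an essential way, and the cleanest way to make this rigorous is the observation $A^{\dagger}A = 0 \Rightarrow A = 0$ applied to $A = \zeta^{1/2}\Pi$.
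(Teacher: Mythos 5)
Your proof is correct, and for two of the three arrows it coincides verbatim with the paper's argument: the implications $(2)\Rightarrow(3)$ and $(3)\Rightarrow(1)$ are the same projector bookkeeping, and your ``equivalent route'' for $(1)\Rightarrow(2)$ --- taking $\tr\lr{\Pi\zeta\Pi}=\sum_i k_i\langle\psi_i|\Pi|\psi_i\rangle=0$ with $k_i>0$ and concluding $\Pi|\psi_i\rangle=0$ for every eigenvector --- is exactly the proof the paper gives. Your \emph{preferred} route for that step is genuinely different and arguably cleaner: factoring $\Pi\zeta\Pi=(\zeta^{1/2}\Pi)^{\dagger}(\zeta^{1/2}\Pi)$ and invoking $A^{\dagger}A=0\Rightarrow A=0$ localizes the use of positivity in a single algebraic identity, avoids the spectral decomposition entirely, and makes transparent why the claim fails for indefinite Hermitian $\zeta$. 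The paper's spectral argument buys nothing extra here beyond consistency with the style of the surrounding appendix lemmas (e.g.\ Lemma \ref{lem: invarEqui}), which are all phrased in terms of eigenvector decompositions. Either version is complete; there is no gap.
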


\begin{lem} \label{lem: baseEquiSubspace} For any $\nu\in\cp(\ch)$ and projector $\Pi$ such that $\Pi\nu=\nu$, then $$\zeta\in\cp(\Pi_\nu)\Rightarrow\zeta\in\cp(\Pi).$$ 
	\begin{proof}
		By taking Hermitian on both sides and multiplying both sides with $\nu^{-1}$ and $\nu$, we obtain  $$\Pi\nu=\nu\Leftrightarrow\nu\Pi=\nu\Leftrightarrow \Pi\Pi_\nu=\Pi_\nu\Leftrightarrow\Pi_\nu\Pi=\Pi_\nu.$$%Now multiplying by $\nu^{-1}$, we get $$\Pi\nu=\nu\Rightarrow\Pi\Pi_\nu=\Pi_\nu\text{ and }\nu\Pi=\nu\Rightarrow\Pi_\nu\Pi=\Pi_\nu.$$ On multiplying by $\nu$, we get the inverse condition as		$$\Pi\Pi_\nu=\Pi_\nu\Rightarrow\Pi\nu=\nu\text{ and }\Pi_\nu\Pi=\Pi_\nu\Rightarrow\nu\Pi=\nu.$$ So, $\Pi\nu=\nu\Leftrightarrow\nu\Pi=\nu\Leftrightarrow\Pi\Pi_\nu=\Pi_\nu\Leftrightarrow\Pi_\nu\Pi=\Pi_\nu$. 
		Now, for any $\zeta\in\cp(\Pi_\nu)$, we get
		$$\zeta=\Pi_\nu\zeta\Pi_\nu=\Pi\Pi_\nu\zeta\Pi_\nu\Pi= \Pi\zeta\Pi\Rightarrow\zeta\in\mathcal{P}(\Pi).$$
		Hence, $\zeta\in\cp(\Pi_\nu)\Rightarrow \zeta\in\cp(\Pi).$ This completes the proof. 
	\end{proof}	
\end{lem}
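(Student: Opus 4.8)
The plan is to convert the hypothesis $\Pi\nu=\nu$ into a relation purely between the two projectors $\Pi$ and $\Pi_\nu$, and then feed that relation into the invariance definition $\zeta\in\cp(\Pi_\nu)\Leftrightarrow\Pi_\nu\zeta\Pi_\nu=\zeta$. Geometrically, $\Pi\nu=\nu$ says that $\Pi$ acts as the identity on the range of $\nu$, so the support of $\nu$ — which is exactly the range of $\Pi_\nu$ — sits inside the range of $\Pi$; one then expects that any operator whose support lies in the range of $\Pi_\nu$ also has its support inside the range of $\Pi$, which is precisely the assertion $\zeta\in\cp(\Pi)$.

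First I would establish $\Pi\Pi_\nu=\Pi_\nu$. Using the generalized inverse from the notation, for which $\nu^{-1}\nu=\Pi_\nu$ and hence also $\nu\nu^{-1}=\Pi_\nu$ by commutativity of $\nu$ with $\nu^{-1}$, I obtain
$$\Pi\Pi_\nu=\Pi\nu\nu^{-1}=(\Pi\nu)\nu^{-1}=\nu\nu^{-1}=\Pi_\nu.$$
Since $\Pi$ and $\Pi_\nu$ are both Hermitian projectors, taking adjoints of $\Pi\Pi_\nu=\Pi_\nu$ immediately yields $\Pi_\nu\Pi=\Pi_\nu$ as well. Equivalently, one could take the adjoint of $\Pi\nu=\nu$ to obtain $\nu\Pi=\nu$ and run the same argument on the other side.

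Second, I would take an arbitrary $\zeta\in\cp(\Pi_\nu)$, which by definition means $\Pi_\nu\zeta\Pi_\nu=\zeta$, and conjugate it by $\Pi$:
$$\Pi\zeta\Pi=\Pi\lr{\Pi_\nu\zeta\Pi_\nu}\Pi=(\Pi\Pi_\nu)\zeta(\Pi_\nu\Pi)=\Pi_\nu\zeta\Pi_\nu=\zeta,$$
where the third equality uses the two projector identities from the first step. Since $\zeta\in\cp(\ch)$ is left unchanged by conjugation with $\Pi$, this shows $\zeta\in\cp(\Pi)$, as required.

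I do not anticipate any genuine obstacle: essentially all the content sits in the first step, namely justifying the projector relation $\Pi\Pi_\nu=\Pi_\nu$ from $\Pi\nu=\nu$ through the identity $\nu\nu^{-1}=\Pi_\nu$. Once that is in hand, the second step is a one-line sandwiching. The only point that demands a little care is that $\nu^{-1}$ must be the generalized inverse defined in the notation (inverting the nonzero eigenvalues and annihilating the kernel), so that $\nu\nu^{-1}=\Pi_\nu$ holds exactly rather than equalling $\I$.
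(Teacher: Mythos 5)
Your proposal is correct and follows essentially the same route as the paper: both derive the projector identity $\Pi\Pi_\nu=\Pi_\nu$ (and its adjoint $\Pi_\nu\Pi=\Pi_\nu$) from $\Pi\nu=\nu$ via the generalized inverse $\nu\nu^{-1}=\Pi_\nu$, and then sandwich $\zeta=\Pi_\nu\zeta\Pi_\nu$ between $\Pi$'s to conclude $\Pi\zeta\Pi=\zeta$. Your write-up is, if anything, slightly more explicit than the paper's about where the generalized-inverse identity is used.
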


\begin{lem} \label{lem: rotatedSpace}
	For any $\nu\in\cp(\ch)$ and projector $\Pi$ such that $\Pi\Pi_\nu=\Pi_\nu\Pi=\Pi$, then $$\proj\Pi_\nu=\proj,\text{ where }\proj=\Pi_{\nu^{-1}\Pi\nu^{-1}}.$$
	\begin{proof} Using the definition of projector, we obtained that $\proj=\Pi_{\nu^{-1}\Pi\nu^{-1}}=\lr{\nu^{-1}\Pi\nu^{-1}}^{-1}\lr{\nu^{-1}\Pi\nu^{-1}}$ and $\nu^{-1}\Pi_\nu = \nu^{-1}$. Now, 
	\begin{align*}
		\proj\Pi_\nu&=\Pi_{\nu^{-1}\Pi\nu^{-1}}\Pi_{\nu}=(\nu^{-1}\Pi\nu^{-1})^{-1}(\nu^{-1}\Pi\nu^{-1})\Pi_\nu\\
		&=(\nu^{-1}\Pi\nu^{-1})^{-1}\nu^{-1}\Pi\nu^{-1}\Pi_\nu=(\nu^{-1}\Pi\nu^{-1})^{-1}\nu^{-1}\Pi\nu^{-1}=\proj,
	\end{align*}
	which is the desired result.	
	\end{proof}
\end{lem}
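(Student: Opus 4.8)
The plan is to derive $\proj\Pi_\nu=\proj$ from two elementary properties of the generalized inverse fixed in the notation: for any positive semidefinite $A$ one has $A^{-1}A=\Pi_A$, the projector onto the support of $A$; and the support of $\nu^{-1}$ equals that of $\nu$, so that $\nu^{-1}\Pi_\nu=\Pi_\nu\nu^{-1}=\nu^{-1}$. Before invoking these I would first check that $\proj$ is even well defined, i.e. that $\nu^{-1}\Pi\nu^{-1}\in\cp(\ch)$: writing $\Pi=\Pi^{1/2}\Pi^{1/2}$ and using that $\nu^{-1}$ is Hermitian gives $\nu^{-1}\Pi\nu^{-1}=\lr{\nu^{-1}\Pi^{1/2}}\lr{\nu^{-1}\Pi^{1/2}}^{\dagger}\geq 0$, so both $\proj=\Pi_{\nu^{-1}\Pi\nu^{-1}}$ and its generalized inverse make sense, and the identity $\proj=\lr{\nu^{-1}\Pi\nu^{-1}}^{-1}\lr{\nu^{-1}\Pi\nu^{-1}}$ becomes available.

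With this in hand the computation is short and purely algebraic. I would expand $\proj\Pi_\nu=\lr{\nu^{-1}\Pi\nu^{-1}}^{-1}\lr{\nu^{-1}\Pi\nu^{-1}}\Pi_\nu$ and absorb the trailing $\Pi_\nu$ into the adjacent $\nu^{-1}$ via $\nu^{-1}\Pi_\nu=\nu^{-1}$, which returns the expression $\lr{\nu^{-1}\Pi\nu^{-1}}^{-1}\lr{\nu^{-1}\Pi\nu^{-1}}=\proj$. Conceptually this is nothing more than the observation that every vector in the range of $\nu^{-1}\Pi\nu^{-1}$ lies in the range of the leftmost factor $\nu^{-1}$, hence in the support of $\nu$; therefore the range of the support projector $\proj$ is contained in that of $\Pi_\nu$, giving $\Pi_\nu\proj=\proj$ and, by Hermiticity of both projectors, $\proj\Pi_\nu=\proj$.

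I expect no serious obstacle here; the work is bookkeeping with support projectors. The one place that deserves care is justifying the generalized-inverse identities $A^{-1}A=\Pi_A$ and $\nu^{-1}\Pi_\nu=\nu^{-1}$ directly from the spectral definition of $\nu^{-1}$ (inverting the nonzero eigenvalues and annihilating the kernel), rather than manipulating $\nu^{-1}$ as though it were a genuine inverse. It is worth noting that the stated hypothesis $\Pi\Pi_\nu=\Pi_\nu\Pi=\Pi$ is not actually consumed by this argument, since the leftmost $\nu^{-1}$ already forces the range of $\nu^{-1}\Pi\nu^{-1}$ into the support of $\nu$; that condition therefore functions as a natural regularity assumption inherited from the setting in which the lemma is applied rather than as a genuine ingredient of this particular identity.
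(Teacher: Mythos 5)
Your proposal is correct and follows essentially the same route as the paper's proof: expand $\proj$ as $\lr{\nu^{-1}\Pi\nu^{-1}}^{-1}\lr{\nu^{-1}\Pi\nu^{-1}}$ and absorb the trailing $\Pi_\nu$ via $\nu^{-1}\Pi_\nu=\nu^{-1}$. Your additional observations — that $\nu^{-1}\Pi\nu^{-1}\geq 0$ so $\proj$ is well defined, and that the hypothesis $\Pi\Pi_\nu=\Pi$ is not actually consumed by the argument — are accurate refinements but do not change the substance.
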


\begin{lem} \label{lem: inverseEqui}  For any $\nu\in\cp(\ch)$ and projector $\Pi$ such that $\Pi\Pi_\nu=\Pi_\nu\Pi=\Pi$, then
	\begin{gather*}
		\zeta\in\cp(\Pi_{\nu\proj\nu})\Rightarrow \zeta\in\cp(\Pi),
	\end{gather*}
	where $\proj=\Pi_{\nu^{-1}\Pi\nu^{-1}}=\lr{\nu^{-1}\Pi\nu^{-1}}\lr{\nu^{-1}\Pi\nu^{-1}}^{-1}$.
	\begin{proof} Note that $\nu\proj\nu = \nu\lr{\nu^{-1}\Pi\nu^{-1}}\lr{\nu^{-1}\Pi\nu^{-1}}^{-1}\nu 
		= {\Pi_\nu\Pi\nu^{-1}}\lr{\nu^{-1}\Pi\nu^{-1}}^{-1}\nu$, hence
		\begin{align*}
			\Pi\nu\proj\nu = {\Pi\Pi_\nu\Pi\nu^{-1}}\lr{\nu^{-1}\Pi\nu^{-1}}^{-1}\nu={\Pi_\nu\Pi\nu^{-1}}\lr{\nu^{-1}\Pi\nu^{-1}}^{-1}\nu=\nu\proj\nu.
		\end{align*}
		Therefore, we have $\Pi\nu\proj\nu=\nu\proj\nu$. Now, using the Lemma \ref{lem: baseEquiSubspace}, we obtain $\zeta\in\cp(\Pi_{\nu\proj\nu})\Rightarrow \zeta\in\cp(\Pi).$
	\end{proof}
\end{lem}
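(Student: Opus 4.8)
The plan is to reduce the claim to the already-established Lemma \ref{lem: baseEquiSubspace}, which asserts that whenever a positive operator $\omega$ satisfies $\Pi\omega=\omega$, one has $\zeta\in\cp(\Pi_\omega)\Rightarrow\zeta\in\cp(\Pi)$. Since the target set here is $\cp(\Pi_{\nu\proj\nu})$, I would set $\omega=\nu\proj\nu$ and verify the two hypotheses of that lemma: that $\omega$ is positive semidefinite (so that $\Pi_\omega=\Pi_{\nu\proj\nu}$ is a well-defined support projector and $\omega\in\cp(\ch)$), and that $\Pi\,\omega=\omega$. Everything then follows by a single invocation of Lemma \ref{lem: baseEquiSubspace}.

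Positivity of $\omega$ is immediate: since $\nu$ is Hermitian and $\proj=\proj^2=\proj^{\dagger}$ is a projector, $\nu\proj\nu=(\proj\nu)^{\dagger}(\proj\nu)\geq0$. For the second hypothesis I would carry out the core algebraic reduction. Substituting $\proj=\lr{\nu^{-1}\Pi\nu^{-1}}\lr{\nu^{-1}\Pi\nu^{-1}}^{-1}$ and multiplying on the left by $\nu$, I would use the generalized-inverse identity $\nu\nu^{-1}=\nu^{-1}\nu=\Pi_\nu$ (both being functions of $\nu$, hence commuting) to collapse $\nu\nu^{-1}\Pi$ to $\Pi_\nu\Pi$, and then the support-containment hypothesis $\Pi_\nu\Pi=\Pi$ to replace this by $\Pi$. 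This yields the factorization
\[
\nu\proj\nu=\Pi\,\nu^{-1}\lr{\nu^{-1}\Pi\nu^{-1}}^{-1}\nu,
\]
in which $\Pi$ appears explicitly as a left factor.

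With this factorization in hand, idempotency $\Pi^2=\Pi$ gives at once $\Pi\,\omega=\Pi\Pi\,\nu^{-1}\lr{\nu^{-1}\Pi\nu^{-1}}^{-1}\nu=\omega$, verifying the hypothesis of Lemma \ref{lem: baseEquiSubspace}; applying that lemma to $\omega=\nu\proj\nu$ then gives the stated implication $\zeta\in\cp(\Pi_{\nu\proj\nu})\Rightarrow\zeta\in\cp(\Pi)$. The step I expect to demand the most care is the left-multiplication computation, where the generalized inverse $\nu^{-1}$ only inverts $\nu$ on its support: the identity $\nu\nu^{-1}=\Pi_\nu$ (rather than $\I$) must be invoked, and the hypothesis $\Pi_\nu\Pi=\Pi$ (equivalently $\mathrm{range}(\Pi)\subseteq\mathrm{supp}(\nu)$) is exactly what makes the leftover factor $\Pi_\nu$ disappear in favor of $\Pi$. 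One could alternatively argue directly that $\mathrm{range}(\nu\proj\nu)\subseteq\mathrm{range}(\Pi)$ from the same factorization, giving $\Pi\Pi_{\nu\proj\nu}=\Pi_{\nu\proj\nu}$ and hence $\Pi\zeta\Pi=\zeta$ for $\zeta\in\cp(\Pi_{\nu\proj\nu})$; but routing through Lemma \ref{lem: baseEquiSubspace} keeps the argument shortest.
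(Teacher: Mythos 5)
Your proposal is correct and takes essentially the same route as the paper's own proof: both arguments factor $\nu\proj\nu$ using the generalized-inverse identity $\nu\nu^{-1}=\Pi_\nu$ together with the hypothesis $\Pi_\nu\Pi=\Pi$ to conclude $\Pi\,\nu\proj\nu=\nu\proj\nu$, and then finish with a single application of Lemma \ref{lem: baseEquiSubspace}. Your explicit verification that $\nu\proj\nu=(\proj\nu)^{\dagger}(\proj\nu)\geq 0$ is a small tidiness improvement over the paper, which leaves positivity implicit, but it does not constitute a different approach.
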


\begin{lem} \label{lem: modSupbspace} For projector $\proj$ such that $\proj\Pi_\nu=\proj$, then
	$$\zeta\in\mathcal{P}(\proj)\Rightarrow\nu^{-1}\zeta\nu^{-1}\in\mathcal{P}(\Pi_{\nu^{-1}\proj\nu^{-1}}).$$
	\begin{proof} Note that $\zeta\in\mathcal{P}(\proj)$ and $\proj\Pi_\nu=\proj$, hence, $$(\nu^{-1}\proj\nu^{-1})(\nu\zeta\nu)(\nu^{-1}\proj\nu^{-1})=\nu^{-1}\proj\Pi_\nu\zeta\Pi_\nu\proj\nu^{-1}=\nu^{-1}\proj\zeta\proj\nu^{-1}=(\nu^{-1}\zeta\nu^{-1}).$$
		Now, to prove $\nu^{-1}\zeta\nu^{-1}\in\mathcal{P}(\Pi_{\nu^{-1}\proj\nu^{-1}})$, we show that
		\begin{align*}
			\Pi_{\nu^{-1}\proj\nu^{-1}}(\nu^{-1}\zeta\nu^{-1})\Pi_{\nu^{-1}\proj\nu^{-1}}
			&=\Pi_{\nu^{-1}\proj\nu^{-1}}(\nu^{-1}\proj\nu^{-1})(\nu\zeta\nu)(\nu^{-1}\proj\nu^{-1})\Pi_{\nu^{-1}\proj\nu^{-1}}\\
			&\stackrel{(a)}{=}(\nu^{-1}\proj\nu^{-1})\nu\zeta\nu(\nu^{-1}\proj\nu^{-1})=\nu^{-1}\zeta\nu^{-1}.
		\end{align*}
		$(a)$ is obtained from the fact that $\eta\Pi_\eta=\Pi_\eta\eta=\eta$ on taking $\eta=\nu^{-1}\proj\nu^{-1}$. Thus, we get  $\nu^{-1}\zeta\nu^{-1}\in\mathcal{P}(\Pi_{\nu^{-1}\proj\nu^{-1}})$, which completes the proof.
	\end{proof}
\end{lem}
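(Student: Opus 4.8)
The plan is to verify the two conditions required for membership in $\mathcal{P}(\Pi_{\nu^{-1}\proj\nu^{-1}})$, namely that $\nu^{-1}\zeta\nu^{-1}$ is positive semidefinite and that it is invariant under conjugation by $\Pi_{\nu^{-1}\proj\nu^{-1}}$. Positivity is the easy half: since $\nu^{-1}$ is Hermitian and $\zeta\geq 0$, for every $|x\rangle$ we have $\langle x|\nu^{-1}\zeta\nu^{-1}|x\rangle=\langle y|\zeta|y\rangle\geq 0$ with $|y\rangle=\nu^{-1}|x\rangle$, so $\nu^{-1}\zeta\nu^{-1}\in\mathcal{P}(\hilbert)$. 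All the real work lies in the invariance, and I would organize it around a single algebraic identity.

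The key step I would isolate first is the sandwiched representation
$$\nu^{-1}\zeta\nu^{-1}=(\nu^{-1}\proj\nu^{-1})(\nu\zeta\nu)(\nu^{-1}\proj\nu^{-1}).$$
To prove it I would expand the right-hand side and use $\nu^{-1}\nu=\nu\nu^{-1}=\Pi_\nu$ to collapse the inner factors, obtaining $\nu^{-1}\proj\,\Pi_\nu\,\zeta\,\Pi_\nu\,\proj\nu^{-1}$. Now the hypothesis $\proj\Pi_\nu=\proj$, together with its adjoint $\Pi_\nu\proj=\proj$ (both $\proj$ and $\Pi_\nu$ being Hermitian projectors), lets me drop the two $\Pi_\nu$ factors, and finally $\proj\zeta\proj=\zeta$ (which is exactly the assumption $\zeta\in\mathcal{P}(\proj)$) recovers $\nu^{-1}\zeta\nu^{-1}$, establishing the identity.

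With this representation available, the conclusion follows mechanically. Writing $\eta:=\nu^{-1}\proj\nu^{-1}$ and using the standard support identity $\Pi_\eta\eta=\eta\Pi_\eta=\eta$ for $\Pi_\eta=\Pi_{\nu^{-1}\proj\nu^{-1}}$, I would conjugate the identity by $\Pi_\eta$ to get
$$\Pi_\eta(\nu^{-1}\zeta\nu^{-1})\Pi_\eta=\Pi_\eta\,\eta\,(\nu\zeta\nu)\,\eta\,\Pi_\eta=\eta\,(\nu\zeta\nu)\,\eta=\nu^{-1}\zeta\nu^{-1},$$
which is precisely $\nu^{-1}\zeta\nu^{-1}\in\mathcal{P}(\Pi_{\nu^{-1}\proj\nu^{-1}})$. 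I expect the main obstacle to be recognizing the sandwiched form in the first place: once one sees that $\nu^{-1}\zeta\nu^{-1}$ can be written as $\eta(\cdot)\eta$ for $\eta$ having the correct support projector, the invariance is automatic, but arriving there requires careful bookkeeping with the generalized inverse and with the direction in which the hypothesis $\proj\Pi_\nu=\proj$ is applied on each side.
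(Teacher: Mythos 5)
Your proposal is correct and follows essentially the same route as the paper: the same sandwiched identity $\nu^{-1}\zeta\nu^{-1}=(\nu^{-1}\proj\nu^{-1})(\nu\zeta\nu)(\nu^{-1}\proj\nu^{-1})$ obtained via $\nu^{-1}\nu=\Pi_\nu$, $\proj\Pi_\nu=\Pi_\nu\proj=\proj$ and $\proj\zeta\proj=\zeta$, followed by conjugation with $\Pi_\eta$ and the support identity $\eta\Pi_\eta=\Pi_\eta\eta=\eta$. Your explicit check of positivity is a small addition the paper leaves implicit, but otherwise the arguments coincide.
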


\begin{thm} \label{lem: genMainProj}
	Given $\Pi, \nu\in\mathcal{P}(\hilbert), \Pi\Pi_\nu=\Pi$, then the following statements regarding constraints on $\zeta\in\mathcal{P}(\hilbert)$ are equivalent
	\begin{enumerate}[label={$(\mathrm{\arabic*})$}]
		\item $\nu^{1/2}\zeta\nu^{1/2}\in\mathcal{P}(\Pi)$,
		\item $\Pi_\nu\zeta\Pi_\nu\in\mathcal{P}(\proj)$,
		\item $\zeta\in\mathcal{P}(\I-\Pi_\nu+\proj)$,
	\end{enumerate}
	where $\proj=\Pi_{\nu^{-1/2}\Pi\nu^{-1/2}}$.
	\begin{proof} We will first prove $(1)\Leftrightarrow(2)$, and then $(2)\Leftrightarrow(3)$.
		\\\noindent\textbf{(1)$\to$(2): }This statement can be obtained by appropriate substitution in Lemma \ref{lem: modSupbspace}.
		\\\noindent\textbf{(2)$\to$(1): }Using Lemma \ref{lem: modSupbspace}, we get
		$\Pi_\nu\zeta\Pi_\nu\in\mathcal{P}(\proj)\Rightarrow \nu^{1/2}\zeta\nu^{1/2}\in\mathcal{P}(\Pi_{\nu^{1/2}\proj\nu^{1/2}}).$ Then, using Lemma \ref{lem: inverseEqui}, we get
		$\nu^{1/2}\zeta\nu^{1/2}\in\mathcal{P}(\Pi_{\nu^{1/2}\proj\nu^{1/2}})\Rightarrow\nu^{1/2}\zeta\nu^{1/2}\in\mathcal{P}(\Pi).$
		\\\noindent\textbf{(2)$\to$(3): }Using Lemma \ref{lem: rotatedSpace}, we know that $\Pi\Pi_\nu=\Pi\Rightarrow \proj=\proj\Pi_\nu$. Hence
		\begin{align*}
			\Pi_\nu\zeta\Pi_\nu\in\mathcal{P}(\proj)&\Rightarrow \proj\Pi_\nu\zeta\Pi_\nu\proj=\proj\Pi_\nu\zeta\Pi_\nu=\Pi_\nu\zeta\Pi_\nu\proj=\Pi_\nu\zeta\Pi_\nu\\
			&\Rightarrow \proj\zeta\proj=\Pi_\nu\zeta\proj=\proj\zeta\Pi_\nu=\Pi_\nu\zeta\Pi_\nu\\
			&\Rightarrow (\Pi_\nu-\proj)\zeta(\Pi_\nu-\proj)=0.
			\intertext{Using Lemma \ref{lem: null}, $(\Pi_\nu-\proj)\zeta(\Pi_\nu-\proj)=0\Rightarrow(\Pi_\nu-\proj)\zeta=\zeta(\Pi_\nu-\proj)=0$ and so}
			\Pi_\nu\zeta\Pi_\nu\in\mathcal{P}(\proj)&\Rightarrow 
			(\I-\Pi_\nu+\proj)\zeta(\I-\Pi_\nu+\proj)=\zeta \Rightarrow \zeta\in \mathcal{P}(\I-\Pi_\nu+\proj).
		\end{align*}
		\noindent\textbf{(3)$\to$(2): }Using Lemma \ref{lem: rotatedSpace}, we know that $\Pi\Pi_\nu=\Pi\Rightarrow \proj=\proj\Pi_\nu$. Hence
		\begin{align*}		
			\zeta\in \mathcal{P}(\I-\Pi_\nu+\proj) &\Rightarrow \zeta=(\I-\Pi_\nu+\proj)\zeta\\
			& \Rightarrow (\Pi_\nu-\proj)\zeta=\zeta(\Pi_\nu-\proj)=0\\
			&\Rightarrow \Pi_\nu\zeta=\proj\zeta, \zeta\Pi_\nu=\zeta\proj\\
			& \Rightarrow \proj\Pi_\nu\zeta\Pi_\nu\proj=\proj\zeta\proj=\Pi_\nu\zeta\Pi_\nu\\
			&\Rightarrow \Pi_\nu\zeta\Pi_\nu\in\mathcal{P}(\proj).
		\end{align*}
		Thus proved.
	\end{proof}
\end{thm}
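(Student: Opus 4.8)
The plan is to establish the two equivalences $(1)\Leftrightarrow(2)$ and $(2)\Leftrightarrow(3)$ separately, so that $(1)\Leftrightarrow(3)$ follows by transitivity. Throughout I would lean on three standing facts: (i) since $\Pi$ and $\Pi_\nu$ are Hermitian, the hypothesis $\Pi\Pi_\nu=\Pi$ also yields $\Pi_\nu\Pi=\Pi$ by taking adjoints; (ii) $\nu^{-1/2}\nu^{1/2}=\Pi_\nu$ by the convention for the generalized inverse; and (iii) applying Lemma~\ref{lem: rotatedSpace} with the substitution $\nu\mapsto\nu^{1/2}$, the projector $\proj=\Pi_{\nu^{-1/2}\Pi\nu^{-1/2}}$ satisfies $\proj\Pi_\nu=\proj$ (equivalently $\proj\le\Pi_\nu$), so that $Q\isdefinedas\Pi_\nu-\proj$ is itself a projector, orthogonal to $\proj$.

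For $(1)\Leftrightarrow(2)$ the driving idea is that congruence by $\nu^{1/2}$ and by $\nu^{-1/2}$ shuttles between the two pictures. For $(1)\to(2)$ I would feed Lemma~\ref{lem: modSupbspace} the substitution $\nu\mapsto\nu^{1/2}$, $\proj\mapsto\Pi$, whose hypothesis $\Pi\Pi_\nu=\Pi$ is exactly our assumption: starting from $\nu^{1/2}\zeta\nu^{1/2}\in\mathcal{P}(\Pi)$ it returns $\nu^{-1/2}(\nu^{1/2}\zeta\nu^{1/2})\nu^{-1/2}\in\mathcal{P}(\Pi_{\nu^{-1/2}\Pi\nu^{-1/2}})$, and by fact (ii) the operand collapses to $\Pi_\nu\zeta\Pi_\nu$ while the projector is precisely $\proj$, giving $(2)$. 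For the converse $(2)\to(1)$ I would instead apply Lemma~\ref{lem: modSupbspace} with $\nu\mapsto\nu^{-1/2}$, $\proj\mapsto\proj$ (its hypothesis $\proj\Pi_\nu=\proj$ being fact (iii)) to get $\nu^{1/2}\zeta\nu^{1/2}\in\mathcal{P}(\Pi_{\nu^{1/2}\proj\nu^{1/2}})$; this lands in a ``rotated'' subspace rather than in $\mathcal{P}(\Pi)$ directly, and I would close the gap by invoking Lemma~\ref{lem: inverseEqui} with $\nu\mapsto\nu^{1/2}$, which gives the implication $\zeta'\in\mathcal{P}(\Pi_{\nu^{1/2}\proj\nu^{1/2}})\Rightarrow\zeta'\in\mathcal{P}(\Pi)$.

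For $(2)\Leftrightarrow(3)$ the argument is purely algebraic in the projectors $\Pi_\nu$, $\proj$, $Q$. In the direction $(2)\to(3)$ I would unfold $\Pi_\nu\zeta\Pi_\nu\in\mathcal{P}(\proj)$, i.e. $\proj\,\Pi_\nu\zeta\Pi_\nu\,\proj=\Pi_\nu\zeta\Pi_\nu$, and use $\proj\Pi_\nu=\Pi_\nu\proj=\proj$ to rewrite the left side as $\proj\zeta\proj$; chaining the resulting identities $\Pi_\nu\zeta\Pi_\nu=\proj\zeta\proj=\Pi_\nu\zeta\proj=\proj\zeta\Pi_\nu$ gives $(\Pi_\nu-\proj)\zeta(\Pi_\nu-\proj)=Q\zeta Q=0$. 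Then Lemma~\ref{lem: null} upgrades this to $Q\zeta=\zeta Q=0$, whence $(\I-Q)\zeta(\I-Q)=\zeta$, that is $\zeta\in\mathcal{P}(\I-\Pi_\nu+\proj)$. The converse $(3)\to(2)$ simply reverses this: from $\zeta\in\mathcal{P}(\I-\Pi_\nu+\proj)$ one reads off $Q\zeta=\zeta Q=0$, hence $\Pi_\nu\zeta=\proj\zeta$ and $\zeta\Pi_\nu=\zeta\proj$, so $\Pi_\nu\zeta\Pi_\nu=\proj\zeta\proj$, which is manifestly in $\mathcal{P}(\proj)$.

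I expect the main obstacle to be twofold. The genuinely substantive step is the passage $Q\zeta Q=0\Rightarrow Q\zeta=\zeta Q=0$ in $(2)\to(3)$: this is where the positive-semidefiniteness of $\zeta$ is indispensable (a merely Hermitian $\zeta$ could fail it), and it is exactly the content I would quote from Lemma~\ref{lem: null}. The remaining difficulty is bookkeeping rather than conceptual—keeping straight which of $\nu^{1/2}$ or $\nu^{-1/2}$ to substitute into each auxiliary lemma and checking in each case that the required hypothesis ($\Pi\Pi_\nu=\Pi$ or $\proj\Pi_\nu=\proj$) indeed holds, so that Lemmas~\ref{lem: modSupbspace}, \ref{lem: rotatedSpace} and \ref{lem: inverseEqui} are applicable.
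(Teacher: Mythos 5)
Your proposal is correct and follows essentially the same route as the paper's own proof: the same split into $(1)\Leftrightarrow(2)$ and $(2)\Leftrightarrow(3)$, the same use of Lemma \ref{lem: modSupbspace} (with the $\nu^{1/2}$ and $\nu^{-1/2}$ substitutions) plus Lemma \ref{lem: inverseEqui} for the first equivalence, and the same reduction to $(\Pi_\nu-\proj)\zeta(\Pi_\nu-\proj)=0$ followed by Lemma \ref{lem: null} for the second. You also correctly isolate the one step where positivity of $\zeta$ is essential.
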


\begin{thm}\label{thm: projSubSet} For any $\sigma\in\mathcal{P}(\hilbert)$ and projector $\proj$ such that $\proj\Pi_\sigma=\proj$ then,
	\begin{enumerate}[label={$(\mathrm{\arabic*})$}]
		\item $\zeta\in\mathcal{P}(\proj)\Rightarrow\zeta\in\mathcal{P}(\I-\Pi_\sigma+\proj)$ and so $\mathcal{P}(\proj)\subseteq\mathcal{P}(\I-\Pi_\sigma+\proj)$, and
		\item $\zeta\in\mathcal{P}(\I-\Pi_\sigma+\proj)\Rightarrow\Pi_\sigma\zeta\Pi_\sigma\in\mathcal{P}(\proj)$ and so $\{\Pi_\sigma\zeta\Pi_\sigma:\zeta\in \mathcal{P}(\I-\Pi_\sigma+\proj)\}\subseteq\mathcal{P}(\proj)$.
	\end{enumerate}
\begin{proof} We will prove $\mathrm{(1)}$, and then $\mathrm{(2)}$.\\
	\noindent\textbf{(1): }Beginning with the definition of $\mathcal{P}(\proj)$, we get 
	$\zeta\in\mathcal{P}(\proj)\Rightarrow\zeta=\proj\zeta$. Now, multiplying both sides by $\Pi_\sigma$ and simplifying further, we obtain $\Pi_\sigma\zeta=\Pi_\sigma\proj\zeta=\proj\zeta$. So, we get $(\I-\Pi_\sigma+\proj)\zeta=\zeta$ and thus $\zeta=\mathcal{P}(\I-\Pi_\sigma+\proj)$. Now, note that $\zeta\in\mathcal{P}(\I-\Pi_\sigma+\proj)\ \forall \ \zeta\in\mathcal{P}(\proj)$, so $\mathcal{P}(\proj)\subseteq\mathcal{P}(\I-\Pi_\sigma+\proj)$.\\
	\noindent\textbf{(2): }Beginning with the definition of $\mathcal{P}(\I-\Pi_\sigma+\proj)$ and simplifying further, we get 
	$$\zeta\in\mathcal{P}(\I-\Pi_\sigma+\proj)\Rightarrow\zeta=(\I-\Pi_\sigma+\proj)\zeta\Rightarrow\Pi_\sigma\zeta=\proj\zeta=\proj\Pi_\sigma\zeta.$$
	Similarly, it can be shown that $\zeta\proj=\zeta\Pi_\sigma\proj$. So, we get 
	$\Pi_\sigma\zeta\Pi_\sigma=\proj\Pi_\sigma\zeta\Pi_\sigma\proj$. Now, by the definition of $\mathcal{P}(\proj)$, we get the stated result as $\Pi_\sigma\zeta\Pi_\sigma\in\mathcal{P}(\proj)$. Further, this is true for all $\zeta\in\mathcal{P}(\I-\Pi_\sigma+\proj)$, so we get $\{\Pi_\sigma\zeta\Pi_\sigma:\zeta\in \mathcal{P}(\I-\Pi_\sigma+\proj)\}\subseteq\mathcal{P}(\proj)$.
\end{proof}
\end{thm}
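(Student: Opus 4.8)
The plan is to reduce everything to projector bookkeeping built on a single geometric observation extracted from the hypothesis $\proj\Pi_\sigma=\proj$. First I would take adjoints: since $\proj$ and $\Pi_\sigma$ are Hermitian projectors, $\proj\Pi_\sigma=\proj$ immediately yields $\Pi_\sigma\proj=\proj$ as well, which says that the range of $\proj$ lies inside the support of $\sigma$. A useful companion fact is $\proj(\I-\Pi_\sigma)=\proj-\proj\Pi_\sigma=0$, so $\I-\Pi_\sigma$ and $\proj$ are orthogonal projectors and $\I-\Pi_\sigma+\proj$ is itself a projector (onto the orthogonal sum of the kernel of $\sigma$ and the range of $\proj$). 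I would also invoke the one-sided invariance characterization (the positive-operator analogue of Lemma \ref{lem: invarEqui}): for a projector $\Pi$ and $\zeta\in\mathcal{P}(\hilbert)$, $\zeta\in\mathcal{P}(\Pi)\Leftrightarrow\Pi\zeta=\zeta$.

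For statement $(1)$, starting from $\zeta\in\mathcal{P}(\proj)$ I would write $\proj\zeta=\zeta$ and multiply on the left by $\Pi_\sigma$; using $\Pi_\sigma\proj=\proj$ this gives $\Pi_\sigma\zeta=\proj\zeta=\zeta$. Then $(\I-\Pi_\sigma+\proj)\zeta=\zeta-\Pi_\sigma\zeta+\proj\zeta=\zeta$, so by the one-sided characterization $\zeta\in\mathcal{P}(\I-\Pi_\sigma+\proj)$. Since this holds for every $\zeta\in\mathcal{P}(\proj)$, the inclusion $\mathcal{P}(\proj)\subseteq\mathcal{P}(\I-\Pi_\sigma+\proj)$ follows.

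For statement $(2)$, starting from $\zeta\in\mathcal{P}(\I-\Pi_\sigma+\proj)$ I would write $\zeta=(\I-\Pi_\sigma+\proj)\zeta$ and left-multiply by $\Pi_\sigma$; using $\Pi_\sigma^2=\Pi_\sigma$ and $\Pi_\sigma\proj=\proj$ the middle terms collapse to $\Pi_\sigma\zeta=\proj\zeta=\proj\Pi_\sigma\zeta$, and taking adjoints gives the mirrored identity $\zeta\Pi_\sigma=\zeta\proj=\zeta\Pi_\sigma\proj$. Sandwiching these yields $\Pi_\sigma\zeta\Pi_\sigma=\proj\zeta\proj=\proj(\Pi_\sigma\zeta\Pi_\sigma)\proj$, so $\Pi_\sigma\zeta\Pi_\sigma$ is positive and $\proj$-invariant, i.e. $\Pi_\sigma\zeta\Pi_\sigma\in\mathcal{P}(\proj)$; quantifying over all admissible $\zeta$ gives the stated set containment.

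The argument is essentially routine, so the only point deserving care is the one-sided invariance $\Pi\zeta\Pi=\zeta\Rightarrow\Pi\zeta=\zeta$ for positive $\zeta$. This is the mild obstacle, and it is dispatched exactly as in Lemma \ref{lem: invarEqui} (or directly: $\tr((\I-\Pi)\zeta)=\tr((\I-\Pi)\Pi\zeta\Pi)=0$ forces $(\I-\Pi)\zeta^{1/2}=0$, hence $\Pi\zeta=\zeta$). Everything else reduces to the two commutation relations $\Pi_\sigma\proj=\proj\Pi_\sigma=\proj$, so no substantive difficulty remains.
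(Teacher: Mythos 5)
Your proof is correct and follows essentially the same route as the paper's: both parts reduce to the commutation relations $\Pi_\sigma\proj=\proj\Pi_\sigma=\proj$ and the one-sided invariance characterization $\Pi\zeta=\zeta\Leftrightarrow\Pi\zeta\Pi=\zeta$ for positive $\zeta$, which you justify via Lemma \ref{lem: invarEqui} exactly as the paper implicitly does. The only difference is cosmetic: you make the orthogonality of $\I-\Pi_\sigma$ and $\proj$ and the one-sided invariance step explicit, which the paper leaves tacit.
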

\section{Minimizing the max norm}
\label{sec: maxNormMinimization}

\begin{comment}
\begin{lem} \label{lem: projected norm}
	For any $\nu\in\mathcal{P}(\hilbert)$ and projector $\Pi$,
	$\|\Pi\nu\Pi\|_{\infty}\leq\|\nu\|_{\infty}$.
	\begin{proof} Note that $\ds \langle\phi|\Pi\nu\Pi|\phi\rangle=\langle\phi|\Pi|\phi\rangle\frac{\langle\phi|\Pi}{\sqrt{\langle\phi|\Pi|\phi\rangle}}\nu\frac{\Pi|\phi\rangle}{\sqrt{\langle\phi|\Pi|\phi\rangle}}\leq\frac{\langle\phi|\Pi}{\sqrt{\langle\phi|\Pi|\phi\rangle}}\nu\frac{\Pi|\phi\rangle}{\sqrt{\langle\phi|\Pi|\phi\rangle}}.$ Beginning with the definition of max-norm and using this inequality, we get
		\begin{align*}
			\|\Pi\nu\Pi\|_{\infty}&=\max_{|\phi\rangle}\langle\phi|\Pi\nu\Pi|\phi\rangle
			\leq \max_{|\phi\rangle}\frac{\langle\phi|\Pi}{\sqrt{\langle\phi|\Pi|\phi\rangle}}\nu\frac{\Pi|\phi\rangle}{\sqrt{\langle\phi|\Pi|\phi\rangle}}\\
			&\stackrel{(a)}{=} \max_{|\phi\rangle:|\phi\rangle=\Pi|\phi\rangle}\langle\phi|\nu|\phi\rangle\stackrel{(b)}{\leq} \max_{|\phi\rangle}\langle\phi|\nu|\phi\rangle\stackrel{(c)}{=}\|\nu\|_{\infty}.
		\end{align*}
	Here, $(a)$ follows from the fact that $\left\{\frac{\Pi|\phi\rangle}{\sqrt{\langle\phi|\Pi|\phi\rangle}}: |\phi\rangle\right\} = \left\{\phi:\Pi|\phi\rangle=|\phi\rangle\right\}$.  $(b)$ is obtained because the restriction $\{|\phi\rangle:|\phi\rangle=\Pi|\phi\rangle\}$ is removed and $(c)$ is obtained from the definition of max-norm.
	\end{proof}
\end{lem}
\end{comment}

\begin{lem}\label{lem: singleMin} For any $\sigma\in\mathcal{S}(\hilbert)$, and projector $\proj$ such that $\proj\Pi_\sigma=\proj$, then
	$$\min_{\psi\in\mathcal{S}(\I-\Pi_\sigma+\proj)}\left\|\frac{\psi}{\tr\lr{\psi\sigma}}\right\|_{\infty}=\frac{1}{\tr\lr{\proj \sigma}},$$	
	with minimum is achieved at $\psi={\proj}/{\tr\lr{\proj}}$.
	\begin{proof}
		We will prove the statement in lemma in 3 steps by showing the following as equal.
		\begin{enumerate}
			\item \label{it1} $\ds\min_{\psi\in\mathcal{S}(\I-\Pi_\sigma+\proj)}\left\|\frac{\psi}{\tr\lr{\psi\sigma}}\right\|_{\infty}$
			\item \label{it2}
			$\ds\min_{\psi\in\mathcal{P}(\proj)}\left\|\frac{\psi}{\tr\lr{\psi\sigma}}\right\|_{\infty}$
			%\item \label{it4}$\ds \min_{\tilde\psi\in\mathcal{S}(\Pi_{\sigma^{1/2}\proj\sigma^{1/2}})}\left\|\sigma^{-1/2}\tilde\psi\sigma^{-1/2}\right\|_{\infty}$
			\item \label{it3}$\ds\frac{1}{\tr\lr{\proj \sigma}}$
		\end{enumerate}
		\textbf{\ref{it1}$\to$\ref{it2}: }Observe that taking any $\tilde c>0$, we get
		$$\ds\min_{\psi\in\mathcal{S}(\I-\Pi_\sigma+\proj)}\left\|\frac{\psi}{\tr\lr{\psi\sigma}}\right\|_{\infty}=\min_{\psi\in\mathcal{S}(\I-\Pi_\sigma+\proj)}\left\|\frac{\tilde c\psi}{\tr\lr{\tilde c \psi\sigma}}\right\|_{\infty}=\min_{\psi\in\mathcal{P}(\I-\Pi_\sigma+\proj)}\left\|\frac{\psi}{\tr\lr{\psi\sigma}}\right\|_{\infty}.$$
		Now, using %note that $\tr\lr{\psi\sigma}$ is constant and so using Lemma \ref{lem: projected norm}, we get
		$\ds \left\|{\Pi_\sigma\psi\Pi_\sigma}\right\|_{\infty}\leq\left\|{\psi}\right\|_{\infty}$, we get %so
		\begin{gather}
			\min_{\psi\in\mathcal{P}(\I-\Pi_\sigma+\proj)}\left\|\frac{\Pi_\sigma\psi\Pi_\sigma}{\tr\lr{\psi\sigma}}\right\|_{\infty}\leq\min_{\psi\in\mathcal{P}(\I-\Pi_\sigma+\proj)}\left\|\frac{\psi}{\tr\lr{\psi\sigma}}\right\|_{\infty}.
		\end{gather}
		Substituting $\psi'=\Pi_\sigma\psi\Pi_\sigma$, and from Theorem \ref{thm: projSubSet} in Appendix \ref{sec: genProj}, we know that $\psi'\in\{\Pi_\sigma\psi\Pi_\sigma:\psi\in \mathcal{P}(\I-\Pi_\sigma+\proj)\}\subseteq\mathcal{P}(\proj)$. Hence,
		\begin{gather}	
			\min_{\psi'\in\mathcal{P}(\proj)}\left\|\frac{\psi'}{\tr\lr{\psi'\sigma}}\right\|_{\infty}\leq\min_{\psi'\in\{\Pi_\sigma\psi\Pi_\sigma:\psi\in \mathcal{P}(\I-\Pi_\sigma+\proj)\}}\left\|\frac{\psi'}{\tr\lr{\psi'\sigma}}\right\|_{\infty}=\min_{\psi\in\mathcal{P}(\I-\Pi_\sigma+\proj)}\left\|\frac{\Pi_\sigma\psi\Pi_\sigma}{\tr\lr{\psi\sigma}}\right\|_{\infty}\\
			\Rightarrow \min_{\psi\in\mathcal{P}(\proj)}\left\|\frac{\psi}{\tr\lr{\psi\sigma}}\right\|_{\infty}\leq\min_{\psi\in\mathcal{P}(\I-\Pi_\sigma+\proj)}\left\|\frac{\psi}{\tr\lr{\psi\sigma}}\right\|_{\infty}. \label{eq: lowerminMaxNormProof1}
		\end{gather}
		But, from Theorem \ref{thm: projSubSet} in Appendix \ref{sec: genProj}, we know that $\mathcal{P}(\proj)\subseteq\mathcal{P}(\I-\Pi_\sigma+\proj)$, hence
		\begin{gather}
			\min_{\psi\in\mathcal{P}(\I-\Pi_\sigma+\proj)}\left\|\frac{\psi}{\tr\lr{\psi\sigma}}\right\|_{\infty}\leq\min_{\psi\in\mathcal{P}(\proj)}\left\|\frac{\psi}{\tr\lr{\psi\sigma}}\right\|_{\infty}. \label{eq: lowerminMaxNormProof2}
			\intertext{	Combining the inequalities in \eqref{eq: lowerminMaxNormProof1} and \eqref{eq: lowerminMaxNormProof2}, we get}
			\min_{\psi\in\mathcal{P}(\I-\Pi_\sigma+\proj)}\left\|\frac{\psi}{\tr\lr{\psi\sigma}}\right\|_{\infty}=\min_{\psi\in\mathcal{P}(\proj)}\left\|\frac{\psi}{\tr\lr{\psi\sigma}}\right\|_{\infty}. \label{eq: projEquiNorm}
		\end{gather}
		\textbf{\ref{it2}$\to$\ref{it3}: } Taking $\ds \phi=\frac{\psi}{\|\psi\|_{\infty}}$, we get the constraint as $\phi\in\mathcal{P}(\proj)$ and $\|\phi\|_{\infty}=1$ and minimum as
		\begin{align*}
			\min_{\psi\in\mathcal{P}(\proj)}\left\|\frac{\psi}{\tr\lr{\psi\sigma}}\right\|_{\infty}=\min_{\phi\in\mathcal{P}(\proj),\|\phi\|_{\infty}=1} \frac{1}{\tr\lr{\phi\sigma}}=\lr{\max_{\phi\in\mathcal{P}(\proj),\|\phi\|_{\infty}=1}\tr\lr{\phi\sigma}}^{-1}.
		\end{align*}
		Focusing on $\phi$, $$\|\phi\|_{\infty}=1\Rightarrow \phi\leq \I\Rightarrow \proj\phi\proj\leq\proj\Rightarrow \phi\leq \proj\Rightarrow\sigma^{1/2}\phi\sigma^{1/2}\leq \sigma^{1/2}\proj\sigma^{1/2}\Rightarrow \tr\lr{\phi\sigma}\leq\tr\lr{\proj\sigma}.$$
		Also, equality is obtained as $\phi=\proj$ and so $\ds \max_{\phi\in\mathcal{P}(\proj),\|\phi\|_{\infty}=1}\tr\lr{\phi\sigma}=\tr\lr{\proj\sigma}$. Thus we obtain,
		\begin{align*}
			\min_{\psi\in\mathcal{P}(\proj)}\left\|\frac{\psi}{\tr\lr{\psi\sigma}}\right\|_{\infty}=\frac{1}{\tr\lr{\proj\sigma}}.
		\end{align*}
		For $\psi\in\mathcal{S}(\I-\Pi_\sigma+\proj)$, we get equality at $\ds \psi=\frac{\proj}{\tr\lr{\proj}}$.
		\begin{comment}
		On substituting $\phi=\sigma^{1/2}\psi\sigma^{1/2}$, we obtain
		\begin{gather}
			\left\|\frac{\psi}{\tr\lr{\psi\sigma}}\right\|_{\infty}=\left\|\frac{\sigma^{-1/2}\phi\sigma^{-1/2}}{\tr\lr{\phi}}\right\|_{\infty}.
			\intertext{Note that the mapping is invertible and one to one. Hence,}
			\min_{\psi\in\mathcal{P}(\proj)}\left\|\frac{\psi}{\tr\lr{\psi\sigma}}\right\|_{\infty}=\min_{\phi\in\mathcal{P}(\Pi)}\left\|\frac{\sigma^{-1/2}\phi\sigma^{-1/2}}{\tr\lr{\phi}}\right\|_{\infty}.
			\intertext{Now, taking $\phi$ such that $\tr\lr{\phi}=1$, we get}
			\min_{\phi\in\mathcal{P}(\Pi)}\left\|\frac{\sigma^{-1/2}\phi\sigma^{-1/2}}{\tr\lr{\phi}}\right\|_{\infty}=\min_{\phi\in\mathcal{S}(\Pi)}\left\|\sigma^{-1/2}\phi\sigma^{-1/2}\right\|_{\infty}.
		\end{gather}
		Now, 
		\begin{align}
			\min_{\phi\in\mathcal{S}(\Pi)}\left\|\sigma^{-1/2}\phi\sigma^{-1/2}\right\|_{\infty}&=\min\left\{C:\sigma^{-1/2}\phi\sigma^{-1/2}\leq C\proj \ \text{for some } {\phi\in\mathcal{S}(\Pi)}\right\}\\
			&=\min\left\{C:\phi\leq C\sigma^{1/2}\proj\sigma^{1/2} \ \text{for some } {\phi\in\mathcal{S}(\Pi)}\right\}\\
			&=\frac{1}{\tr\lr{\proj\sigma}},
		\end{align}
		at $\ds \phi= \frac{\sigma^{1/2}\proj\sigma^{1/2}}{\tr\lr{\proj\sigma}}$.
		\end{comment}		
	\end{proof}
\end{lem}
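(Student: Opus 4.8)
The plan is to establish the identity in three stages: exploit the scale invariance of the objective to pass to a cone, collapse the feasible set $\mathcal{S}(\I-\Pi_\sigma+\proj)$ onto $\mathcal{P}(\proj)$, and then evaluate the resulting minimization in closed form. First I would note that $\left\|\psi/\tr\lr{\psi\sigma}\right\|_\infty=\|\psi\|_\infty/\tr\lr{\psi\sigma}$ is homogeneous of degree zero in $\psi$, hence unchanged under $\psi\mapsto\tilde c\psi$ for any $\tilde c>0$. This lets me discard the trace-one normalization and minimize over the cone $\mathcal{P}(\I-\Pi_\sigma+\proj)$ in place of $\mathcal{S}(\I-\Pi_\sigma+\proj)$, which is more convenient for the comparison arguments that follow.

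The central step is to show that this minimum equals the minimum of the same functional restricted to $\mathcal{P}(\proj)$. For the easy direction, since $\proj\Pi_\sigma=\proj$, part $(1)$ of Theorem \ref{thm: projSubSet} gives $\mathcal{P}(\proj)\subseteq\mathcal{P}(\I-\Pi_\sigma+\proj)$, so minimizing over the smaller set can only raise the value. For the reverse direction, given any feasible $\psi\in\mathcal{P}(\I-\Pi_\sigma+\proj)$ I would compress it to $\psi'=\Pi_\sigma\psi\Pi_\sigma$. Because $\sigma=\Pi_\sigma\sigma\Pi_\sigma$, the denominator is preserved, $\tr\lr{\psi'\sigma}=\tr\lr{\psi\sigma}$; by Lemma \ref{lem: projected norm} the numerator does not increase, $\|\psi'\|_\infty\le\|\psi\|_\infty$; and by part $(2)$ of Theorem \ref{thm: projSubSet}, $\psi'\in\mathcal{P}(\proj)$. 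Thus every feasible point admits a point of $\mathcal{P}(\proj)$ with no larger objective value, yielding the opposite inequality, so the two minima coincide.

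Finally I would evaluate $\min_{\psi\in\mathcal{P}(\proj)}\|\psi\|_\infty/\tr\lr{\psi\sigma}$. Using scale invariance once more I fix $\|\psi\|_\infty=1$, which for $\psi\in\mathcal{P}(\proj)$ forces $\psi=\proj\psi\proj\le\proj$. Conjugating by $\sigma^{1/2}$ and taking traces gives $\tr\lr{\psi\sigma}\le\tr\lr{\proj\sigma}$, with equality at $\psi=\proj$; hence the maximum of the denominator is $\tr\lr{\proj\sigma}$ and the minimum of the ratio is $1/\tr\lr{\proj\sigma}$. Undoing the normalization identifies the optimizer in the original set as $\psi=\proj/\tr\lr{\proj}$, as claimed.

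The main obstacle I anticipate is precisely the compression step: one must verify simultaneously that replacing $\psi$ by $\Pi_\sigma\psi\Pi_\sigma$ keeps the operator inside $\mathcal{P}(\proj)$, leaves $\tr\lr{\psi\sigma}$ invariant, and cannot raise the operator norm. All three facts hinge on the hypothesis $\proj\Pi_\sigma=\proj$ and are exactly what Theorem \ref{thm: projSubSet} and Lemma \ref{lem: projected norm} are engineered to supply, so the real work lies in invoking them in the correct order and chaining the two inequalities into an equality, rather than in any delicate estimate.
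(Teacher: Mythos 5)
Your proposal is correct and follows essentially the same route as the paper's proof: scale invariance to pass from $\mathcal{S}(\I-\Pi_\sigma+\proj)$ to the cone, the compression $\psi\mapsto\Pi_\sigma\psi\Pi_\sigma$ justified by Lemma \ref{lem: projected norm} and both parts of Theorem \ref{thm: projSubSet} to reduce to $\mathcal{P}(\proj)$, and the normalization $\|\phi\|_\infty=1$ with $\phi\leq\proj$ to evaluate the minimum as $1/\tr\lr{\proj\sigma}$. The only cosmetic difference is that you make explicit the invariance $\tr\lr{\Pi_\sigma\psi\Pi_\sigma\,\sigma}=\tr\lr{\psi\sigma}$, which the paper uses implicitly.
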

\begin{lem}\label{lem: doubleMin} For any $\rho,\sigma$ such that $\Pi_\rho=\Pi_\sigma$, $0<c_r<1$,
	\begin{align*}
		\min_{\substack{\psi_{\max}\in\mathcal{S}(\I-\Pi_\sigma+\proj^{\max}),\\\psi_{\min}\in\mathcal{S}(\I-\Pi_\sigma+\proj^{\min})}} \left\|\frac{c_r\psi_{\max}}{\tr\lr{\psi_{\max}\sigma}}+\frac{(1-c_r)\psi_{\min}}{\tr\lr{\psi_{\min}\sigma}}\right\|_{\infty}=\lr{\Upsilon_{\Tau^{\max},\Tau^{\min}}\lr{\sigma^{1/2}\Pi_{\proj^{\max}+\proj^{\min}}\sigma^{1/2},c_r}}^{-1}.
	\end{align*}
		Here, $\Upsilon_{\Pi_1,\Pi_2}(\sigma,r)$ is defined as %for any $\tr\lr{\Pi_1\Pi_2}=0$, we have
	$$\Upsilon_{\Pi_1,\Pi_2}(\sigma,r)\isdefinedas\{\max c: c(r\psi_1+(1-r)\psi_2)\leq\sigma \text{ for some }\psi_1\in\mathcal{S}(\Pi_1),\psi_2\in\mathcal{S}(\Pi_2), \text{ with }\tr\lr{\Pi_1\Pi_2}=0\}.$$
	More detailed properties of $\Upsilon_{\Pi_1,\Pi_2}\lr{\sigma,r}$
	are given in Appendix \ref{sec: upsilon}.
	\begin{proof}
		Following arguments similar to the previous proof. We get
		\begin{align*}
			&\min_{\substack{\psi_{\max}\in\mathcal{S}(\I-\Pi_\sigma+\proj^{\max}),\\\psi_{\min}\in\mathcal{S}(\I-\Pi_\sigma+\proj^{\min})}} \left\|\frac{c_r\psi_{\max}}{\tr\lr{\psi_{\max}\sigma}}+\frac{(1-c_r)\psi_{\min}}{\tr\lr{\psi_{\min}\sigma}}\right\|_{\infty}\\&=\min_{\substack{\psi_{\max}\in\mathcal{S}(\proj^{\max}),\\\psi_{\min}\in\mathcal{S}(\proj^{\min})}} \left\|\frac{c_r\psi_{\max}}{\tr\lr{\psi_{\max}\sigma}}+\frac{(1-c_r)\psi_{\min}}{\tr\lr{\psi_{\min}\sigma}}\right\|_{\infty}\\
			&=\min_{{\phi_{\max}\in\mathcal{S}(\Tau^{\max}),\phi_{\min}\in\mathcal{S}(\Tau^{\min})}} \left\|{c_r\sigma^{-1/2}\phi_{\max}\sigma^{-1/2}}+{(1-c_r)\sigma^{-1/2}\phi_{\min}\sigma^{-1/2}}\right\|_{\infty}.
		\end{align*}	
		Now, following on similar lines, we get		
		\begin{align*}
			&=\min\left\{C:{c_r\sigma^{-1/2}\phi_{\max}\sigma^{-1/2}}+(1-c_r){\sigma^{-1/2}\phi_{\min}\sigma^{-1/2}}\leq C\Pi_{\proj^{\max}+\proj^{\min}} \ \text{for some } {\phi_{\max}\in\mathcal{S}(\Tau^{\max}),\phi_{\min}\in\mathcal{S}(\Tau^{\min})}\right\}\\
			&=\left\{\max C:C\left(c_r\phi_{\max}+(1-c_r)\phi_{\min}\right)\leq \sigma^{1/2}\Pi_{\proj^{\max}+\proj^{\min}}\sigma^{1/2}\ \text{for some } {\phi_{\max}\in\mathcal{S}(\Tau^{\max}),\phi_{\min}\in\mathcal{S}(\Tau^{\min})}\right\}^{-1}\\
			&=\lr{\Upsilon_{\Tau^{\max},\Tau^{\min}}\lr{\sigma^{1/2}\Pi_{\proj^{\max}+\proj^{\min}}\sigma^{1/2},c_r}}^{-1},
		\end{align*}
		where last step follows from definition of $\Upsilon_{\Tau^{\max},\Tau^{\min}}$ in Appendix \ref{sec: upsilon}. Thus, completing the proof.
	\end{proof}
	
\end{lem}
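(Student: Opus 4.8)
The plan is to follow the same three-stage reduction already used for the single-operator case in Lemma~\ref{lem: singleMin}, adapted to the fact that here two free operators $\psi_{\max},\psi_{\min}$ enter through a convex combination inside a single operator norm. The three stages are: (i) replace the feasible sets $\mathcal{S}(\I-\Pi_\sigma+\proj^{\max})$ and $\mathcal{S}(\I-\Pi_\sigma+\proj^{\min})$ by the smaller sets $\mathcal{S}(\proj^{\max})$ and $\mathcal{S}(\proj^{\min})$; (ii) change variables $\phi_{\max}=\sigma^{1/2}\psi_{\max}\sigma^{1/2}$, $\phi_{\min}=\sigma^{1/2}\psi_{\min}\sigma^{1/2}$ to push everything onto the supports $\Tau^{\max}$ and $\Tau^{\min}$; and (iii) recognise the resulting minimal scaling factor as the semidefinite quantity $\Upsilon$.

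For stage (i) I would first note that the denominators are conjugation-invariant: since $\tr(\psi_{\max}\sigma)=\tr(\Pi_\sigma\psi_{\max}\Pi_\sigma\,\sigma)$, the map $\psi\mapsto\Pi_\sigma\psi\Pi_\sigma$ fixes each denominator, and by Theorem~\ref{thm: projSubSet} it sends $\psi_{\max}$ into $\mathcal{P}(\proj^{\max})$ and $\psi_{\min}$ into $\mathcal{P}(\proj^{\min})$. Writing $\Gamma$ for the operator inside the norm, Lemma~\ref{lem: projected norm} gives $\|\Pi_\sigma\Gamma\Pi_\sigma\|_{\infty}\le\|\Gamma\|_{\infty}$, so conjugating $\Gamma$ by $\Pi_\sigma$ produces a feasible point of the smaller problem with no larger norm, whence $\min_{\text{small}}\le\min_{\text{big}}$. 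The reverse inequality is immediate from the inclusions $\mathcal{P}(\proj^{\max})\subseteq\mathcal{P}(\I-\Pi_\sigma+\proj^{\max})$ of Theorem~\ref{thm: projSubSet}, so the two minima coincide, exactly mirroring \eqref{eq: projEquiNorm}.

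For stage (ii) I would invoke Theorem~\ref{lem: genMainProj} with $\nu=\sigma$ and $\Pi=\Tau^{\max}$ (the hypothesis $\Tau^{\max}\Pi_\sigma=\Tau^{\max}$ holds since $\Tau^{\max}$ projects onto an eigenspace of $\sigma^{-1/2}\rho\sigma^{-1/2}$, and $\proj^{\max}=\Pi_{\sigma^{-1/2}\Tau^{\max}\sigma^{-1/2}}$): this makes $\psi_{\max}\in\mathcal{S}(\proj^{\max})\leftrightarrow\phi_{\max}\in\mathcal{S}(\Tau^{\max})$ a bijection, and for the trace-one representatives $\tfrac{\psi_{\max}}{\tr(\psi_{\max}\sigma)}=\sigma^{-1/2}\phi_{\max}\sigma^{-1/2}$, and likewise for the ``min'' term. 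The objective becomes $\bigl\|\,\sigma^{-1/2}\bigl(c_r\phi_{\max}+(1-c_r)\phi_{\min}\bigr)\sigma^{-1/2}\bigr\|_{\infty}$. As this operator is supported inside $\proj^{\max}+\proj^{\min}$, its norm equals $\min\{C:\ c_r\sigma^{-1/2}\phi_{\max}\sigma^{-1/2}+(1-c_r)\sigma^{-1/2}\phi_{\min}\sigma^{-1/2}\le C\,\Pi_{\proj^{\max}+\proj^{\min}}\}$; conjugating this operator inequality by $\sigma^{1/2}$ (which returns $\phi_{\max},\phi_{\min}$ unchanged, since both are supported inside $\Pi_\sigma$) yields $c_r\phi_{\max}+(1-c_r)\phi_{\min}\le C\,\sigma^{1/2}\Pi_{\proj^{\max}+\proj^{\min}}\sigma^{1/2}$. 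Setting $c=1/C$ and maximising $c$ over $\phi_{\max}\in\mathcal{S}(\Tau^{\max})$, $\phi_{\min}\in\mathcal{S}(\Tau^{\min})$ is precisely the definition of $\Upsilon_{\Tau^{\max},\Tau^{\min}}(\sigma^{1/2}\Pi_{\proj^{\max}+\proj^{\min}}\sigma^{1/2},c_r)$, so the minimal $C$ is its reciprocal, giving the claim.

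I expect the main obstacle to be stage (i): because the two free operators are coupled inside one norm, I must check that the single conjugation $\Gamma\mapsto\Pi_\sigma\Gamma\Pi_\sigma$ acts term-by-term, simultaneously landing $\Pi_\sigma\psi_{\max}\Pi_\sigma$ and $\Pi_\sigma\psi_{\min}\Pi_\sigma$ in their respective cones $\mathcal{P}(\proj^{\max})$ and $\mathcal{P}(\proj^{\min})$ while fixing both denominators, and that the orthogonality $\tr(\Tau^{\max}\Tau^{\min})=0$ (equivalently the disjointness of $\proj^{\max}$ and $\proj^{\min}$) is what allows the combined support to be captured by $\Pi_{\proj^{\max}+\proj^{\min}}$. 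A secondary subtlety is justifying the replacement of $\I$ by $\Pi_{\proj^{\max}+\proj^{\min}}$ in the operator-norm characterisation, which is legitimate only because both summands are supported inside that projector.
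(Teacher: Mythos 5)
Your proposal is correct and follows essentially the same route as the paper's proof: the reduction from $\mathcal{S}(\I-\Pi_\sigma+\proj^{\max})\times\mathcal{S}(\I-\Pi_\sigma+\proj^{\min})$ to $\mathcal{S}(\proj^{\max})\times\mathcal{S}(\proj^{\min})$ mirroring \eqref{eq: projEquiNorm}, the change of variables $\phi=\sigma^{1/2}\psi\sigma^{1/2}$ onto $\mathcal{S}(\Tau^{\max})$ and $\mathcal{S}(\Tau^{\min})$, and the identification of the norm with the reciprocal of $\Upsilon_{\Tau^{\max},\Tau^{\min}}(\sigma^{1/2}\Pi_{\proj^{\max}+\proj^{\min}}\sigma^{1/2},c_r)$ via the operator inequality against $\Pi_{\proj^{\max}+\proj^{\min}}$. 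In fact you spell out more carefully than the paper (which only says ``following arguments similar to the previous proof'') the two points that need checking, namely that the conjugation by $\Pi_\sigma$ acts term-by-term on the coupled objective and that replacing $\I$ by $\Pi_{\proj^{\max}+\proj^{\min}}$ in the norm characterisation is legitimate because both summands are supported inside that projector.
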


\section{Definition and properties of $\Upsilon_{\Pi_1,\Pi_2}\lr{\sigma,r}$}
\label{sec: upsilon}
\begin{defn} Given a pair of orthogonal projectors $\Pi_1$ and $\Pi_2$ such that $\tr\lr{\Pi_1\Pi_2}=0$, we define a function $\Upsilon_{\Pi_1,\Pi_2}:\mathcal{P}(\hilbert)\times\mathbb{R}\to\mathbb{R}$ as
	$$\Upsilon_{\Pi_1,\Pi_2}(\sigma,r)=\{\max c: c(r\psi_1+(1-r)\psi_2)\leq\sigma \text{ for some }\psi_1\in\mathcal{S}(\Pi_1),\psi_2\in\mathcal{S}(\Pi_2)\}.$$
\end{defn}

	It denotes the largest value of trace of matrix $\psi\in\mathcal{P}(\Pi_1+\Pi_2)$, having the properties $\psi\leq\sigma$,  $\tr\lr{\psi\Pi_1}=r\tr\lr{\psi}$ and $\Pi_1\psi\Pi_2=0$.

In the rest of discussion, we consider the case when $\Pi_1+\Pi_2\leq\Pi_\sigma$. If this not the cases, $\Pi_1$ (and $\Pi_2$) can just be substituted by the projector corresponding to the subspace spanned by the intersection subspace of $\Pi_1$ (and $\Pi_2$) and $\Pi_\sigma$. The following two properties can be obtained by substituting the definition of $\Upsilon_{\Pi_1,\Pi_2}(\sigma,r)$.

\noindent\textbf{Properties:}
\begin{itemize}
	\item $\Upsilon_{\Pi_1,\Pi_2}(k\sigma,r)=k\Upsilon_{\Pi_1,\Pi_2}(\sigma,r)$,
	\item $\{\max c: c(k_1\psi_1+k_2\psi_2)\leq\sigma \text{ for some }\psi_1\in\mathcal{S}(\Pi_1),\psi_2\in\mathcal{S}(\Pi_2)\}=\frac{1}{k_1+k_2}\Upsilon_{\Pi_1,\Pi_2}\lr{\sigma,\frac{k_1}{k_1+k_2}}$.
%	\item If $\Pi_\sigma>\Pi_1+\Pi_2$, the substitution	$\Upsilon_{\Pi_1,\Pi_2}\lr{\sigma,r}=\Upsilon_{\Pi_1,\Pi_2}\lr{\sigma^{1/2}\Pi_{\sigma^{-1/2}(\Pi_1+\Pi_2)\sigma^{-1/2}}\sigma^{1/2},r}$ simplifies it as $\sigma^{1/2}\Pi_{\sigma^{-1/2}(\Pi_1+\Pi_2)\sigma^{-1/2}}\sigma^{1/2}\in\mathcal{P}(\Pi_1+\Pi_2)$. From here on-words, we can proceed with the assumption that $\Pi_1+\Pi_2=\Pi_\sigma$.
%	\item {\Red Convexity and concavity and comment on achievable minima}
\end{itemize}

\noindent\textbf{Some closed-form expressions:}
These closed form expression obtained for specific cases help in obtaining closed form expression of $A_\rho^{s}$ and $A_{\sigma}^s$ in case $\mathcal{C}3$ in Theorem \ref{thm: symmetric}.
\begin{itemize}
	\item  $\Upsilon_{\Pi_1,\Pi_2}(\sigma,0)=\tr\lr{\Pi_{\sigma^{-1/2}\Pi_2\sigma^{-1/2}}\sigma}=\tr((\sigma^{-1/2}\Pi_2\sigma^{-1/2})^{-1})$
	\item $\Upsilon_{\Pi_1,\Pi_2}(\sigma,1)=\tr\lr{\Pi_{\sigma^{-1/2}\Pi_1\sigma^{-1/2}}\sigma}=\tr((\sigma^{-1/2}\Pi_1\sigma^{-1/2})^{-1})$
	\item If $\Pi_\sigma>\Pi_1+\Pi_2$, substituting	$\Upsilon_{\Pi_1,\Pi_2}\lr{\sigma,r}=\Upsilon_{\Pi_1,\Pi_2}\lr{\sigma^{1/2}\Pi_{\sigma^{-1/2}(\Pi_1+\Pi_2)\sigma^{-1/2}}\sigma^{1/2},r}$ simplifies it as $\sigma^{1/2}\Pi_{\sigma^{-1/2}(\Pi_1+\Pi_2)\sigma^{-1/2}}\sigma^{1/2}\in\mathcal{P}(\Pi_1+\Pi_2)$. 
	%	\item {\Red Convexity and concavity and comment on achievable minima}
\end{itemize}
From here on-words, we can proceed with the assumption that $\Pi_1+\Pi_2=\Pi_\sigma$.
\begin{lem}
	If $\tr\lr{\Pi_1}=\tr\lr{\Pi_2}=1$, then
	$$\Upsilon_{\Pi_1,\Pi_2}\lr{\sigma,r}=R_{\min}(\sigma,r\Pi_1+(1-r)\Pi_2).$$
	\begin{proof}
		If $\tr\lr{\Pi_1}=1, \psi_1\in\mathcal{S}(\Pi_1)\Rightarrow \psi_1=\Pi_1$ and similarly, $\psi_2=\Pi_2$ and so,
		$$\Upsilon_{\Pi_1,\Pi_2}(\sigma,r)=\{\max c: c(r\Pi_1+(1-r)\Pi_2)\leq\sigma\}=R_{\min}(\sigma,r\Pi_1+(1-r)\Pi_2).$$
		Last step is obtained from the definition of $R_{\min}(\cdot,\cdot)$.
	\end{proof}
\end{lem}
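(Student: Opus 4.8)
The plan is to reduce the defining optimization for $\Upsilon_{\Pi_1,\Pi_2}(\sigma,r)$ to the quantity $R_{\min}$ by exploiting that $\Pi_1$ and $\Pi_2$ are rank one. The first step I would carry out is to show that the feasible sets $\cs(\Pi_1)$ and $\cs(\Pi_2)$ each collapse to a single point. Since $\tr\lr{\Pi_1}=1$, the projector has the form $\Pi_1=|u\rangle\langle u|$ for a unit vector $|u\rangle$. Any $\psi_1\in\cs(\Pi_1)$ satisfies $\Pi_1\psi_1\Pi_1=\psi_1$ together with $\tr\lr{\psi_1}=1$; computing $\Pi_1\psi_1\Pi_1=\langle u|\psi_1|u\rangle\,|u\rangle\langle u|=\langle u|\psi_1|u\rangle\,\Pi_1$ and equating to $\psi_1$ gives $\psi_1=\langle u|\psi_1|u\rangle\,\Pi_1$, and taking the trace forces $\langle u|\psi_1|u\rangle=1$, hence $\psi_1=\Pi_1$. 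The identical argument gives $\psi_2=\Pi_2$.

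Having pinned down $\psi_1=\Pi_1$ and $\psi_2=\Pi_2$, the second step is simply to substitute these into the definition of $\Upsilon_{\Pi_1,\Pi_2}$, which removes the inner existential quantifier over $\psi_1,\psi_2$ and leaves
\[
\Upsilon_{\Pi_1,\Pi_2}(\sigma,r)=\max\{c: c(r\Pi_1+(1-r)\Pi_2)\leq\sigma\}.
\]
The final step is to recognize the right-hand side as $R_{\min}(\sigma,r\Pi_1+(1-r)\Pi_2)$, using the characterization of $R_{\min}(\nu_1,\nu_2)$ as the largest scalar $c$ with $c\nu_2\leq\nu_1$ — the same property of $R_{\min}$ invoked elsewhere in the paper and consistent with the definition $R_{\min}(\nu_1,\nu_2)=\|\nu_2^{-1/2}\nu_1\nu_2^{-1/2}\|_{\infty,0}$.

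I do not anticipate a serious obstacle; this is essentially a direct specialization of the general $\Upsilon$. The only point needing care is the rank-one collapse of $\cs(\Pi_i)$ in the first step, and, more subtly, ensuring that the operator $r\Pi_1+(1-r)\Pi_2$ plays the role of $\nu_2$ in $R_{\min}$ consistently, i.e.\ that the support condition implicit in the definition of $R_{\min}$ is respected. This is guaranteed by the standing assumption recorded just before the lemma that $\Pi_1+\Pi_2\leq\Pi_\sigma$, under which $c(r\Pi_1+(1-r)\Pi_2)\leq\sigma$ is the well-posed constraint defining $R_{\min}(\sigma,r\Pi_1+(1-r)\Pi_2)$. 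With that convention in place, the three steps chain together immediately to yield the stated identity.
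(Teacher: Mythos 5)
Your proposal is correct and follows essentially the same route as the paper's proof: collapse $\mathcal{S}(\Pi_1)$ and $\mathcal{S}(\Pi_2)$ to the single points $\Pi_1$ and $\Pi_2$ using the rank-one hypothesis, substitute into the definition of $\Upsilon$, and identify the resulting optimization with $R_{\min}(\sigma,r\Pi_1+(1-r)\Pi_2)$. Your version is slightly more careful, spelling out the rank-one collapse and the support condition that the paper leaves implicit.
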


\begin{lem}
	If $\sigma=\Pi_1\sigma\Pi_1+\Pi_2\sigma\Pi_2$ or say $\Pi_1\sigma\Pi_2=0$, then
	$$\Upsilon_{\Pi_1,\Pi_2}\lr{\sigma,r}=\begin{cases}
		\ds\frac{\tr\lr{\Pi_2\sigma}}{1-r},&\ds r\leq\frac{\tr\lr{\Pi_1\sigma}}{\tr\lr{\sigma}},\vspace{1mm}\\
		\ds\frac{\tr\lr{\Pi_1\sigma}}{r},&\text{otherwise}.
	\end{cases}.$$
	\begin{proof} From the definition,
		$\Upsilon_{\Pi_1,\Pi_2}(\sigma,r)=\{\max c: c(r\Pi_1+(1-r)\Pi_2)\leq\sigma\}.$ Hence,
		\begin{align*}
			c(r\psi_1+(1-r)\psi_2)&\leq \sigma =\Pi_1\sigma\Pi_1+\Pi_2\sigma\Pi_2\\
			&\Rightarrow\begin{cases}
				cr\psi_1\leq\Pi_1\sigma\Pi_1 &\Rightarrow cr\leq\tr\lr{\Pi_1\sigma}\\\text{and}\\
				c(1-r)\psi_2\leq\Pi_2\sigma\Pi_2&\Rightarrow c(1-r)\leq\tr\lr{\Pi_2\sigma}
			\end{cases}\\
		&\Rightarrow c\leq \min\left\{\frac{\tr\lr{\Pi_1\sigma}}{r},\frac{\tr\lr{\Pi_2\sigma}}{1-r}\right\}.
		\end{align*}
	So, $\ds\Upsilon_{\Pi_1,\Pi_2}\lr{\sigma,r}\leq \min\left\{\frac{\tr\lr{\Pi_1\sigma}}{r},\frac{\tr\lr{\Pi_2\sigma}}{1-r}\right\}$. Now, note that taking $\ds\psi_1=\frac{\Pi_1\sigma\Pi_1}{\tr\lr{\Pi_1\sigma}},\psi_2=\frac{\Pi_2\sigma\Pi_2}{\tr\lr{\Pi_2\sigma}}$, and $\ds c=\min\left\{\frac{\tr\lr{\Pi_1\sigma}}{r},\frac{\tr\lr{\Pi_2\sigma}}{1-r}\right\}$, we get
	$$\ds c(r\psi_1+(1-r)\psi_2)\leq\min\left\{\frac{\tr\lr{\Pi_1\sigma}}{r},\frac{\tr\lr{\Pi_2\sigma}}{1-r}\right\}\lr{r\frac{\Pi_1\sigma\Pi_1}{\tr\lr{\Pi_1\sigma}}+(1-r)\frac{\Pi_2\sigma\Pi_2}{\tr\lr{\Pi_2\sigma}}}\leq \Pi_1\sigma\Pi_1+\Pi_2\sigma\Pi_2=\sigma.$$
	So, we obtain  $\ds\Upsilon_{\Pi_1,\Pi_2}\lr{\sigma,r}= \min\left\{\frac{\tr\lr{\Pi_1\sigma}}{r},\frac{\tr\lr{\Pi_2\sigma}}{1-r}\right\}$.
	\end{proof}
\end{lem}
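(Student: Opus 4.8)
The plan is to establish the closed form by sandwiching the operator inequality in the definition of $\Upsilon_{\Pi_1,\Pi_2}\lr{\sigma,r}$ between the two orthogonal projectors $\Pi_1$ and $\Pi_2$, which decouples the single matrix constraint into two independent scalar constraints, and then to convert the resulting minimum into the stated two-branch expression. Recall that by definition $\Upsilon_{\Pi_1,\Pi_2}\lr{\sigma,r}$ is the largest $c$ for which $c\lr{r\psi_1+(1-r)\psi_2}\leq\sigma$ holds for some $\psi_1\in\cs(\Pi_1)$ and $\psi_2\in\cs(\Pi_2)$, and that $\tr\lr{\Pi_1\Pi_2}=0$ forces $\Pi_1\Pi_2=0$. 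The hypothesis $\Pi_1\sigma\Pi_2=0$, equivalently $\sigma=\Pi_1\sigma\Pi_1+\Pi_2\sigma\Pi_2$, is exactly what makes $\sigma$ block diagonal so that the two blocks can be handled separately.

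First I would derive the upper bound. Since $\psi_1\in\cs(\Pi_1)$ gives $\Pi_1\psi_1\Pi_1=\psi_1$ while $\Pi_1\psi_2\Pi_1=0$ (from $\Pi_1\Pi_2=0$), compressing $c\lr{r\psi_1+(1-r)\psi_2}\leq\sigma$ by $\Pi_1$ yields $cr\psi_1\leq\Pi_1\sigma\Pi_1$, and taking the trace with $\tr\lr{\psi_1}=1$ gives $cr\leq\tr\lr{\Pi_1\sigma}$. Compressing by $\Pi_2$ analogously gives $c(1-r)\leq\tr\lr{\Pi_2\sigma}$. Hence every feasible $c$ satisfies $c\leq\min\left\{\tr\lr{\Pi_1\sigma}/r,\ \tr\lr{\Pi_2\sigma}/(1-r)\right\}$.

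For achievability I would exhibit a feasible triple attaining this bound: take $\psi_1=\Pi_1\sigma\Pi_1/\tr\lr{\Pi_1\sigma}$, $\psi_2=\Pi_2\sigma\Pi_2/\tr\lr{\Pi_2\sigma}$, and $c$ equal to the minimum above. By construction $cr\psi_1\leq\Pi_1\sigma\Pi_1$ and $c(1-r)\psi_2\leq\Pi_2\sigma\Pi_2$, and adding these and invoking $\sigma=\Pi_1\sigma\Pi_1+\Pi_2\sigma\Pi_2$ recovers $c\lr{r\psi_1+(1-r)\psi_2}\leq\sigma$. This establishes $\Upsilon_{\Pi_1,\Pi_2}\lr{\sigma,r}=\min\left\{\tr\lr{\Pi_1\sigma}/r,\ \tr\lr{\Pi_2\sigma}/(1-r)\right\}$. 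Finally I would resolve the minimum: using $\tr\lr{\sigma}=\tr\lr{\Pi_1\sigma}+\tr\lr{\Pi_2\sigma}$ (again from block diagonality), the inequality $\tr\lr{\Pi_1\sigma}/r\leq\tr\lr{\Pi_2\sigma}/(1-r)$ rearranges to $\tr\lr{\Pi_1\sigma}\leq r\,\tr\lr{\sigma}$, i.e.\ $r\geq\tr\lr{\Pi_1\sigma}/\tr\lr{\sigma}$, which selects the branch $\tr\lr{\Pi_1\sigma}/r$; the complementary range $r\leq\tr\lr{\Pi_1\sigma}/\tr\lr{\sigma}$ selects $\tr\lr{\Pi_2\sigma}/(1-r)$, giving exactly the stated two-case formula.

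The only delicate points are the order-preservation facts used implicitly: that $A\leq B$ for positive operators implies $\Pi A\Pi\leq\Pi B\Pi$ for any projector $\Pi$, and that the orthogonal-block structure makes the two compressions genuinely decoupled, so that the upper bound and the achievability construction are tight simultaneously. These are standard, so I expect no serious obstacle; the main care is the bookkeeping at the boundary $r=\tr\lr{\Pi_1\sigma}/\tr\lr{\sigma}$, where the two branches coincide, and the tacit assumption $r\in(0,1)$ so that the divisions are well defined.
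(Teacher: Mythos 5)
Your proposal is correct and follows essentially the same route as the paper's proof: compressing the constraint by the orthogonal projectors to decouple it into the two blocks, taking traces for the upper bound, and exhibiting $\psi_1=\Pi_1\sigma\Pi_1/\tr\lr{\Pi_1\sigma}$, $\psi_2=\Pi_2\sigma\Pi_2/\tr\lr{\Pi_2\sigma}$ for achievability. Your explicit resolution of the minimum into the two-branch formula via $\tr\lr{\sigma}=\tr\lr{\Pi_1\sigma}+\tr\lr{\Pi_2\sigma}$ is a small addition the paper leaves implicit, but the argument is the same.
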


\section{Proof of Theorem \ref{thm: beyondSymmetricSplit}}
\label{proof: beyondSymmetricSplit}
We will use frequently here that $\tr\lr{\Gamma\nu}=0\Leftrightarrow\Gamma\in\mathcal{P}(\I-\Pi_\nu)$ (See Lemma \ref{lem: null} in Appendix \ref{sec: genProj} for proof).
		Starting with the definition of set $\mathcal{M}$ and the condition $e(\Lambda)=0$, the goal to find set of all ${\Lambda}$ such that $\Lr\geq0,\Ls\geq0,\Lr+\Ls\leq\I$ and $e(\Lambda)=0$. So, 
		\begin{align}
			\mathcal{E}_s(\rho,\sigma,p)&=\{\Lambda:e(\Lambda)=0,\Lr\geq0,\Ls\geq0,\Lr+\Ls\leq\I\}\nonumber\\
			&= \left\{\Lambda: \frac{\tr(p_\sigma\Lambda_\rho\sigma+p_\rho\Lambda_\sigma\rho)}{\tr((\Lambda_\rho+\Lambda_\sigma)(p_\rho\rho+p_\sigma\sigma))}=0,\Lr\geq0, \Ls\geq0,\Lr+\Ls\leq\I\right\}.\label{eq: firstSimple}
		\end{align}
		The previous step follows from using the definition of $e(\Lambda)$. Now, for a fraction to be $0$, numerator must be zero and denominator must remain non-zero. Hence, we get
		\begin{align*}
			\frac{\tr(p_\sigma\Lambda_\rho\sigma+p_\rho\Lambda_\sigma\rho)}{\tr((\Lambda_\rho+\Lambda_\sigma)(p_\rho\rho+p_\sigma\sigma))}=0&\Leftrightarrow\tr(p_\sigma\Lambda_\rho\sigma+p_\rho\Lambda_\sigma\rho)=0, \tr((\Lambda_\rho+\Lambda_\sigma)(p_\rho\rho+p_\sigma\sigma))\neq 0\\
			&\Leftrightarrow \tr(p_\sigma\Lambda_\rho\sigma+p_\rho\Lambda_\sigma\rho)=0, p_\rho\tr\lr{\Lr\rho}+p_\sigma\tr\lr{\Ls\sigma}\neq 0.
		\end{align*}
		Previous step is obtained by substituting $\tr(p_\sigma\Lambda_\rho\sigma+p_\rho\Lambda_\sigma\rho)=0$, and thus $\tr((\Lambda_\rho+\Lambda_\sigma)(p_\rho\rho+p_\sigma\sigma))=p_\rho\tr\lr{\Lr\rho}+p_\sigma\tr\lr{\Ls\sigma}$. Now using $\tr(p_\sigma\Lambda_\rho\sigma+p_\rho\Lambda_\sigma\rho)=0\Leftrightarrow\tr\lr{\Ls\rho}=0,\tr\lr{\Lr\sigma}=0$, we get
		\begin{align*}
			\frac{\tr(p_\sigma\Lambda_\rho\sigma+p_\rho\Lambda_\sigma\rho)}{\tr((\Lambda_\rho+\Lambda_\sigma)(p_\rho\rho+p_\sigma\sigma))}=0&\Leftrightarrow\tr\lr{\Ls\rho}=0,\tr\lr{\Lr\sigma}=0, p_\rho\tr\lr{\Lr\rho}+p_\sigma\tr\lr{\Ls\sigma}\neq 0.
		\end{align*}
		Substituting it in \eqref{eq: firstSimple}, we get $\mathcal{E}_s(\rho,\sigma,p)=$
		\begin{align}
			\left\{\Lambda: \tr\lr{\Ls\rho}=0,\tr\lr{\Lr\sigma}=0, p_\rho\tr\lr{\Lr\rho}+p_\sigma\tr\lr{\Ls\sigma}\neq 0,\Lr\geq0, \Ls\geq0,\Lr+\Ls\leq\I\right\}.
		\end{align}
		Substituting $\tr\lr{\Lr\sigma}=0,\Lr\geq0\Leftrightarrow\Lr\in\mathcal{P}(\I-\Pi_\sigma)$ and similarly $\tr\lr{\Ls\rho}=0,\Ls\geq0\Leftrightarrow\Ls\in\mathcal{P}(\I-\Pi_\rho)$, we get
		\begin{align*}
			\mathcal{E}_s(\rho,\sigma,p)&=
			\{\Lambda:\Lr\in\mathcal{P}(\I-\Pi_\sigma),\Ls\in\mathcal{P}(\I-\Pi_\rho), p_\rho\tr\lr{\Lr\rho}+p_\sigma\tr\lr{\Ls\sigma}\neq 0,\Lr+\Ls\leq\I\}.
			\intertext{Now at max one of $p_\rho\tr\lr{\Lr\rho}$ and $p_\sigma\tr\lr{\Ls\sigma}$ can be 0 to ensure that $p_\rho\tr\lr{\Lr\rho}+p_\sigma\tr\lr{\Ls\sigma}\neq 0$, thus giving rise to one of the three cases, any POVM must satisfy as given below}
			\mathcal{E}_s(\rho,\sigma,p)&=\left\{\Lambda:\begin{cases}
				\Lr\in\mathcal{P}(\I-\Pi_\sigma),\Ls\in\mathcal{P}(\I-\Pi_\rho),\tr\lr{\Lr\rho}\neq0,\tr\lr{\Ls\sigma}=0,\Lr+\Ls\leq\I\\\text{ OR}\\
				\Lr\in\mathcal{P}(\I-\Pi_\sigma),\Ls\in\mathcal{P}(\I-\Pi_\rho),\tr\lr{\Lr\rho}=0,\tr\lr{\Ls\sigma}\neq0,\Lr+\Ls\leq\I\\\text{ OR}\\
				\Lr\in\mathcal{P}(\I-\Pi_\sigma),\Ls\in\mathcal{P}(\I-\Pi_\rho),\tr\lr{\Lr\rho}\neq0,\tr\lr{\Ls\sigma}\neq0,\Lr+\Ls\leq\I
			\end{cases}\right\}.
			\intertext{ Note that $\Ls\in\mathcal{P}(\I-\Pi_\rho),\tr\lr{\Ls\sigma}=0\Leftrightarrow\Ls\in\mathcal{P}(\I-\Pi_\rho),\Ls\in\mathcal{P}(\I-\Pi_\sigma)\Leftrightarrow\Ls\in\mathcal{P}(\I-\Pi_{\rho+\sigma})$ for the first case. Similarly $\Lr\in\mathcal{P}(\I-\Pi_\sigma),\tr\lr{\Lr\rho}=0\Leftrightarrow\Lr\in\mathcal{P}(\I-\Pi_{\rho+\sigma})$ for the second case. Substituting these, we obtain}
			\mathcal{E}_s(\rho,\sigma,p)&=\left\{\Lambda:\begin{cases}					\Lr\in\mathcal{P}(\I-\Pi_\sigma),\Ls\in\mathcal{P}(\I-\Pi_{\rho+\sigma}),\tr\lr{\Lr\rho}\neq0,\Lr+\Ls\leq\I\qquad \text{ OR}\\
				\Lr\in\mathcal{P}(\I-\Pi_{\rho+\sigma}),\Ls\in\mathcal{P}(\I-\Pi_\rho),\tr\lr{\Ls\sigma}\neq0,\Lr+\Ls\leq\I\qquad\text{ OR}\\
				\Lr\in\mathcal{P}(\I-\Pi_\sigma),\Ls\in\mathcal{P}(\I-\Pi_\rho),\tr\lr{\Lr\rho}\neq0,\tr\lr{\Ls\sigma}\neq0,\Lr+\Ls\leq\I.
			\end{cases}\right\}.
		\end{align*}
		
		We write the set of measurements satisfying the first, second and third condition are $\mathcal{E}_s^1(\rho,\sigma)$, $\mathcal{E}_s^2(\rho,\sigma)$, $\mathcal{E}_s^3(\rho,\sigma)$ respectively. An arbitrary measurement should satisfy one of the three conditions, so the set of all such measurement is given by union of these three sets.

%\nocite{*}
\bibliographystyle{IEEEtran}
%\IEEEtriggeratref{5}
\bibliography{refs}

%\newpage
%\input{../Appendix/other}

\end{document}